\documentclass{article}
\usepackage[utf8]{inputenc} 
\usepackage{amssymb, amsmath, amsthm, graphicx, subfigure}
\usepackage[dvipsnames]{xcolor}
\usepackage[colorlinks=true,linktoc=page]{hyperref}
\usepackage{braket}
\usepackage{esvect}
\usepackage[margin=1in]{geometry}
\usepackage{mathtools}
\usepackage{tikz}
\usepackage{algorithm}
\usepackage{algorithmicx}
\usepackage[noend]{algpseudocode}

\mathtoolsset{showonlyrefs=true}

\theoremstyle{plain}
\newtheorem{theorem}{Theorem}[section]
\newtheorem{lemma}[theorem]{Lemma}
\newtheorem{corollary}[theorem]{Corollary}

\newtheorem{claim}[theorem]{Claim}
\newtheorem{observation}[theorem]{Observation}

\theoremstyle{definition}
\newtheorem{definition}[theorem]{Definition}

\theoremstyle{remark}
\newtheorem{remark}[theorem]{Remark}
\newtheorem{example}[theorem]{Example}

\newcommand{\CF}{\mathrm{CF}}
\newcommand{\lcf}{\leq_\CF}
\newcommand{\cost}{\mathtt{cost}}

\newcommand{\CSP}{\textsc{CSP}}
\newcommand{\ListCSP}{\textsc{ListCSP}}
\newcommand{\MinCostCSP}{\textsc{MinCostCSP}}

\newcommand{\MaxCSP}{\textsc{MaxCSP}}
\newcommand{\MinCSP}{\textsc{MinCSP}}

\newcommand{\MinUnCut}{\textsc{Min~UnCut}}

\newcommand{\LP}{\mathrm{LP}}
\newcommand{\Opt}{\mathrm{Opt}}
\newcommand{\Pol}{\mathsf{Pol}}

\newcommand{\eq}{\mathrm{eq}}

\hypersetup{colorlinks,
    linkcolor=blue,
    citecolor=ForestGreen,  
    urlcolor=Mahogany,
}

\title{On the Constant-Factor Approximability of Minimum Cost Constraint Satisfaction Problems}
\author{Ian DeHaan\thanks{University of Michigan. Email: \texttt{idehaan@umich.edu} } \and Neng Huang\thanks{University of Michigan. Email: \texttt{nengh@umich.edu}} \and Euiwoong Lee\thanks{University of Michigan. Supported in part by NSF grant CCF-2236669. Email: \texttt{euiwoong@umich.edu} }}

\begin{document}

\maketitle

\begin{abstract}
    We study \emph{minimum cost constraint satisfaction problems} (\MinCostCSP) through the algebraic lens. We show that for any constraint language $\Gamma$ which has the \emph{dual discriminator} operation as a polymorphism, there exists a $|D|$-approximation algorithm for $\MinCostCSP(\Gamma)$ where $D$ is the domain. Complementing our algorithmic result, we show that any constraint language $\Gamma$ where $\MinCostCSP(\Gamma)$ admits a constant-factor approximation must have a \emph{near-unanimity} (NU) polymorphism unless P = NP, extending a similar result by Dalmau et al. on MinCSPs. These results imply a dichotomy of constant-factor approximability for constraint languages that contain all permutation relations (a natural generalization for Boolean CSPs that allow variable negation): either $\MinCostCSP(\Gamma)$ has an NU polymorphism and is $|D|$-approximable, or it does not have any NU polymorphism and is NP-hard to approximate within any constant factor. Finally, we present a constraint language which has a majority polymorphism, but is nonetheless NP-hard to approximate within any constant factor assuming the Unique Games Conjecture, showing that the condition of having an NU polymorphism is in general not sufficient unless UGC fails.
\end{abstract}

\section{Introduction}

Constraint satisfaction problems (CSPs) are a central topic of study in theoretical computer science. In an instance of a CSP, we are given a finite set of variables taking values in a finite domain and a finite set of constraints on these variables, and our goal is to find an assignment to the variables so that all constraints are satisfied. CSPs provide a very expressive framework that encompasses many natural combinatorial problems, including satisfiability problems, graph coloring, and solving linear systems. CSPs in their full generality are NP-hard, and therefore it is natural to consider restrictions which lead to interesting tractable subclasses of CSPs. One very influential type of restrictions is to restrict the \emph{constraint language}, that is, to restrict the set of relations that can be used as constraints. In this line of work, the ultimate goal is to obtain a dichotomy, if it exists, which characterizes the boundary between tractable and NP-hard constraint languages. The first such result, obtained by Schaffer~\cite{schaefer1978complexity}, gave a complete classification of tractable Boolean CSPs. In their landmark paper~\cite{feder1998computational}, Feder and Vardi conjectured that a dichotomy exists for CSPs over general domains. This conjecture, known as the CSP Dichotomy Conjecture, led to a series of fruitful work culminating in the proof of the conjecture obtained independently by Bulatov~\cite{bulatov2017dichotomy} and by Zhuk~\cite{zhuk2020proof}. 

Compared to the standard decision variant of CSPs, there are many natural optimization CSP variants whose complexity landscape is less understood. The most well-studied optimization variant is arguably the maximum constraint satisfaction problem (\MaxCSP), where the objective is to find an assignment that maximizes the number of satisfied constraints. Interesting examples in this class include maximum cut and maximum satisfiability problems. For $\MaxCSP$s, Raghavendra showed that the optimal approximation ratio\footnote{For a maximization (resp. minimization) problem, an approximation algorithm achieves an approximation ratio of $\alpha$, if on an input instance with global optimal $\Opt$, it produces a solution whose objective value is at least $\alpha \cdot \Opt$ (resp. at most $\alpha \cdot \Opt$).} can be obtained by solving and rounding a generic \emph{semidefinite programming} relaxation of the problem~\cite{raghavendra2008optimal} (note that a constant approximation ratio for $\MaxCSP$ can be trivially obtained by uniform random assignment), assuming that the Unique Games Conjecture (UGC)~\cite{khot2002power} holds. However, the exact approximation ratio is not explicit in Raghavendra's result, and the ratios for many interesting $\MaxCSP$s are still open (see e.g., \cite{brakensiek2021mysteries, brakensiek2023separating}). For $\MinCSP$s, the objective is to minimize the number of unsatisfied constraints. $\MinCSP$ can be much harder than the corresponding $\MaxCSP$ in terms of the approximation ratio. In particular, it is at least as hard as the decision problem since an approximation algorithm has to satisfy every constraint when the instance is satisfiable. Valued CSPs generalize $\MinCSP$ by replacing 0-1 constraints with valued constraints, so that for any constraint different partial assignments can incur different costs. Thapper and \v{Z}ivn\'{y} obtained a complexity dichotomy for the task of exact minimization for finite-valued CSPs~\cite{thapper2016complexity}, but the approximability question for this problem is still poorly understood. Ene et al. showed that under some mild technical assumption, there is a generic \emph{linear programming} relaxation for finite-valued CSPs that is optimal for constant-factor approximation, unless UGC fails~\cite{ene2015local}. But unlike for $\MaxCSP$s, it is unknown how to round this linear program. Dalmau~et~al. gave some algebraic conditions which indicate where the boundary of constant-factor approximability (or the lack thereof) for valued CSPs may lie~\cite{dalmau2018towards}, but a full characterization of constant-factor approximability is still unresolved.

In this work, we consider an optimization CSP variant called \emph{minimum cost} CSP (\MinCostCSP). In this variant, assigning any value to a variable comes with a cost, and the cost is a function of the variable-value pair. Our goal is to find a satisfying assignment that minimizes the total cost. \MinCostCSP\ can be seen as a mixed variant between decision and optimization problems, in that we are still required to find a satisfying assignment. It can also be thought of as a special case of valued CSP, where we have some unary constraints representing the variable costs and all other constraints incur 0 cost if satisfied or infinite cost otherwise. \MinCostCSP\ is a very natural CSP variant which avoids the full generality of valued CSPs, yet still includes many interesting problems, such as graph and hypergraph vertex cover, min-ones CSP~\cite{khanna2001approximability} and minimum solution CSP~\cite{jonsson2008introduction}. 

We study the approximability, and in particular constant-factor approximability of \MinCostCSP. Like many aforementioned results, our study is based on the universal-algebraic approach (see e.g.,~\cite{krokhin_et_al:DFU.Vol7.15301.233} for a survey on this approach applied to the exact optimization of valued CSPs), where we investigate the algebraic structure of any constraint language via its \emph{polymorphisms}, which can be thought of certain high-dimensional symmetry that exists in the space of satisfying assignments. More specifically, we seek to algorithmically exploit the existence of desirable polymorphisms or show hardness results based on the lack thereof.

\paragraph{Our contribution}

We obtain constant-factor approximability for constraint languages that have the \emph{dual discriminator} operation as a polymorphism. These constraint languages can be thought of as generalization of 2-SAT to arbitrary finite domains. We give two algorithms for this class of problems, one using a greedy approach and the other based on a natural linear programming relaxation of the problem. Both algorithms crucially use the consistency notion called \emph{(2,3)-minimality}~\cite{barto2014collapse}.

\begin{theorem}\label{thm:intro_algo}
    Let $\Gamma$ be a constraint language over some domain $D$ that has the dual discriminator operation as its polymorphism. Then $\MinCostCSP(\Gamma)$ can be $|D|$-approximated in polynomial time.
\end{theorem}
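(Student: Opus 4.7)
The plan is to preprocess with $(2,3)$-minimality, solve a natural LP relaxation, and round using the structure imposed by the dual-discriminator polymorphism.

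First I would enforce $(2,3)$-minimality on the input by the usual polynomial-time propagation; this preserves the set of satisfying assignments (and hence $\Opt$), and ensures that every pair and every triple of variables has consistent projections that are themselves still preserved by the dual discriminator. Next I would write the LP that has indicator variables $x_{v,a} \in [0,1]$ for every CSP variable $v$ and value $a \in D$, together with pairwise marginal variables supported on the $(2,3)$-minimal binary projections, subject to $\sum_a x_{v,a} = 1$ and the standard consistency relations between the pairwise and unary variables. The objective is $\sum_{v,a} c(v,a)\,x_{v,a}$; since the LP is a relaxation, its optimum is at most $\Opt$.

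For the rounding, I would define $S_v := \{a \in D : x^*_{v,a} \geq 1/|D|\}$ for each variable $v$, which is nonempty by pigeonhole, and then compute any assignment $\{a_v\}_v$ satisfying every constraint with $a_v \in S_v$ for all $v$. The approximation bound is immediate: for any such $a_v$,
\[
c(v, a_v) \;\leq\; |D|\cdot x^*_{v, a_v}\, c(v, a_v) \;\leq\; |D| \sum_{a \in D} x^*_{v, a}\, c(v, a),
\]
so summing over $v$ yields $\sum_v c(v, a_v) \leq |D| \cdot \LP \leq |D| \cdot \Opt$.

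The main obstacle---and the technical heart of the argument---will be to show that the restricted sub-CSP with domain $S_v$ at each variable admits a satisfying assignment computable in polynomial time. I plan to use three ingredients. First, the dual discriminator is idempotent, so restricting the domain to any subset preserves it as a polymorphism, and the restricted sub-CSP is again over a dual-discriminator language. Second, the LP marginals together with the structural classification of binary relations preserved by the dual discriminator (essentially conjunctions of unary restrictions and graphs of partial bijections between subdomains) should give a short combinatorial argument that every pairwise projection of the sub-CSP on $S_u \times S_v$ is nonempty. Third, it is classical that for any constraint language with a majority polymorphism (in particular the dual discriminator), $(2,3)$-minimality implies global consistency, so after re-enforcing $(2,3)$-minimality on the restricted instance one can efficiently extract a satisfying assignment. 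Combining these ingredients gives the desired $|D|$-approximation.
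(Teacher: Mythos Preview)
Your approach is essentially the paper's LP-based algorithm (Section~\ref{subsec:LP}): preprocess to a $(2,3)$-minimal binary instance, solve the basic LP, threshold each domain at $1/|D|$, and output any satisfying assignment of the restricted instance. Your cost bound is verbatim the paper's.

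There is, however, a genuine gap in your plan for the ``technical heart.'' Your second ingredient (pairwise nonemptiness of $R_{u,v}\cap(S_u\times S_v)$) followed by your third (re-enforce $(2,3)$-minimality and then extract a solution) does not work as stated: re-enforcing $(2,3)$-minimality can, in principle, shrink the $S_v$'s further, and you give no reason why the result remains \emph{nontrivial}. Nontriviality is precisely what you need in order to invoke the bounded-width theorem. The paper closes this gap by proving the stronger statement that the thresholded instance is \emph{already} $(2,3)$-minimal, so no further propagation occurs and nontriviality follows from pigeonhole. That proof uses the full $0/1/\text{all}$ classification of binary relations preserved by the dual discriminator---not just products and graphs of bijections, but also the ``cross'' type $(\{u\}\times Q)\cup(P\times\{v\})$, which your description omits---together with the LP marginal constraints to verify condition~(c): for any $(a,b)\in R'_{u,v}$ and any third variable $w$, either both of $\{a\}\times R_w\subseteq R_{u,w}$ and $\{b\}\times R_w\subseteq R_{v,w}$ hold (so any $c\in S_w$ works), or one of them forces a unique $c$, whose LP mass is then at least $\min(p_{u,a},p_{v,b})\ge 1/|D|$, placing $c\in S_w$.

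In short: right algorithm, right cost analysis, but you must prove $(2,3)$-minimality of the thresholded instance directly rather than appeal to re-enforcement, and for that you need the complete structural lemma on dual-discriminator binary relations.
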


Complementing our algorithmic result, we obtain the following hardness condition which says that constant-factor approximation is NP-hard for any constraint language which does not have a \emph{near-unanimity} (NU) polymorphism. 

\begin{theorem}\label{thm:intro_necessary}
    Let $\Gamma$ be a constraint language such that $\MinCostCSP(\Gamma)$ has a constant-factor approximation, then $\Gamma$ has a NU polymorphism, unless P = NP.
\end{theorem}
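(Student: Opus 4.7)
I would prove the contrapositive: assuming $\Gamma$ has no NU polymorphism of any arity, I would show that $\MinCostCSP(\Gamma)$ admits no polynomial-time constant-factor approximation, adapting the approach used by Dalmau et al.\ for $\MinCSP$.

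The first step is algebraic. A classical characterization in the universal-algebraic theory of CSPs says that $\Gamma$ has an NU polymorphism of some arity iff it has \emph{bounded strict width}, equivalently, iff every pp-definable relation is determined by its low-dimensional projections. Without any NU polymorphism, $\Gamma$ pp-defines a family of ``hard'' relations of arbitrarily large arity whose satisfiability is inherently non-local---these will serve as the gadget relations. I would begin by extracting this family from $\Gamma$ in the same way that the Dalmau et al.\ argument does for $\MinCSP$.

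The second step is the hardness reduction, and this is where the work specific to $\MinCostCSP$ lies. I would reduce from a problem known to be NP-hard to approximate within any constant factor---for instance, Set Cover, or a gap version of Label Cover---to $\MinCostCSP(\Gamma)$, using the pp-definable hard relations as gadgets. Roughly, each ``set'' in the source instance corresponds to a block of auxiliary $\MinCostCSP$ variables that can be assigned cheaply, and each ``element'' to a constraint forcing at least one containing set-block to be chosen, the latter enforced through the hard relations. A constant-factor approximation of $\MinCostCSP(\Gamma)$ would then yield a constant-factor approximation to the source problem, contradicting $\mathsf{P} \neq \mathsf{NP}$.

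The main obstacle is carrying out this reduction within the restrictive structure of $\MinCostCSP$: every constraint must be fully satisfied, so one cannot use the ``soft-constraint violation'' gadgets available for $\MinCSP$. All hardness must be channeled through the unary cost functions combined with the pp-definable relations. The technical heart of the argument is to design an inner gadget in which (i) satisfying assignments exist, (ii) the minimum-cost satisfying assignment encodes an optimal solution to the source problem, and (iii) the gap in the source problem survives into the $\MinCostCSP$ objective. The algebraic characterization guarantees gadgets rich enough to encode hardness in principle, but their precise quantitative behavior on the cost spectrum must be verified by a careful combinatorial argument.
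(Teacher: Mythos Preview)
Your high-level strategy is right, and you correctly identify the key difficulty that all hardness must be channeled through unary costs. But the proposal glosses over the structure that makes the reduction go through, and in one place it is actually wrong.

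The paper's proof does \emph{not} extract a single family of ``non-local'' relations directly from the absence of an NU polymorphism. It makes an essential case split on whether $\Gamma$ has bounded width. If $\Gamma$ has unbounded width, then by the Barto--Kozik characterization it pp-interprets (in the first power, without equality) the affine relations $\Gamma(G)$ over some nontrivial abelian group, and hardness comes from the Nearest Codeword problem. If $\Gamma$ has bounded width but no NU polymorphism, then a lemma of Dalmau et al.\ produces, for every $k$, a $k$-ary pp-definable relation $R$ and elements $a,b$ with $R\cap\{a,b\}^k=\{a,b\}^k\setminus\{(a,\dots,a)\}$; this is precisely a $k$-uniform hypergraph vertex cover constraint on the two-element subdomain $\{a,b\}$, and the $(k-1-\epsilon)$-hardness of hypergraph vertex cover finishes the argument. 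The bounded-width hypothesis is genuinely needed for this lemma, so you cannot skip the case split.

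Two further points. First, your proposed sources (Set Cover, Label Cover) are not what is used: Set Cover has non-uniform set sizes, which do not match the fixed-arity relations a finite $\Gamma$ pp-defines, and Label Cover is a two-prover gap problem with no obvious encoding via unary costs alone. The reductions that work are exactly Nearest Codeword and $k$-uniform Hypergraph Vertex Cover, because both are already $\MinCostCSP$s over a subdomain. Second, for $\MinCostCSP$ one must check that the pp-interpretation is in the \emph{first power}: with $n\ge 2$ there is no natural way to transport unary costs, so the reduction of Lemma~4.1 in the paper breaks down. Both halves of the case split deliver first-power interpretations (or pp-definitions of irreducible relations), and this is what lets the gadget reduction carry the cost function.
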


Near unanimity operations are well-studied in universal algebra (see e.g., ~\cite{baker1975polynomial}), and they have also appeared in the study of CSPs~\cite{feder1998computational,dalmau2018towards,dalmau2019robust}. In particular, Dalmau et al. showed that for valued CSPs the existence of NU polymorphisms is also a necessary condition for constant-factor approximability~\cite{dalmau2018towards}. It can be verified that for \MinCostCSP\ over the Boolean domain, the condition of having an NU polymorphism is not only necessary but also sufficient for constant-factor approximability (see Remark~\ref{remark:boolean_nu} for more discussion). However, as soon as the domain has at least 3 elements, there exist constraint languages which have NU polymorphisms yet does not admit constant-factor approximation, unless the UGC fails. We present such an example in Section~\ref{sec:counter_example}. 

Finally, as an application of our hardness and algorithmic results, we fully classify the constant-factor approximability of constraint languages that include all \emph{permutation relations}, showing that the existence of an NU polymorphism is also a sufficient condition for this class. These languages can be thought of as a natural generalization of Boolean CSPs where we are allowed to apply constraints to negated variables. Our classification relies on the classification of \emph{homogeneous algebras} by Marchenkov~\cite{marchenkov1982homogeneous}.

\begin{theorem}\label{thm:intro_perm}
    Let $\Gamma$ be a constraint language over some domain $D$ that contains all permutation relations over $D$. Then $\MinCostCSP(\Gamma)$ can be $|D|$-approximated if $\Gamma$ has an NU polymorphism, and is NP-hard to approximate within any constant factor otherwise.
\end{theorem}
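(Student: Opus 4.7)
The hardness direction is an immediate corollary of Theorem~\ref{thm:intro_necessary}: without an NU polymorphism, $\MinCostCSP(\Gamma)$ is already NP-hard to approximate within any constant factor, regardless of whether $\Gamma$ contains the permutation relations. All the work lies in the algorithmic direction, and my plan is to reduce it to Theorem~\ref{thm:intro_algo} by showing that the two hypotheses together force the dual discriminator to be a polymorphism of $\Gamma$.

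The first step is the standard algebraic observation that $f: D^n \to D$ preserves the binary relation $\{(x, \pi(x)) : x \in D\}$ exactly when $f$ commutes with the permutation $\pi$, i.e., $f(\pi(x_1), \dots, \pi(x_n)) = \pi(f(x_1, \dots, x_n))$. Since $\Gamma$ contains the graph of every $\pi \in S_D$, every polymorphism of $\Gamma$ must commute with every permutation of $D$; such operations are called \emph{homogeneous}, so $\Pol(\Gamma)$ lies inside the clone of all homogeneous operations on $D$ and $(D, \Pol(\Gamma))$ is a \emph{homogeneous algebra}.

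The second step invokes Marchenkov's classification of homogeneous algebras~\cite{marchenkov1982homogeneous}, which gives an explicit enumeration of all clones of homogeneous operations on a finite set. From this list the goal is to extract the statement that every homogeneous clone containing any NU operation must also contain the ternary dual discriminator. For $|D| = 2$ this is immediate, since the unique ternary NU on $\{0,1\}$ is the majority and coincides with the dual discriminator. For $|D| \geq 3$ the homogeneous clones form a short, explicit lattice, and a case analysis identifies those admitting an NU operation and verifies that each of them contains the dual discriminator. Once this is established, applying Theorem~\ref{thm:intro_algo} to the resulting language immediately yields the $|D|$-approximation.

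The main obstacle I anticipate is this last extraction: translating Marchenkov's algebraic classification into the clean statement ``homogeneous $+$ NU $\Rightarrow$ dual discriminator.'' The subtlety is that one must rule out the possibility of a homogeneous clone that contains a $k$-ary NU operation for some large $k$ but is nevertheless too sparse to contain the ternary dual discriminator. Handling this may require a direct generation argument, showing that any homogeneous NU operation, together with the projections, already generates the dual discriminator, rather than a purely black-box appeal to the classification.
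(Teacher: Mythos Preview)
Your plan is correct and lands in the same place as the paper, but the paper organizes the argument differently, and in a way that neatly sidesteps the obstacle you flag. Rather than extracting ``homogeneous $+$ NU $\Rightarrow$ dual discriminator'' from the classification, the paper first gives a short direct argument that any homogeneous \emph{majority} operation is, up to permuting inputs, the dual discriminator: if $f$ is a majority polymorphism of all permutation relations and $f(x_1,x_2,x_3)=x_i$, $f(y_1,y_2,y_3)=y_j$ with $i\neq j$ on two pairwise-distinct triples, then the permutation sending $x_k\mapsto y_k$ witnesses a permutation relation that $f$ fails to preserve; hence some fixed coordinate is returned on all rainbow triples, and reordering inputs yields $d$. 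This handles the algorithmic side with no appeal to Marchenkov. The classification is then used only on the hardness side, to show that if there is no (conservative) majority then $\Pol(\Gamma)$ is one of the listed clones $E_i^j$; those containing $r_n$ are excluded by conservativity, the near-projection clones $\langle l_i\rangle$ and $\mathcal{J}$ give NP-complete $\CSP(\Gamma)$ (hence no NU and no constant-factor approximation), and the remaining case $\langle s\rangle$ has no NU by an observation of Dalmau, so Theorem~\ref{thm:intro_necessary} applies. In effect the paper proves ``no majority $\Rightarrow$ no NU'' rather than your contrapositive ``NU $\Rightarrow$ dual discriminator''; both require walking through the classification, but the paper's ordering lets the ternary case be handled by the clean two-line argument above instead of the $k$-ary generation argument you were bracing for.
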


\paragraph{Related work}
The approximability of \MaxCSP, \MinCSP, as well as $\MinCostCSP$ over the Boolean domain was fully classified by Khanna et al.~\cite{khanna2001approximability}. In particular, for $\MinCostCSP$ they obtained the following complete classification: $\MinCostCSP(\Gamma)$ can be solved to optimality in polynomial time if $\Gamma$  is ``width-2 affine'', that is, $\Gamma$ can be expressed as a conjunction of linear equations over $\mathbb{F}_2$ where each equation has at most 2 variables; the problem can be approximated within a constant factor in polynomial time if $\Gamma$ can be expressed as a 2CNF-formula, or if $\Gamma$ is IHB-B$+$ (expressible as a CNF formula where each clause is of the form $x_1 \vee \cdots \vee x_k$, $\neg x_1 \vee x_2$, or $\neg x_1$ where $k \leq K$ for some $K$ depending on $\Gamma$), or if $\Gamma$ is IHB-B$-$ (defined analogously to IHB-B$+$ with every literal replaced by its negation). Otherwise, $\MinCostCSP(\Gamma)$ is NP-hard to approximate within any constant factor. (See Remark~\ref{remark:boolean_nu} for a more detailed discussion in the context of our results.)

Over the general domain, a dichotomy for solving $\MinCostCSP(\Gamma)$ optimally was obtained by Takhanov~\cite{takhanov2010dichotomy}. Takhanov's characterization is based on local algebraic conditions satisfied by polymorphisms of $\Gamma$.

Kumar et al. showed that for a large class of covering and packing problems that can be expressed as $\MinCostCSP$ (they called it ``Strict-CSP'') over the general domain, a generic linear programming relaxation gives the optimal approximation ratio achievable in polynomial time, assuming the Unique Games Conjecture~\cite{kumar2011lp}. Their result generalizes the earlier UGC-based hardness results for vertex cover~\cite{khot2008vertex} and the $k$-uniform hypergraph vertex cover problems~\cite{bansal2010inapproximability}. 

An important special case for $\MinCostCSP(\Gamma)$ where $\Gamma$ consists of one single binary relation has been studied in the literature under the name ``min-cost graph homomorphism'' (in the case where the binary relation is symmetric) or ``min-cost di-graph homomorphism'' (in the general binary case). Dichotomy results for optimally solving these problems are known based on graph-theoretic properties~\cite{gutin2008dichotomy, hell2012dichotomy}.  Hell et al. gave a similar dichotomy for constant-factor approximability for the min-cost graph homomorphism problem in the case where the graph (equivalently, the binary relation) is reflexive (every vertex has a self-loop) or irreflexive (no vertex has a self-loop)~\cite{hell2012approximation}.\footnote{We note that an ICALP'19 paper~\cite{rafiey2019toward} claimed that the following dichotomy for constant-factor approximability holds over all (undirected) graphs: either a graph $G$ has a conservative majority polymorphism and is constant-factor approximable, or it does not and is NP-hard to approximate within any constant factor. Our Theorem~\ref{thm:counterexample} contradicts this claim under the Unique Games Conjecture and $P \neq NP$. }

Another \CSP\ variant closely related to \MinCostCSP\, is  \ListCSP\ which can be thought of as a special case of \MinCostCSP\ where the costs take values in $\{0, \infty\}$. Bulatov obtained a complete classification for this problem (under the name ``conservative CSP'') based on the algebraic approach~\cite{bulatov2003tractable} (see also~\cite{barto2011dichotomy, bulatov2016conservative}).

\paragraph{Organization of the paper}
The rest of the paper is organized as follows. In Section~\ref{sec:prelim}, we formally define the problems and introduce some algebraic concepts that are needed throughout the paper. In Section~\ref{sec:algo_dd}, we present our main algorithmic results, proving Theorem~\ref{thm:intro_algo}. In Section~\ref{sec:nu_necessary}, we prove some algebraic conditions sufficient for reductions between $\MinCostCSP$s, and use them to prove Theorem~\ref{thm:intro_necessary}. Finally, in Section~\ref{sec:application}, we use our results to give a dichotomy of constant-factor approximability for $\MinCostCSP$s that contain all permutation relations, proving Theorem~\ref{thm:intro_perm}.

\section{Preliminaries}\label{sec:prelim}
\subsection{CSP, ListCSP, and MinCostCSP}
Let $D$ be a finite set. A \emph{relation} over $D$ is a subset $R \subseteq D^k$ for some positive integer $k$, where $D$ is called the \emph{domain} of $R$ and $k$ is called the \emph{arity} of $R$. A set of relations $\Gamma$ over the same domain $D$ is called a \emph{constraint language}. Throughout this paper, any constraint language we consider will be assumed to contain finitely many relations whose common domain will be denoted by $D$. The elements in $D$ will be referred to as \emph{labels}.
\begin{definition}
    Let $\Gamma$ be a constraint language. An instance of $\CSP(\Gamma)$ is a tuple $I = (V, \mathcal{C})$, where $V$ is a finite set of variables and $\mathcal{C}$ a finite set of constraints. Each constraint $C \in \mathcal{C}$ is of the form $(R, S)$, where $R$ is a relation in $\Gamma$ with some arity $k$ and $S \in V^k$ a $k$-tuple of variables.
    An \emph{assignment} for $I$ is a function $A: V \to D$. We say that $A$ satisfies a constraint $C = (R, (x_1, \ldots, x_k))$ if $(A(x_1), \ldots, A(x_k)) \in R$, and we say that $A$ is a satisfying assignment for $I = (V, \mathcal{C})$ if $A$ satisfies every constraint in $\mathcal{C}$. 
\end{definition}

One closely related variant of $\CSP$ is the following $\ListCSP$ problem. This problem has also been referred to as \emph{conservative} $\CSP$ in the literature.

\begin{definition}
    Let $\Gamma$ be a constraint language. An instance of $\ListCSP(\Gamma)$ is a tuple $I = (V, \mathcal{C}, \{R_x\}_{x \in V})$, where $V$ is a finite set of variables and $\mathcal{C}$ a finite set of constraints, as in the definition of $\CSP(\Gamma)$. In addition to $V$ and $\mathcal{C}$, we are also given a subset $R_x \subseteq D$ for every variable $x \in V$. We say that $A$ is a satisfying assignment for $I = (V, \mathcal{C}, \{R_x\}_{x \in V})$ if $A$ satisfies every constraint in $\mathcal{C}$ and $A(x) \in R_x$ for every $x \in V$. 
\end{definition}

$\ListCSP$ may be equivalently viewed as the ordinary \CSP\ where all unary constraints are allowed in addition. In this paper, we will mainly focus on $\MinCostCSP$, which further generalizes $\ListCSP$.

\begin{definition}
    Let $\Gamma$ be a constraint language. An instance of $\MinCostCSP(\Gamma)$ is a tuple $I = (V, \mathcal{C}, \cost)$, where $V$ and $\mathcal{C}$ are the same as they are in the definition of $\CSP(\Gamma)$, and we are also given a function $\cost: V \times D \to \mathbb{R}^{\geq 0} \cup \{+\infty\}$. For any assignment $A$ for $I$, the cost of $A$ is defined to be $\cost_I(A) = \sum_{x \in V}\cost(x, A(x))$. The goal is find a satisfying assignment with the minimum cost.
\end{definition}

Given a $\MinCostCSP$ instance $I$, the optimum cost is denoted by $\Opt(I) := \min_A \cost_I(A)$, where $A$ ranges over all satisfying assignments for $I$. Although we allowed infinite costs in the definition of $\MinCostCSP$, this is not essential in the context of approximability. In fact, we may simulate infinite cost by setting costs to be prohibitively high, say larger than $|V|$ times the maximum of any other finite cost, so that no approximation algorithm will pick the label. The inclusion of the infinite cost conveniently allows us to assume without loss of generality that $\Gamma$ contains all unary relations.

\begin{observation}\label{obs:mincost_unary}
    Let $\Gamma$ be a constraint language over some domain $D$ and $\Gamma' = \Gamma \cup \{S \subseteq D \mid S \neq \varnothing\}$. Then $\MinCostCSP(\Gamma)$ and $\MinCostCSP(\Gamma')$ are equivalent. Namely, any instance of $\MinCostCSP(\Gamma)$ can be solved as an instance of $\MinCostCSP(\Gamma')$ and vice versa.
\end{observation}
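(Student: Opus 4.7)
The plan is to prove the two directions of the equivalence separately. The direction $\MinCostCSP(\Gamma) \to \MinCostCSP(\Gamma')$ is immediate: because $\Gamma \subseteq \Gamma'$, every instance of $\MinCostCSP(\Gamma)$ is already an instance of $\MinCostCSP(\Gamma')$, with identical variables, constraints, and cost function. The substantive direction is $\MinCostCSP(\Gamma') \to \MinCostCSP(\Gamma)$, where the idea is to absorb all unary constraints into the cost function using the permitted value $+\infty$.

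More concretely, given an instance $I = (V, \mathcal{C}, \cost)$ of $\MinCostCSP(\Gamma')$, I would first split $\mathcal{C}$ into its unary constraints (of the form $(S, (x))$ with $S \subseteq D$, $S \neq \varnothing$) and the remaining constraints $\mathcal{C}'$, the latter only using relations from $\Gamma$. For each $x \in V$, let $T_x := \bigcap \{S : (S,(x)) \in \mathcal{C}\}$, with the convention $T_x = D$ if no unary constraint mentions $x$. Then define a new cost function by
\[
\cost'(x, d) := \begin{cases} \cost(x, d) & \text{if } d \in T_x, \\ +\infty & \text{if } d \notin T_x, \end{cases}
\]
and form the reduced instance $I' := (V, \mathcal{C}', \cost')$, which is now a valid instance of $\MinCostCSP(\Gamma)$.

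The final step is to verify that the reduction preserves satisfying assignments and their costs. If $A$ satisfies $I$, then $A(x) \in T_x$ for every $x$ (so $\cost'(x, A(x)) = \cost(x, A(x))$) and $A$ also satisfies every constraint in $\mathcal{C}' \subseteq \mathcal{C}$, hence $A$ is a satisfying assignment for $I'$ with $\cost_{I'}(A) = \cost_I(A)$. Conversely, any $A$ with $\cost_{I'}(A) < +\infty$ must place every $A(x)$ in $T_x$, which means $A$ meets all the unary constraints of $I$; combined with the non-unary constraints of $\mathcal{C}'$ it follows that $A$ satisfies $I$ with the same cost. This yields a cost-preserving bijection between the finite-cost assignments of the two instances (and infeasibility in one corresponds to $\Opt = +\infty$ in the other), establishing the equivalence. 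There is no real obstacle here; the only care needed is to handle the edge case where some $T_x$ is empty and to remember that the reduction can be run in both directions, since the forward direction is trivial inclusion.
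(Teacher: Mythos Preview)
Your proposal is correct and takes essentially the same approach as the paper: the trivial direction uses $\Gamma \subseteq \Gamma'$, and the substantive direction absorbs each unary constraint $S(x)$ into the cost function by setting $\cost(x,a)=+\infty$ for $a\notin S$. Your version is simply more detailed (intersecting all unary constraints on a variable into $T_x$ and explicitly checking the cost-preserving correspondence, including the infeasible/$T_x=\varnothing$ edge case), whereas the paper states the idea in two sentences.
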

\begin{proof}
    One direction is clear since $\Gamma \subseteq \Gamma'$. For the other direction, for any unary constraint $S(x)$ where $S \subseteq D$ (the constraint which says the label of $x$ must be in $S$), we simply set $\cost(x, a) = \infty$ for any $a \not\in S$. 
\end{proof}

Observation~\ref{obs:mincost_unary} implies that $\MinCostCSP(\Gamma)$ contains $\ListCSP(\Gamma)$ as a special case. In particular, if $\ListCSP(\Gamma)$ is NP-hard, then for $\MinCostCSP(\Gamma)$ finding any satisfying assignment regardless of cost is NP-hard as well.

\subsection{Polymorphisms}

\begin{definition}
    Let $f: D^k \to D$ be a $k$-ary operation on the domain $D$, and $R$ some $m$-ary relation over the same domain. We say that $f$ preserves $R$ (or $f$ is a \emph{polymorphism} of $R$) if for every $(a_{1,1}, \ldots, a_{1,m}), \ldots, (a_{k,1}, \ldots, a_{k,m}) \in R$ we have $(b_1, \ldots, b_m) \in R$, where $b_i = f(a_{1, i}, \ldots, a_{k, i})$ for every $i \in [m]$. Given a constraint language $\Gamma$ we say that $f$ preserves $\Gamma$ (or $f$ is a \emph{polymorphism} of $\Gamma$) if $f$ preserves every $R \in \Gamma$. We use $\Pol(R)$ to denote the set of all polymorphisms of $R$, and (abusing notation slightly) $\Pol(\Gamma) = \bigcap_{R \in \Gamma} \Pol(R)$ to denote the set of all polymorphisms of $\Gamma$.
\end{definition}

It follows directly from the definition that for any $n \geq 1$, $i \in [n]$, the projection operation $\mathrm{proj}^n_i: D^n \to D, (x_1, \ldots, x_n) \mapsto x_i$ is a polymorphism. Also, any composition of polymorphisms is still a polymorphism. A set of operations satisfying these two properties is known as a \emph{clone}.

\begin{definition}\label{def:clone}
    A \emph{clone} over $D$ is a set of operations $\mathcal{F} \subseteq \bigcup_{n \geq 1}\{f: D^n \to D\}$ for which the following holds:
    \begin{itemize}
        \item $\mathcal{F}$ contains the projection operation $\mathrm{proj}^n_i$ for every $n \geq 1$, $i \in [n]$.
        \item For every $m$-ary $g \in \mathcal{F}$ and $n$-ary $f_1, \ldots, f_m \in \mathcal{F}$, the $n$-ary function $g'$ defined by
        \[
        g': (x_1, \ldots, x_n) \mapsto g(f_1(x_1, \ldots, x_n), \ldots, f_m(x_1, \ldots, x_n))
        \]
        is also in $\mathcal{F}$.
    \end{itemize}
\end{definition}

Clones have been studied extensively in the universal algebra community. Many complexity-theoretic classification results for CSPs depend on classification results for their corresponding family of polymorphism clones (see e.g.~\cite{barto_et_al:DFU.Vol7.15301.1} for more on polymorphism clones). 

We say that a function $f: D^k \to D$ is \emph{conservative}, if $f(x_1, \ldots, x_k) \in \{x_1, \ldots, x_k\}$ for every $x_1, \ldots, x_k \in D$. It is immediate from the definition that if $\Gamma$ is constraint language that contains all unary relations, then every $f \in \Pol(\Gamma)$ is conservative. 

\begin{definition}
    Let $f: D^k \to D$ be a $k$-ary operation for some $k \geq 3$. We say that $f$ is a \emph{near-unanimity (NU)} operation if for every $a, b \in D$, 
    \[
    f(a, a, \ldots, a, b) = f(a, a, \ldots, b, a) = \cdots = f(b, a, \ldots, a, a) = a.
    \]
    In other words, if all but one inputs are equal to some $a \in D$, then $f$ outputs $a$. If $k = 3$, then $f$ is also called a \emph{majority} operation.
\end{definition}

\begin{example}
    Let \emph{the dual discriminator operation} $d: D^3 \to D$ be defined by
    \[
    d(x_1, x_2, x_3) = \left\{\begin{array}{ll}
        a & \text{if } |\{i \mid x_i = a\}| \geq 2, \\
        x_1 & \text{otherwise.}
    \end{array}\right.
    \]
    Then $d$ is a majority operation. This operation will be featured heavily in our algorithmic results.
\end{example}

Constraint languages that are preserved by some NU operation enjoy the following nice property.

\begin{definition}
    Let $R$ be a $n$-ary relation. Let $S_{n, k} = \{(i_1, \ldots, i_k) \mid 1 \leq i_1 < \cdots < i_k \leq n\}$ be the set of $k$-tuples whose entries are in $[n]$ and in increasing order. For any $s = (i_1, \ldots, i_k) \in S_{n,k}$ and $x = (x_1, \ldots, x_n)$, let $\Pi_{s} x = (x_{i_1}, \ldots, x_{i_k})$ and $\Pi_{s}R = \{\Pi_{s} x \mid x \in R\}$. We say that $R$ is $k$-decomposable if for every $x = (x_1, \ldots, x_n)$, we have
    \[
    x \in R \quad\Leftrightarrow\quad \forall s \in S_{n, k}, \Pi_s x \in \Pi_s R.
    \] 
    We say that $\Gamma$ is $k$-decomposable if every $R \in \Gamma$ is $k$-decomposable.
\end{definition}

\begin{theorem}[Theorem 3.5 in~\cite{jeavons1998constraints}]\label{thm:nu_decompose}
    Let $\Gamma$ be a constraint language and $k \geq 2$. If $\Gamma$ is preserved by some $(k+1)$-ary NU operation, then $\Gamma$ is $k$-decomposable.
\end{theorem}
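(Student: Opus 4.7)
The forward direction is immediate from the definition of $\Pi_s R$, so the plan is to prove the reverse implication: for every $n$-ary $R \in \Gamma$ and every $x \in D^n$ satisfying $\Pi_s x \in \Pi_s R$ for all $s \in S_{n,k}$, we have $x \in R$. I would proceed by induction on $n \geq k$. The base case $n = k$ is immediate, since $S_{k,k} = \{(1, \ldots, k)\}$ and $\Pi_{(1, \ldots, k)} R = R$, so the hypothesis gives $x \in R$ directly.

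For the inductive step, let $n > k$ and let $f$ be a $(k+1)$-ary NU polymorphism of $\Gamma$. The strategy is, for each $i \in [n]$, to produce a tuple $y^i \in R$ that matches $x$ on every coordinate except possibly coordinate $i$, and then to combine $y^1, \ldots, y^{k+1}$ using $f$. To obtain $y^i$, I would apply the induction hypothesis to the $(n-1)$-ary projection $R_i := \Pi_{[n] \setminus \{i\}} R$, which is preserved by $f$ since projections of relations inherit the polymorphism. Any $k$-subset of $[n] \setminus \{i\}$ is also a $k$-subset of $[n]$, so the hypothesis on $x$ transfers to the corresponding hypothesis on $\Pi_{[n] \setminus \{i\}} x$ with respect to $R_i$. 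Induction then gives $\Pi_{[n] \setminus \{i\}} x \in R_i$, which by definition of projection means there is some $y^i \in R$ with $y^i_j = x_j$ for all $j \neq i$. Applying $f$ coordinatewise to $y^1, \ldots, y^{k+1}$ should then yield $x$ itself: at any coordinate $c > k+1$, every $y^i_c$ equals $x_c$ since $c \neq i$; at any coordinate $c \in [k+1]$, only $y^c_c$ may differ from $x_c$, so $k$ of the $k+1$ inputs equal $x_c$ and the NU axiom forces the output to equal $x_c$. Since $R$ is preserved by $f$, this places $x$ in $R$.

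The main obstacle, insofar as there is one, lies in identifying the right collection of tuples in $R$ to extract from the decomposability hypothesis and in recognizing that the NU axiom is exactly tailored to recombine them. Once one commits to the strategy of producing $y^1, \ldots, y^{k+1}$ that each miss $x$ on only one coordinate, the rest is bookkeeping; the induction on $n$ both constructs these tuples and handles the base case cleanly, so I do not anticipate a subtler difficulty.
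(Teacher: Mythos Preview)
The paper does not actually prove this theorem; it is quoted from \cite{jeavons1998constraints} without proof, so there is no argument in the paper to compare against. Your proposal is correct and is essentially the standard Baker--Pixley style argument: induct on the arity, use the induction hypothesis on each coordinate-deleted projection to obtain tuples $y^1,\ldots,y^{k+1}\in R$ that each agree with $x$ off a single coordinate, and then recombine them with the NU polymorphism. One small point worth making explicit is that your induction hypothesis must be stated for \emph{any} relation preserved by the fixed NU operation $f$, not merely for relations in $\Gamma$, since the projections $R_i=\Pi_{[n]\setminus\{i\}}R$ need not lie in $\Gamma$; you do note that $R_i$ is preserved by $f$, so this is only a phrasing issue.
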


\section{Algorithms for the dual discriminator}\label{sec:algo_dd}
In this section, we give two $|D|$-approximation algorithms for $\MinCostCSP(\Gamma)$ where $\Gamma$ is a constraint language over $D$ preserved by the dual discriminator operation, thus proving Theorem~\ref{thm:intro_algo}. Both algorithms will assume that the input instance is a satisfiable \emph{binary}, \emph{(2,3)-minimal} instance. We introduce and justify this assumption in Section~\ref{subsec:binary_23minimal}, and then present a greedy algorithm in Section~\ref{subsec:greedy} and a LP-based algorithm in Section~\ref{subsec:LP}.

\subsection{Reducing to satisfiable binary (2,3)-minimal instances}\label{subsec:binary_23minimal}
A CSP instance $I = (V, \mathcal{C})$ is \emph{binary}, if every constraint in $\mathcal{C}$ has arity at most 2. Theorem~\ref{thm:nu_decompose} allows us to assume that the input instance is binary, since it is preserved by the dual discriminator operation which is a 3-ary NU operation. For any binary instance $I$, we may also write it as a triple $I = (V, \{R_u\}_{u \in V}, \{R_{u, v}\}_{u\neq v\in V})$, where we have one unary relation $R_x$ for every $u \in V$ and one binary relation $R_{u, v}$ for every pair $(u,v) \in V$. Note that to write $I$ in this form we may take the intersection of relations with the same scope or add complete relation on some variable(s) if none exists.

\begin{definition}[See e.g., \cite{barto_et_al:DFU.Vol7.15301.1}]\label{def:2-3-minimal}
    A binary CSP instance $I = (V, \{R_u\}_{u \in V}, \{R_{u, v}\}_{u\neq v \in V})$ over domain $D$ is \emph{$(2,3)$-minimal} if 
    \begin{enumerate}
        \item[(a)] For every distinct $u, v\in V$, $R_{v,u} = \{(b, a) \mid (a, b) \in R_{u, v}\}$. 
        \item[(b)] For every distinct $u, v \in V$, $R_u = \{a \in D \mid \exists b \in D, (a, b) \in R_{u,v}\}$.
        \item[(c)] For every pairwise distinct $u,v,w \in V$ and $(a, b) \in R_{u,v}$, there exists $c \in R_w$ such that $(a, c) \in R_{u,w}$ and $(b, c) \in R_{v,w}$.
    \end{enumerate}
\end{definition}

Informally, the definition says that $I$ is (2,3)-minimal if any partial satisfying assignment (that is, a partial assignment that does not immediately falsify any constraint) to two variables can be extended to a partial satisfying assignment to three variables.
Given a binary CSP instance $I = (V, \{R_u\}_{u \in V}, \{R_{u, v}\}_{u\neq v \in V})$, we may transform it into a $(2,3)$-minimal instance using the following procedure:
\begin{center}
\begin{algorithmic}
    \Repeat
        \For{every distinct $u, v, w \in V$}
            \State $R_{u, v} \gets \{(a, b) \in R_{u,v} \mid \exists c \in D, (a, c) \in R_{u, w} \wedge (b, c) \in R_{v, w}\} $
            \State $R_{u, v} \gets \{(a, b) \in R_{u,v} \mid (b, a) \in R_{v, u}\} $
            \State $R_u \gets \{a \in D \mid \exists b \in D, (a, b) \in R_{u, v}\}$
            \State $R_v \gets \{b \in D \mid \exists a \in D, (a, b) \in R_{u, v}\}$
        \EndFor
    \Until{none of the relations change}
\end{algorithmic}
\end{center}

Clearly, when the above procedure stops, the instance must be $(2,3)$-minimal (otherwise the loop would have continued). The new instance may have constraints that are not in our original constraint language. However, since the new constraints are all obtained by pp-definitions from the original constraint language, they are preserved by the same polymorphisms (see Definition~\ref{def:pp_def} and Theorem~\ref{thm:galois}). In particular, the new instance is still preserved by the dual discriminator operation.

We say that a $(2,3)$-minimal instance is \emph{trivial} if at least one of the unary relations is empty, and it is \emph{nontrivial} otherwise. A trivial $(2,3)$-minimal instance has no satisfying assignment since no assignment can satisfy an empty relation. On the other hand, if the instance is nontrivial and its constraint language has \emph{bounded width}, then such an instance is always guaranteed to have a satisfying assignment which we can find in polynomial time.
\begin{theorem}[\cite{barto2014collapse,barto2014constraint}]\label{thm:bounded-width}
    Let $I$ be a nontrivial $(2,3)$-minimal instance whose constraint language has bounded width. Then $I$ has a satisfying assignment which can be found in polynomial time.
\end{theorem}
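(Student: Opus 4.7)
The plan is to follow the Barto--Kozik strategy of combining (2,3)-minimality enforcement with an iterative singleton-restriction loop, all justified by the algebraic characterization of bounded width. First, I would apply the (2,3)-minimality enforcement procedure described in the excerpt to the input $I$, producing a (2,3)-minimal instance $I'$ with exactly the same set of satisfying assignments. Nontriviality is preserved because the only tuples removed are ones that cannot be extended to a consistent partial assignment on any three variables, hence cannot appear in any satisfying assignment; in particular every $R_u$ in $I'$ remains nonempty.

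Next, I would solve $I'$ by a fix-and-propagate loop: pick any variable $u$, find a value $a \in R_u$ such that restricting $R_u$ to $\{a\}$ and re-enforcing (2,3)-minimality yields another nontrivial instance, commit to $u \mapsto a$, and recurse on the remaining variables. Since each iteration invokes the polynomial-time (2,3)-minimality procedure and only polynomially many iterations are needed, the total running time is polynomial, provided the inner search for a good value $a$ never fails. Once every unary relation has shrunk to a singleton, reading off the assignment gives a satisfying assignment for $I'$, hence for $I$.

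The main obstacle --- and the deep content of the theorem --- is precisely this no-dead-end guarantee: in every nontrivial (2,3)-minimal instance whose constraint language has bounded width, some variable $u$ admits a value $a \in R_u$ whose singleton-restriction does not empty any $R_v$. Here the bounded-width hypothesis is essential. By the Barto--Kozik characterization, bounded width is equivalent to the polymorphism clone generating a variety that omits tame-congruence types $\mathbf{1}$ and $\mathbf{2}$, equivalently to the existence of weak near-unanimity polymorphisms of all but finitely many arities. The proof of the no-dead-end lemma proceeds by taking a minimal counterexample, extracting a proper absorbing subuniverse $B \subsetneq R_u$ via the algebraic hypothesis, and propagating the absorption through the (2,3)-minimal graph of binary relations to show that $B$ can be substituted for $R_u$ without trivializing anything; iterating shrinks some $R_u$ to a singleton. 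For constraint languages preserved by a majority operation, such as the one of interest in the sequel, the argument simplifies dramatically: Theorem~\ref{thm:nu_decompose} gives $2$-decomposability, so checking nontriviality of the restricted instance reduces to pairwise propagation, and a short absorption argument specific to majority polymorphisms suffices in place of the full Barto--Kozik absorption machinery.
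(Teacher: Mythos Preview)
The paper does not prove this theorem at all: it is quoted as a black box from the cited works of Barto and Kozik, and the surrounding text explicitly says ``We will not define the term `bounded width' formally here \ldots\ For us it is sufficient to note that any constraint language that is preserved by an NU polymorphism has bounded width.'' So there is nothing in the paper to compare against; your proposal is really a sketch of the proof in \cite{barto2014collapse,barto2014constraint}, not of anything the present paper does.

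As a sketch of the Barto--Kozik argument, your outline is broadly on target, but two points are worth flagging. First, your opening step of re-enforcing $(2,3)$-minimality on $I$ is redundant: the hypothesis already hands you a $(2,3)$-minimal instance, so there is no need to run the enforcement procedure before the fix-and-propagate loop begins. Second, the ``no-dead-end'' guarantee you correctly identify as the crux is actually stronger in the Barto--Kozik theory than you state: one does not merely show that \emph{some} variable admits \emph{some} safe value, but rather develops an absorption-based structure theory (Prague instances, absorbing subuniverses, connectivity arguments) strong enough to show that every nontrivial $(2,3)$-minimal instance over a bounded-width language has a solution outright. The algorithmic extraction then follows by the restrict-and-reminimize loop you describe, where at each step at least one value survives because the restricted instance, being again nontrivial and $(2,3)$-minimal over the same language, must itself be satisfiable. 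Your description of the majority-operation special case via $2$-decomposability is accurate and is indeed how the paper will use the theorem downstream.
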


We will not define the term ``bounded width'' formally here (see Theorem~\ref{thm:unbounded-abelian} for a characterization). For us it is sufficient to note that any constraint language that is preserved by an NU polymorphism has bounded width~\cite{feder1998computational}, so any unsatisfiable instance will necessarily give rise to a trivial (2,3)-minimal instance. From now on we will assume that the (2,3)-minimal instance is nontrivial, and therefore satisfiable. Also, observe any satisfying assignment for the original instance will remain satisfying for the new (2,3)-minimal instance. This means any assignment we obtain on the new instance will give an approximation ratio at least as good on the original instance, and therefore we may shift our attention to this new instance instead.

We crucially use the following characterization for binary relations preserved by the dual discriminator operation. The same characterization was used by~\cite{dalmau2019robust} in the context of robust satisfiability (these relations have also been studied under the name of \emph{0/1/all relations}~\cite{cooper1994characterising}). We sketch a short proof here for completeness.

\begin{lemma}[See e.g.~\cite{barto_et_al:DFU.Vol7.15301.1, dalmau2019robust}]\label{lem:01all}
    Let $R \subset D^2$ be a binary relation. Then $R$ is preserved by the dual discriminator operation $d: D^3 \to D$ if and only if $R$ is of one of the following forms:
    \begin{itemize}
        \item $R = P \times Q$ for some $P, Q \subseteq D$.
        \item $R = (\{u\} \times Q) \cup (P \times \{v\})$ for some $u \in P \subseteq D$ and $v \in Q \subseteq D$.
        \item $R = \{(u, \pi(u)) \mid u \in P\}$ for some $P,Q \subseteq D$ and bijective $\pi: P \to Q$.
    \end{itemize}
\end{lemma}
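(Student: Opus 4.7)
My plan is to handle the two directions separately.

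For the ``if'' direction, I would verify each form is closed under coordinatewise application of $d$. The product case $P \times Q$ follows immediately from conservativity of $d$. For the bijection form $R = \{(x, \pi(x)) : x \in P\}$, given three tuples $(x_i, \pi(x_i))$ in $R$, any equality among the $x_i$ is mirrored by an equality among the $\pi(x_i)$ and vice versa (since $\pi$ is injective), so $d$ picks consistent values in both coordinates -- and when all $x_i$ are distinct, so are all $\pi(x_i)$, and $d$ falls back to the first input in both coordinates. For the union form $(\{u\} \times Q) \cup (P \times \{v\})$, I would classify the three input tuples by whether each lies in $\{u\} \times Q$ (``row type'') or $P \times \{v\}$ (``column type'') and check each combinatorial subcase; in each, either $u$ appears in at least two first coordinates (putting the output in $\{u\} \times Q$) or $v$ appears in at least two second coordinates (putting it in $P \times \{v\}$), with conservativity of $d$ ensuring the other coordinate still lies in $P$ or $Q$.

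For the ``only if'' direction, the cornerstone is a \emph{fiber dichotomy}: letting $P$ and $Q$ denote the two projections of $R$, for every $a \in P$ the row-fiber $F_a := \{b : (a, b) \in R\}$ is either a singleton or equal to $Q$, and symmetrically for column-fibers $G_b$. This is the key use of the polymorphism: if $b_0, b_1 \in F_a$ are distinct and $(a', b') \in R$ with $a' \neq a$, applying $d$ to the ordered triple $((a', b'), (a, b_0), (a, b_1))$ gives $(d(a', a, a), d(b', b_0, b_1)) = (a, b')$ -- the first coordinate is $a$ by majority, and the second is $b'$ (as a majority if $b' \in \{b_0, b_1\}$, otherwise by the default first-input behavior). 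Hence $(a, b') \in R$, forcing $F_a = Q$.

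Armed with this, I would partition $P = P^* \sqcup P^{**}$ with $P^* = \{a : F_a = Q\}$ and $P^{**} = \{a : |F_a| = 1\}$, and similarly $Q = Q^* \sqcup Q^{**}$. If $P^* = P$ then $R = P \times Q$; if $P = P^{**}$ and $Q = Q^{**}$ then $R$ is the graph of a function $P \to Q$ that is both surjective (by definition of $Q$) and injective (since column fibers are singletons), giving the bijection form. In the mixed case with both $P^*$ and $P^{**}$ nonempty, I would pick $u \in P^*$ and $a' \in P^{**}$ with $F_{a'} = \{b'\}$, observe that $(u, b'), (a', b') \in R$ forces $b' \in Q^*$, and then establish the two structural claims $P^* = \{u\}$ and $F_a = \{b'\}$ for every $a \in P^{**}$. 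Both follow by contradiction via $d$: if $a_1 \neq a_2 \in P^*$, applying $d$ to $((a', b'), (a_1, b''), (a_2, b''))$ for some $b'' \in Q \setminus \{b'\}$ produces $(a', b'') \in R$, contradicting $F_{a'} = \{b'\}$; if two elements of $P^{**}$ had distinct singleton fibers $\{b_1'\}, \{b_2'\}$, an analogous application of $d$ produces an element outside one of these fibers. Setting $v := b'$ then yields $R = (\{u\} \times Q) \cup (P \times \{v\})$.

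The main obstacle is the mixed case, both in organizing it cleanly and in verifying the distinctness conditions needed for each $d$-application (the fact that $a' \in P^{**}$ while $a_1, a_2 \in P^*$ makes the needed distinctness automatic, and the degenerate cases $|P| = 1$ or $|Q| = 1$ collapse directly to the product form and should be dispatched up front).
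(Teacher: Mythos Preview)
Your proposal is correct and follows essentially the same approach as the paper: the central ``fiber dichotomy'' (a row-fiber is either a singleton or all of $Q$) is proved via the same application of $d$ to $((a',b'),(a,b_0),(a,b_1))$, and the subsequent case split---your partition $P = P^* \sqcup P^{**}$ versus the paper's count of how many $u$ satisfy $\{u\}\times Q \subseteq R$---is just a repackaging of the same three cases. One minor remark: in the mixed case, once you have established $b' \in Q^*$ (so $P \times \{b'\} \subseteq R$), the claim that $F_a = \{b'\}$ for every $a \in P^{**}$ follows immediately without a further $d$-application, since $(a,b') \in R$ and $|F_a|=1$.
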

\begin{proof}
    It is easy to verify that if $R$ is one of these three types then it is preserved by $d$. Let us prove the other direction. Let $P = \{x \in D \mid \exists y \in D \text{ s.t. } (x, y) \in R\}$ and $Q = \{y \in D \mid \exists x \in D \text{ s.t. } (x, y) \in R\}$. For any $u \in P$, if there exist two distinct $v_1, v_2 \in Q$ such that $(u, v_1), (u, v_2) \in R$, then for any $v \in Q$, we have also $(u, v) \in R$. In other words, we have $\{u\} \times Q \subseteq R$. This is because we can find some $(u_1, v)\in R$ by the definition of $Q$, and apply $f$ to the three pairs $(u_1, v), (u, v_1), (u, v_2) \in R$ to obtain $(u, v)$ (observe that $f(u_1, u, u) = u$ and $f(v, v_1, v_2) = v$).

    Now if we have two distinct $u_1, u_2 \in P$ such that $\{u_1\} \times Q \subseteq R$ and $\{u_2\} \times Q \subseteq R$, then for every $v \in Q$ there are two distinct $u_1, u_2$ such that $(u_1, v), (u_2, v)\in R$. By applying the above argument with $P$ and $Q$ reversed  we get that for every $v \in Q$, $P \times \{v\} \subseteq R$, so it must be the case that $R = P \times Q$. 
    
    Now assume that there exists exactly one $u \in P$ such that $\{u\} \times Q \subseteq R$ and $R \neq P \times Q$. Then there exists $(u_1, v) \in R$ such that $u_1 \neq u$, so we have $P \times \{v\} \subseteq R$. Note that in this case we must have $R = (\{u\} \times Q) \cup (P \times \{v\})$, for the existence of any other pair would imply that $R = P \times Q$.

    Finally, if $R \neq P \times Q$ and there is no $u \in P$ such that $\{u\} \times Q \subseteq R$, then there can also be no $v \in Q$ such that  $P \times \{v\} \subseteq R$. So for every $u \in P$ there is a unique $v$ such that $(u, v) \in R$, and we can find some bijective $\pi: P \to Q$ such that $R = \{(u, \pi(u)) \mid u \in P\}$.
\end{proof}

\subsection{A greedy algorithm}\label{subsec:greedy}
We now present a greedy algorithm which is a generalization of a 2-approximation algorithm for MinOnes 2-SAT due to Gusfield and Pitt~\cite{gusfield1992bounded}. In this algorithm, we will greedily pick labels for the variables one by one, and each label we pick may potentially restrict the set of feasible labels for some other variables. For general CSPs, this restriction can be rather arbitrary and difficult to control. However, for binary $(2,3)$-minimal instances preserved by the dual discriminator operation, the restriction is very simple: either the set of feasible labels is unchanged, or it is restricted to a singleton set (as is guaranteed by Lemma~\ref{lem:01all}). This means for variables where proper restriction happens we can simply fix it to the label in the singleton set. On the other hand, the variables whose set of feasible labels didn't change induce a sub-instance of the original instance and the algorithm may recurse on this sub-instance. This property guarantees that we will always produce a satisfying assignment, as is formalized in the following lemma.

\begin{definition}
    Let $I = (V, \{R_u\}_{u \in V}, \{R_{u,v}\}_{u,v\in V})$ be a (2,3)-minimal instance. For any $u, v \in V$, $a \in R_u$ and $b \in R_v$, we say that $v$ is fixed to $b$ by assigning $a$ to $u$ if $u = v$ and $a = b$, or $u \neq v$ and $R_{u, v} \cap (\{a\} \times R_v) = \{(a, b)\}$. In other words, $b$ is the only feasible label left for $v$ if we assign the label $a$ to $u$.
\end{definition}

\begin{lemma}\label{lem:greedy_correctness}
    Let $I = (V, \{R_u\}_{u \in V}, \{R_{u,v}\}_{u,v\in V})$ be a nontrivial (2,3)-minimal instance preserved by the dual discriminator operation. Let $u_0 \in V$, $a \in R_{u_0}$, $S$ be the set of variables fixed by assigning $a$ to $u_0$. For every $v \in S$, let $A_S(v)$ be the unique label that $v$ is fixed to by assigning $a$ to $u_0$. Let $A': V \backslash S \to D$ be any satisfying assignment for the induced (2,3)-minimal instance $I' = (V \backslash S, \{R_u\}_{u \in V \backslash S}, \{R_{u,v}\}_{u,v\in V \backslash S})$, then 
    \[
    A: V \to D, \quad A(v) = \left\{\begin{array}{ll}
        A_S(v) & \text{if } v \in S, \\
        A'(v) & \text{otherwise}
    \end{array}\right.
    \]
    is a satisfying assignment for $I$.
\end{lemma}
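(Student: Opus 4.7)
The goal is to verify that the combined assignment $A$ satisfies every unary and binary constraint of $I$. The unary case is immediate: if $v \in S$ then $A_S(v) \in R_v$ by the very definition of fixing, and if $v \notin S$ then $A'(v) \in R_v$ since $A'$ satisfies the induced instance $I'$. The bulk of the argument is for the binary constraints $R_{u,v}$, which I would split into three cases according to whether each of $u, v$ belongs to $S$.

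The case $u, v \notin S$ reduces directly to $A'$ being a satisfying assignment for $I'$. For $u, v \in S$, I would apply property (c) of (2,3)-minimality to the triple $(u_0, u, v)$: starting from $(a, A_S(u)) \in R_{u_0, u}$ (which holds because $u$ is fixed to $A_S(u)$ by assigning $a$ to $u_0$), (2,3)-minimality produces some $c \in R_v$ with $(a, c) \in R_{u_0, v}$ and $(A_S(u), c) \in R_{u, v}$. Since $v \in S$ is likewise fixed by assigning $a$ to $u_0$, the only candidate for $c$ is $A_S(v)$, and we conclude $(A_S(u), A_S(v)) \in R_{u, v}$. The subcases where $u$ or $v$ coincides with $u_0$ fall out directly from the definition of fixing.

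The main obstacle is the mixed case $u \in S$, $v \notin S$: the recursive assignment $A'(v)$ is chosen with no reference to $u_0$ and could, a priori, conflict with the commitment $u_0 = a$. The crux is a structural observation extracted from Lemma~\ref{lem:01all}: whenever $v$ is \emph{not} fixed by assigning $a$ to $u_0$, every label in $R_v$ is already compatible with $a$, i.e., $(a, c) \in R_{u_0, v}$ for all $c \in R_v$. Inspecting the three forms of a dual-discriminator-preserved binary relation shows this: a rectangle $P \times Q$ satisfies it as soon as $a \in R_{u_0} = P$; a cross $(\{u^*\} \times Q) \cup (P \times \{v^*\})$ satisfies it precisely when $a = u^*$, which is exactly the regime in which $v$ fails to be fixed to $v^*$; and a bijection always fixes $v$, so this subcase never occurs. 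With this observation, $A'(v) \in R_v$ gives $(a, A'(v)) \in R_{u_0, v}$, and then (2,3)-minimality on $(u_0, u, v)$ yields some $c \in R_u$ with $(a, c) \in R_{u_0, u}$ and $(c, A'(v)) \in R_{u, v}$. Since $u \in S$, the uniqueness clause in the definition of fixing forces $c = A_S(u)$, producing the required $(A_S(u), A'(v)) \in R_{u, v}$ and completing the proof.
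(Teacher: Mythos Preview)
Your proof is correct and follows essentially the same approach as the paper. The paper packages the shared step into a single claim (``for $v\in S$ and any $w$, if $(a,c)\in R_{u_0,w}$ then $(A_S(v),c)\in R_{v,w}$'') and applies it twice, whereas you argue the two nontrivial cases directly; the underlying use of $(2,3)$-minimality and the fixing uniqueness is identical. You are in fact more explicit than the paper on the key point $(a,A'(v))\in R_{u_0,v}$ for $v\notin S$: the paper simply asserts this, relying on the informal remark before the lemma that Lemma~\ref{lem:01all} forces the feasible set to be either unchanged or a singleton, while you spell out the case analysis.
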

\begin{proof}
    Clearly all unary constraints in $I$ are satisfied. Any binary constraint $R_{u, v}$ where $u, v \in V \backslash S$ is satisfied since $A'$ is a satisfying assignment. The remaining two cases, where $u \in S, v \in V \backslash S$ or $u, v \in S$, follow from the following claim:

    \begin{claim}
        For every distinct $v \in S$ and $w \in V$, if $(a, c) \in R_{u_0, w}$, then we have $(A_S(v), c) \in R_{v, w}$.
    \end{claim}
    \begin{proof}
        This claim clearly holds for $v = u_0$. Suppose $v \neq u_0$. Since $(a, c) \in R_{u_0, w}$, by (2,3)-minimality we can find $b \in R_v$ such that $(a, b) \in R_{u_0, v}, (b, c) \in R_{v, w}$. This $b$ must coincide with $A_S(v)$, since $v$ is fixed by assigning $a$ to $u_0$. Thus, we have $(A_S(v), c) \in R_{v, w}$.
    \end{proof}
    For $u \in S, v \in V \backslash S$, since $(a, A'(v)) \in R_{u_0, v}$, we may apply the claim and obtain that $(A_S(u), A'(v)) \in R_{u, v}$, so this constraint is satisfied. For $u, v \in S$, again since since $(a, A_S(v)) \in R_{u_0, v}$, we may apply the claim and obtain that $(A_S(u), A_S(v)) \in R_{u, v}$, so this constraint is also satisfied. It follows that all binary constraints in $I$ are satisfied by $A$, so it is indeed a satisfying assignment for $I$.
\end{proof}

To guarantee a constant-factor approximation, we need to pick labels in a clever way. One naive idea is to compute the total cost incurred by variables fixed by each label and pick the label that minimizes this cost. However, this does not work because the optimum assignment may incur more cost but save by fixing more variables. One way to fix this naive idea is to consider a \emph{derived cost} $t$, instead of the original cost $\cost$. In particular, when we face a decision for some variable $u$ (i.e., $|R_u| \geq 2$), we pick the label that minimizes the total derived cost of fixed variables. We then pay $|R_u| \leq |D|$ times this cost towards reducing the derived cost of \emph{all} variable-label pairs that are or could've been fixed by assignment to $u$. The key idea here is that one of the labels in $R_u$ will be taken by the optimal assignment, so the amount we pay are always within a factor of $|D|$ from the total cost of the optimal assignment.

We now formally present the algorithm and its analysis. The pseudocode for the algorithm can be found in Algorithm~\ref{alg:greedy}.

\begin{algorithm}

\caption{Greedy algorithm for \MinCostCSP}\label{alg:greedy}
 \hspace*{\algorithmicindent} \textbf{Input:} $I = (V, \{R_u\}_{u \in V}, \{R_{u,v}\}_{u,v\in V})$ a (2,3)-minimal instance,  $\cost: V \times D \to \mathbb{R} \cup \{+\infty\}$ \\
 \hspace*{\algorithmicindent} \textbf{Output:}  $A: V \to D$ a satisfying assignment for $I$. \\
 \hspace*{\algorithmicindent}  \textbf{Initialization:} $t: V 
 \times D \to \mathbb{R}^{\geq 0}\cup\{+\infty\}, t(u, a) \gets \cost(u, a)$
\begin{algorithmic}[1]
\Procedure{MinCostHom-Greedy}{$I$}
\State $S \gets \{u \in V \mid |R_u| \geq 2\}$ \Comment{The set of undetermined variables}
\For {$u \in V \backslash S$}
    \State $A(u) \gets$ the unique label $a$ in $R_u$
    \State $t(u, a) \gets 0$ \label{line:zero_t_outside_loop}
\EndFor
\While {$S \neq \varnothing$} \label{line:while_loop}
    \State Pick an arbitrary $v \in S$
    \For {$a \in R_v$}
        \State $F_{a} \gets \{u \in S \mid u \text{ is fixed by assigning } v = a\}$
        \For {$u \in F_a$}
            \State $A_a(u) \gets $ the label $u$ is fixed to if assigning $v = a$
        \EndFor
        \State $c_a \gets \sum_{u \in F_a} t(u, A_a(u))$ \Comment{Calculate the cost for each label $a$ if we assign it to $v$}
    \EndFor
    \State $a_0 \gets \arg\min_{a \in R_v} c_a$
    \State Initialize an array $\Delta t$ indexed by $V \times D$ to all zeros. 
    \State \Comment{$\Delta t$ stores the amount by which we will decrease $t$ this round.}
    \State \Comment{Intuitively, this is the cost we need to pay in this round. }
    \For {$a \in R_v$} 
        \State Find $0 \leq t_a(u) \leq t(u, A_a(u))$ for every $u \in F_a$ such that $\sum_{u \in F_a} t_a(u) = c_{a_0}$ \label{line:find_t}
        \State \Comment{Always possible since $\sum_{u \in F_a} t(u, A_a(u)) = c_a \geq c_{a_0}$}
        \State \Comment{In particular, $t_{a_0}(u) = t(u, A_{a_0}(u))$ for every $u \in F_{a_0}$}
        \For {$u \in F_a$}
            \State $\Delta t(u, A_a(u)) \gets \max(\Delta t(u, A_a(u)), t_a(u))$ \Comment{Choose the largest decrease for $t$} \label{line:find_Delta_t} 
        \EndFor
    \EndFor
    
    \For {$u \in V$} \label{line:decrease_t_for_loop}
        \For {$a \in D$}
            \State $t(u, a) \gets t(u, a) - \Delta t(u, a)$ \Comment{Update the costs} \label{line:zero_t_inside_loop}
        \EndFor
    \EndFor
    \For {$u \in F_{a_0}$}
        \State $A(u) \gets A_{a_0}(u)$ \Comment{Set values to the variables fixed by assigning $a_0$ to $u$}
    \EndFor
    \State $S \gets S \backslash F_{a_0}$ \Comment{Remove variables fixed in this step from the set of undetermined variables}
\EndWhile
\State \Return $A$
  
\EndProcedure
\end{algorithmic}
\end{algorithm}

We start by making the following observations.

\begin{observation}\label{obs:greedy}
    Let $t^{\mathrm{end}}$ be the values of $t$ when $A$ is returned. For every $u \in V$, $t^{\mathrm{end}}(u, A(u)) = 0$.
\end{observation}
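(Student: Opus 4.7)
The plan is to split into cases based on when $A(u)$ is assigned: either in the initial loop over $V \setminus S$, or in the specific iteration of the while loop at which $u$ becomes fixed and is removed from $S$. In each case I will pinpoint the step that drives $t(u, A(u))$ to zero, and then invoke a monotonicity invariant to argue that the value remains zero until termination.

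The first step is to establish the invariant: throughout the execution of the while loop, $t(u, b) \geq 0$ for every $(u, b) \in V \times D$, and moreover in every iteration the decrement $\Delta t(u, b)$ never exceeds the current value $t(u, b)$. This falls out of the construction at line~\ref{line:find_Delta_t}: $\Delta t(u, b)$ is the maximum of those $t_a(u)$ (taken over $a \in R_v$ with $u \in F_a$ and $A_a(u) = b$) that were chosen in line~\ref{line:find_t} to lie in $[0, t(u, A_a(u))] = [0, t(u, b)]$. An immediate corollary is that once $t(u, b) = 0$, it stays at $0$ thereafter.

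With the invariant in hand, the case $u \in V \setminus S$ is immediate: $A(u)$ is set to the unique label $a \in R_u$ and $t(u, a)$ is explicitly set to $0$ at line~\ref{line:zero_t_outside_loop} before the while loop begins, so $t^{\mathrm{end}}(u, A(u)) = 0$. For the case $u \in S$, let $i^*$ be the unique iteration in which $u$ is fixed, with chosen variable $v$, chosen label $a_0$, and $A(u) = A_{a_0}(u) =: b^*$. I claim $t(u, b^*) = 0$ at the end of iteration $i^*$, after which the invariant takes over. To see this, combine $\sum_{w \in F_{a_0}} t_{a_0}(w) = c_{a_0}$, the pointwise bound $t_{a_0}(w) \leq t(w, A_{a_0}(w))$, and $c_{a_0} = \sum_{w \in F_{a_0}} t(w, A_{a_0}(w))$: the sum of upper bounds already equals the target, so equality is forced in every coordinate, yielding $t_{a_0}(u) = t(u, b^*)$. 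Line~\ref{line:find_Delta_t} then gives $\Delta t(u, b^*) \geq t_{a_0}(u) = t(u, b^*)$, which combined with the invariant produces $t(u, b^*) = 0$ at the end of iteration $i^*$.

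I do not anticipate any serious obstacle. The only delicate point is reading line~\ref{line:find_Delta_t} carefully: $\Delta t(u, b)$ is a \emph{maximum} rather than a sum over the contributing labels $a$, which is precisely what preserves the invariant $\Delta t(u, b) \leq t(u, b)$; at the same time, including the term from $a = a_0$ is what guarantees the maximum is large enough to zero out $t(u, b^*)$.
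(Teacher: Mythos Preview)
Your proof is correct and follows essentially the same route as the paper: split on whether $u$ is fixed before or during the while loop, and in the latter case use that $t_{a_0}(w) = t(w, A_{a_0}(w))$ for all $w \in F_{a_0}$ (forced by tightness of the sum at line~\ref{line:find_t}) so that line~\ref{line:find_Delta_t} yields $\Delta t(u, A_{a_0}(u)) = t(u, A_{a_0}(u))$. The only difference is that you fold in the nonnegativity invariant (which the paper records separately as Observation~\ref{obs:nonnegative}) and explicitly argue persistence of the zero value; the paper leaves this implicit, relying on the fact that once $u$ leaves $S$ it never reappears in any $F_a$.
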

\begin{proof}
    For variables fixed outside the while loop, this follows from Line~\ref{line:zero_t_outside_loop}. For variables that obtained its assignment inside the while loop, this follows from Lines~\ref{line:find_t} and~\ref{line:zero_t_inside_loop} (since $\Delta t(u, A_{a_0}(u)) = t(u, A_{a_0}(u))$ for every $u \in F_{a_0}$).
\end{proof}

\begin{observation}\label{obs:monotone}
    The value of $t$ never increases during the algorithm.  In particular, for every $u \in V$ and $a \in D$, we have  $\cost(u, a) - t^{\mathrm{end}}(u, a) \geq 0$.
\end{observation}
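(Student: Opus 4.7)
The plan is to verify the two assertions by tracking every location in Algorithm~\ref{alg:greedy} where the array $t$ is written, and checking that each write is a (weak) decrease. There are only two such locations: line~\ref{line:zero_t_outside_loop}, which sets $t(u, a) \gets 0$ for the unique label $a \in R_u$ when $|R_u| = 1$, and line~\ref{line:zero_t_inside_loop}, which sets $t(u, a) \gets t(u, a) - \Delta t(u, a)$ inside the while loop.

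For line~\ref{line:zero_t_outside_loop}, I would observe that by the definition of \MinCostCSP, $\cost$ takes values in $\mathbb{R}^{\geq 0}\cup\{+\infty\}$; since the reset happens before the while loop, $t(u, a)$ at that point still equals $\cost(u, a) \geq 0$, so overwriting it with $0$ is a (weak) decrease. For line~\ref{line:zero_t_inside_loop}, it suffices to show that $\Delta t(u, a) \geq 0$ at the moment of subtraction. The array $\Delta t$ is initialized to all zeros at the top of each while-loop iteration, and is only subsequently modified on line~\ref{line:find_Delta_t} by the max update $\Delta t(u, A_a(u)) \gets \max(\Delta t(u, A_a(u)), t_a(u))$. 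The constraint $0 \leq t_a(u)$ stipulated on line~\ref{line:find_t} guarantees that this max preserves non-negativity by a trivial induction over the updates within the iteration.

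Combining these two observations, $t$ is non-increasing throughout the execution. Since $t$ is initialized to $\cost$, this gives $t^{\mathrm{end}}(u, a) \leq \cost(u, a)$ for every $(u, a) \in V \times D$, which is exactly the stated inequality $\cost(u, a) - t^{\mathrm{end}}(u, a) \geq 0$. There is no substantive obstacle here: the entire proof is a bookkeeping check that every write to $t$ or $\Delta t$ produces a non-negative quantity to subtract, and both of these facts are essentially immediate from inspection of the pseudocode combined with the non-negativity of $\cost$ and of the $t_a(u)$ chosen on line~\ref{line:find_t}.
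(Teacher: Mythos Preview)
Your proof is correct and follows essentially the same approach as the paper; the paper's own proof is the one-liner ``This is because $\Delta t$ is always nonnegative,'' and you have simply spelled out the details more carefully (including the additional case of line~\ref{line:zero_t_outside_loop}, which the paper leaves implicit).
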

\begin{proof}
    This is because $\Delta t$ is always nonnegative.
\end{proof}

\begin{observation}\label{obs:nonnegative}
    For every $u \in V$ and $a \in D$, we have $t(u, a) \geq 0$ at any point of the algorithm.
\end{observation}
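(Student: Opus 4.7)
The plan is to show by induction on the iterations of the algorithm that $t(u, a) \geq 0$ is maintained as an invariant. Initially, $t(u, a) = \cost(u, a) \geq 0$ by the definition of the cost function in a $\MinCostCSP$ instance, so the base case holds. There are only two places where $t$ is modified: Line~\ref{line:zero_t_outside_loop}, where $t(u, a)$ is set to $0$ (clearly nonnegative), and Line~\ref{line:zero_t_inside_loop} inside the while loop, where we subtract $\Delta t(u, a)$ from $t(u, a)$. So the crux is to show that at the moment of the update on Line~\ref{line:zero_t_inside_loop}, $\Delta t(u, a) \leq t(u, a)$.

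To see this, fix a pair $(u, b) \in V \times D$ and consider how $\Delta t(u, b)$ is built up during a single iteration of the while loop. By Line~\ref{line:find_Delta_t}, the final value of $\Delta t(u, b)$ is the maximum (taken over those $a \in R_v$ with $u \in F_a$ and $A_a(u) = b$) of the quantities $t_a(u)$ selected on Line~\ref{line:find_t}. By the constraint imposed on Line~\ref{line:find_t}, each such $t_a(u)$ satisfies $0 \leq t_a(u) \leq t(u, A_a(u)) = t(u, b)$, where $t(u, b)$ refers to the value of $t$ at the start of the iteration. Taking the maximum preserves this upper bound, so $\Delta t(u, b) \leq t(u, b)$. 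Hence after the subtraction on Line~\ref{line:zero_t_inside_loop}, the new value of $t(u, b)$ is still nonnegative, completing the inductive step.

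No real obstacle arises here; the observation is essentially a direct consequence of the fact that Line~\ref{line:find_t} explicitly constrains each $t_a(u)$ to lie in $[0, t(u, A_a(u))]$, and that Line~\ref{line:find_Delta_t} takes a max rather than a sum of such quantities over different choices of $a$. The only subtlety is to note that the collection of $a$'s contributing to $\Delta t(u, b)$ must all have $A_a(u) = b$, so that the upper bound $t(u, A_a(u))$ is uniformly equal to $t(u, b)$ and the max is bounded by $t(u, b)$ itself.
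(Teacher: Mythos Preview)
Your proof is correct and follows essentially the same approach as the paper. The paper's proof is a one-liner noting that whenever $\Delta t(u, A_a(u))$ is updated at Line~\ref{line:find_Delta_t} its new value is $t_a(u) \leq t(u, A_a(u))$; you spell out the same argument more carefully, in particular making explicit that only those $a$ with $A_a(u) = b$ contribute to $\Delta t(u, b)$, so the common upper bound is $t(u, b)$.
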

\begin{proof}
    This is because when an update happens at Line~\ref{line:find_Delta_t}, the updated value for $\Delta t(u, A_a(u))$ is equal to $t_a(u) \leq t(u, A_a(u))$. 
\end{proof}

The guarantee on the approximation ratio is based on the following claim.

    \begin{claim}\label{claim:approx_ratio}
        Let $A': V \to D$ be any satisfying assignment for $I$. For each iteration of the while loop beginning at Line~\ref{line:while_loop}, let $t^{\mathrm{before}}, t^{\mathrm{after}}$ be the values of $t$ at the beginning and the end of the iteration respectively. Then we have 
        \[
        \sum_{u \in V, a \in D}(t^{\mathrm{before}}(u, a) - t^{\mathrm{after}}(u, a)) \leq |D| \cdot \sum_{u \in V}(t^{\mathrm{before}}(u, A'(u)) - t^{\mathrm{after}}(u, A'(u)))
        \]
    \end{claim}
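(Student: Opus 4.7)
The plan is to sandwich both sides of the inequality around the single quantity $c_{a_0}$ that is computed during the iteration: I will upper-bound the left-hand side by $|D| \cdot c_{a_0}$ and lower-bound the right-hand side by $c_{a_0}$. The bound $|R_v| \leq |D|$ then closes the factor of $|D|$ in the statement. Throughout, I will use Observation~\ref{obs:monotone} so that every difference $t^{\mathrm{before}} - t^{\mathrm{after}}$ is nonnegative, which lets me freely add or drop summands.

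For the left-hand bound, note that $t^{\mathrm{before}}(u,a) - t^{\mathrm{after}}(u,a) = \Delta t(u,a)$ by the update in the penultimate \textbf{for} loop, and from Line~\ref{line:find_Delta_t} the value $\Delta t(u,a)$ is the maximum of $t_{a'}(u)$ over all $a' \in R_v$ with $u \in F_{a'}$ and $A_{a'}(u) = a$. Replacing each such maximum by the corresponding sum (valid since each $t_{a'}(u) \geq 0$) and then swapping the order of summation yields
\[
\sum_{u \in V,\, a \in D} \Delta t(u, a) \;\leq\; \sum_{a' \in R_v} \sum_{u \in F_{a'}} t_{a'}(u) \;=\; \sum_{a' \in R_v} c_{a_0} \;=\; |R_v| \cdot c_{a_0} \;\leq\; |D| \cdot c_{a_0},
\]
where the middle equality invokes the defining property $\sum_{u \in F_{a'}} t_{a'}(u) = c_{a_0}$ from Line~\ref{line:find_t}.

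For the right-hand bound, since $A'$ is a satisfying assignment the label $a^* := A'(v)$ lies in $R_v$, so by the greedy choice $c_{a_0} \leq c_{a^*}$. The crucial observation is that for every $u \in F_{a^*}$, the label $A_{a^*}(u)$ is by definition the unique label in $R_u$ consistent with assigning $v = a^*$; since $A'(v) = a^*$, $A'(u) \in R_u$, and $(A'(v), A'(u)) \in R_{v,u}$, we must have $A'(u) = A_{a^*}(u)$ (the case $u = v$ is immediate). Therefore
\[
\sum_{u \in V} \Delta t(u, A'(u)) \;\geq\; \sum_{u \in F_{a^*}} \Delta t(u, A_{a^*}(u)) \;\geq\; \sum_{u \in F_{a^*}} t_{a^*}(u) \;=\; c_{a_0},
\]
using Observation~\ref{obs:monotone} to drop the terms outside $F_{a^*}$, the max-update at Line~\ref{line:find_Delta_t} for $\Delta t(u, A_{a^*}(u)) \geq t_{a^*}(u)$, and Line~\ref{line:find_t} for the final equality. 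Chaining the two bounds gives the claim.

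The only nontrivial step is identifying $A'(u) = A_{a^*}(u)$ for $u \in F_{a^*}$; this is where the $0/1/\mathrm{all}$ structure from Lemma~\ref{lem:01all} and the definition of ``fixed'' do the real work, and it is the reason the greedy argument specifically needs the dual-discriminator hypothesis rather than an arbitrary NU polymorphism. The rest is bookkeeping on the max-vs-sum trick.
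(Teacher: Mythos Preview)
Your proof is correct and follows essentially the same route as the paper's: both bound the left-hand side above by $|R_v|\cdot c_{a_0}\le |D|\cdot c_{a_0}$ via the max-vs-sum trick on $\Delta t$, and bound the right-hand side below by $c_{a_0}$ by restricting to $u\in F_{a^*}$ with $a^*=A'(v)$ and using $\Delta t(u,A_{a^*}(u))\ge t_{a^*}(u)$ together with $\sum_{u\in F_{a^*}} t_{a^*}(u)=c_{a_0}$.

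One small remark on your closing commentary: the identification $A'(u)=A_{a^*}(u)$ for $u\in F_{a^*}$ follows purely from the definition of ``fixed'' and the fact that $A'$ satisfies $R_{v,u}$; Lemma~\ref{lem:01all} is not actually invoked at this step (the paper's proof does not use it here either). The $0/1/\mathrm{all}$ structure matters for the correctness of the greedy scheme (Lemma~\ref{lem:greedy_correctness}) and for why ``fixed'' captures all propagation, but Claim~\ref{claim:approx_ratio} itself is pure bookkeeping once the algorithm is written down. Also, your sentence ``so by the greedy choice $c_{a_0}\le c_{a^*}$'' is not used in your chain of inequalities; that inequality is only what guarantees the existence of the $t_{a^*}(u)$ in Line~\ref{line:find_t}, after which $\sum_{u\in F_{a^*}} t_{a^*}(u)=c_{a_0}$ holds by fiat.
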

    \begin{proof}
        Let the variables be defined as they are in the algorithm. For every $u \in V$ and $a \in D$ define the auxiliary variables
        \[
        \Delta'(u, a) = \sum_{a': a' \in R_v, u \in F_{a'}, a = A_{a'}(u)} t_{a'}(u).
        \]
        
        Note that the quantity $(t^{\mathrm{before}}(u, a) - t^{\mathrm{after}}(u, a))$ is nonzero only if there exists some $a' \in R_v$ such that $u \in F_{a'}$ and $a = A_{a'}(u)$, and this quantity is given by the value stored in $\Delta t(u, a)$ at the beginning for Line~\ref{line:decrease_t_for_loop}. By construction (see Line~\ref{line:find_Delta_t}), for any such $a'$ we have 
        \[
        t_{a'}(u) \leq \Delta t(u, a) \leq \Delta'(u, a),
        \]since $\Delta t(u, a)$ is equal to the largest summand in the sum defining $\Delta'(u, a)$. It follows that
        \begin{align*}
             \sum_{u \in V, a \in D}(t^{\mathrm{before}}(u, a) - t^{\mathrm{after}}(u, a)) \leq\, & \sum_{u \in V, a \in D} \Delta'(u, a) \\
            =\, & \sum_{u \in V, a \in D} \left(\sum_{a': a' \in R_v, u \in F_{a'}, a = A_{a'}(u)} t_{a'}(u)\right) \\
            =\, & \sum_{a' \in R_v} \sum_{u \in F_{a'}} t_{a'}(u) \\
            =\, & |R_v| \cdot c_{a_0} \leq |D| \cdot c_{a_0}.
        \end{align*}
        Here, the last equality follows from $\sum_{u \in F_{a'}} t_{a'}(u) = c_{a_0}$ (see Line~\ref{line:find_t}). On the other hand, for $a = A'(v)$ we must have $A_a(u) = A'(u)$ for every $u \in F_a$, and therefore we have
        \begin{align*}
        \sum_{u \in V}(t^{\mathrm{before}}(u, A'(u)) - t^{\mathrm{after}}(u, A'(u))) & \geq \sum_{u \in F_a} (t^{\mathrm{before}}(u, A_a(u)) - t^{\mathrm{after}}(u, A_a(u))) \\
        & = \sum_{u \in F_a} \Delta t(u, A_a(u)) \\ 
        & \geq \sum_{u \in F_a} t_a(u) = c_{a_0}.
        \end{align*}
        This establishes the claim.
    \end{proof}

\begin{theorem}
    Algorithm~\ref{alg:greedy} returns a satisfying assignment $A: V \to D$ whose cost is at most $|D|$ times the optimal cost.
\end{theorem}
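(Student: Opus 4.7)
The plan is to split the argument into correctness and the approximation guarantee.

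For correctness, I would prove by induction on the number of remaining iterations that the while loop, starting from any nontrivial (2,3)-minimal sub-instance preserved by the dual discriminator, produces a satisfying extension of $A$. The variables fixed before the loop (those $u \in V \setminus S^{\mathrm{initial}}$ with $|R_u|=1$) receive their unique legal label, satisfying every constraint among them. In an iteration, after the algorithm picks $v \in S$ and some label $a_0 \in R_v$, the instance restricted to $V \setminus F_{a_0}$ still inherits (2,3)-minimality, nontriviality (no $R_u$ is emptied), and preservation by the dual discriminator, since none of these relations are modified. Lemma~\ref{lem:greedy_correctness} then permits gluing the forced assignment on $F_{a_0}$ with any satisfying assignment on the induced sub-instance, and the inductive hypothesis supplies the latter.

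For the approximation guarantee, I would sum Claim~\ref{claim:approx_ratio} over all iterations of the while loop and telescope. Writing $t^0$ for the value of $t$ immediately after Line~\ref{line:zero_t_outside_loop} and $t^{\mathrm{end}}$ for its terminal value, and fixing any optimal satisfying assignment $A^*$, the telescoped inequality reads
\[
\sum_{u \in V,\, a \in D}\bigl(t^0(u,a) - t^{\mathrm{end}}(u,a)\bigr) \;\leq\; |D|\sum_{u \in V}\bigl(t^0(u, A^*(u)) - t^{\mathrm{end}}(u, A^*(u))\bigr).
\]
For each $u \in V \setminus S^{\mathrm{initial}}$, $A^*(u) = A(u)$ is forced, $t^0(u, A(u)) = 0$, and $t$ is never subsequently altered at these variables (updates inside the while loop occur only on variables currently in $S$), so these variables contribute $0$ to both sides. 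For $u \in S^{\mathrm{initial}}$, $t^0(u,\cdot) = \cost(u,\cdot)$, so the right-hand side is bounded by $|D| \sum_{u \in S^{\mathrm{initial}}} \cost(u, A^*(u))$ using Observation~\ref{obs:nonnegative}; the left-hand side is lower-bounded, using Observations~\ref{obs:monotone} and~\ref{obs:greedy}, by $\sum_{u \in S^{\mathrm{initial}}} \cost(u, A(u))$. Adding the trivial equality $\sum_{u \in V \setminus S^{\mathrm{initial}}} \cost(u, A(u)) = \sum_{u \in V \setminus S^{\mathrm{initial}}} \cost(u, A^*(u))$ then yields $\cost(A) \leq |D| \cdot \cost(A^*)$.

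The main bookkeeping subtlety is to keep the contribution of the pre-loop initialization (which forces $t$ to $0$ on the unique label of singleton-$R_u$ variables) cleanly separate from the contributions tracked by Claim~\ref{claim:approx_ratio} inside the loop; lumping the two together would introduce an extra additive $\cost(A^*)$ and degrade the ratio to $|D|+1$ rather than $|D|$.
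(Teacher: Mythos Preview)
Your proposal is correct and follows essentially the same route as the paper: correctness via Lemma~\ref{lem:greedy_correctness}, and the ratio by telescoping Claim~\ref{claim:approx_ratio} across iterations together with Observations~\ref{obs:greedy}, \ref{obs:monotone}, and~\ref{obs:nonnegative}. The paper writes the chain directly with $\cost$ in place of your $t^0$ and does not separate out the singleton-$R_u$ variables; your final worry about degrading to $|D|+1$ does not in fact arise there either, because for each $u\in V\setminus S^{\mathrm{initial}}$ one has $A(u)=A^*(u)$, so the pre-loop contribution $\cost(u,A(u))$ enters once on the left and $|D|$ times on the right, which only strengthens the inequality.
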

\begin{proof}
    The algorithm returns a satisfying assignment by Lemma~\ref{lem:greedy_correctness}. 
    
    Let $A': V \to D$ be an arbitrary assignment. We have the following chain of inequalities: 
    \begin{align*}
        \sum_{v \in V} \cost(v, A(v)) & \leq \sum_{v \in V} \left(\cost(v, A(v)) - t^{\mathrm{end}}(v, A(v))\right) & \text{(Observation~\ref{obs:greedy})} \\
        & \leq \sum_{v \in V}\sum_{a \in D} \left(\cost(v, a) - t^{\mathrm{end}}(v, a)\right) & \text{(Observation~\ref{obs:monotone})} \\
        & \leq |D| \cdot \sum_{v \in V} \left(\cost(v, A'(v)) -t^{\mathrm{end}}(v, A'(v))\right) & \text{(Claim~\ref{claim:approx_ratio}, summing over all iterations)} \\
        & \leq |D| \cdot \sum_{v \in V} \cost(v, A'(v)) & \text{(Observation~\ref{obs:nonnegative})} 
    \end{align*}
    Since $A'$ is arbitrary, the theorem follows by taking $A'$ to be an optimal assignment.
\end{proof}

\subsection{An LP-based algorithm}\label{subsec:LP}

We now present another approximation algorithm that is based on the \emph{basic linear programming} (BLP) relaxation for the $\MinCostCSP$ problem \cite{sherali1990hierarchy,kumar2011lp}. Given an $\MinCostCSP(\Gamma)$ instance $I = (V, \mathcal{C}, \cost)$, its BLP relaxation can be formulated as follows.

\begin{figure}[ht]
    \centering
\begin{alignat}{4}
&\text{minimize} &\qquad& \sum_{v \in V}\sum_{a \in D} p_{v, a} \cdot \cost(v, a)\notag \\ 
&\text{subject to} &\qquad& \sum_{a \in D} p_{v, a} = 1, \qquad &&\forall v \in V,\\
& &\qquad& \sum_{x \in R} p_{C, x} = 1, \qquad &&\forall C = (R, S) \in \mathcal{C},\\
& &\qquad& \sum_{x \in R} p_{C, x} \cdot \mathbf{1}[x(v) = a]= p_{v, a}, \qquad &&\forall C = (R, S) \in \mathcal{C},\,\forall v \in S,\,\forall a \in D, \\
& &\qquad& p_{v, a}, \, p_{C, x} \geq 0, \qquad &&\forall v \in V, a \in D, C = (R, S) \in \mathcal{C}, x \in R.
\end{alignat}
    \caption{BLP relaxation for $\MinCostCSP(\Gamma)$}
    \label{fig:lp}
\end{figure}

Recall that each constraint $C$ is represented by a pair $(R, S)$ where $R$ is some $k$-ary relation and $S$ is a $k$-tuple of variables to which $R$ is applied. Here in the LP formulation we abuse the notation slightly and think of each satisfying tuple $x \in R$ also as a partial assignment $S \to D^{k}$, and $x(v)$ is the label that $x$ assigns to $v$. Informally, the linear program maintains a distribution of satisfying partial assignments for each constraint, and it requires that if a variable appears in multiple constraints, then the marginal distribution on this variable should be consistent across these constraints. The objective function to minimize is then the expected cost under these marginal distributions. It is easy to see that any integral assignment to $I$ corresponds to distributions supported on a single element, so this is indeed a relaxation of the original problem. We remark that in polynomial time we can actually enforce the marginal consistency requirement on any constant-sized set of variables, and thus obtain tighter relaxations on the so-called \emph{Sherali-Adams hierarchy} (\cite{sherali1990hierarchy}). However, this will not be needed for our purpose.

For any \MinCostCSP\ instance $I$, let $\LP(I)$ be the optimal value of the BLP relaxation for $I$. Our LP-based algorithm is as follows. Note that as before, we assume that $I$ is given as a non-trivial $(2,3)$-minimal instance.

\begin{algorithm}
    \caption{LP-based algorithm for \MinCostCSP}\label{alg:LP}
 \hspace*{\algorithmicindent} \textbf{Input:} $I = (V, \{R_u\}_{u \in V}, \{R_{u,v}\}_{u,v\in V})$ a (2,3)-minimal instance \\
 \hspace*{\algorithmicindent} \textbf{Output:}  $A: V \to D$ a satisfying assignment for $I$. 
\begin{algorithmic}[1]
\Procedure{MinCostHom-LP}{$I$}
    \State Solve the Basic LP relaxation and obtain $\{p_{v, a} \mid (v, a) \in V \times D\}$.
    \For {$v \in V$}
        \State $R_v' \gets \{a \in R_v \mid p_{v, a} \geq 1/|D|\}$
    \EndFor
    \For {every distinct $u, v \in V$}
        \State $R'_{u, v} \gets R_{u, v} \cap (R_u' \times R_v')$
    \EndFor
    \State Return any satisfying solution to the $(2,3)$-minimal instance $I' = (V, \{R_u'\}_{u \in V}, \{R_{u,v}'\}_{u,v\in V})$.
\EndProcedure
\end{algorithmic}
\end{algorithm}

In words, we remove labels that receive tiny LP probabilities (less than $1 / |D|$) and solve the new instance. Observe that $R_v'$ is nonempty for every $v \in V$, since at least one of the $|D|$ labels will get a probability that is at least $1/|D|$.

\begin{theorem}
    Let $I$ be a nontrivial (2,3)-minimal instance. Then on input $I$, Algorithm~\ref{alg:LP} returns in polynomial time a satisfying assignment to $I$ whose cost is at most $|D| \cdot \LP(I)$.
\end{theorem}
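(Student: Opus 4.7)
The plan has two largely independent parts: bounding the cost of any satisfying assignment for the filtered instance $I'$, and proving that such an assignment exists. The cost bound is immediate. Any satisfying assignment $A$ of $I'$ has $A(v) \in R_v'$, so by construction $p_{v, A(v)} \geq 1/|D|$. Hence $\cost(v, A(v)) \leq |D| \cdot p_{v, A(v)} \cdot \cost(v, A(v))$, and summing over $v$ and using LP feasibility yields $\sum_v \cost(v, A(v)) \leq |D| \sum_{v,a} p_{v,a} \cdot \cost(v, a) = |D| \cdot \LP(I)$.

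For the existence of a satisfying assignment, I would show that $I'$ is itself a nontrivial $(2,3)$-minimal instance, and then invoke Theorem~\ref{thm:bounded-width}, using that the relations of $I'$ remain preserved by the conservative dual discriminator (so the constraint language still has bounded width). Nontriviality is immediate: $\sum_a p_{v,a} = 1$ forces at least one label per variable to clear the $1/|D|$ threshold. Condition (a) of Definition~\ref{def:2-3-minimal} is inherited directly from the symmetry of the original relations. The interesting conditions (b) and (c) are handled by a case analysis on the form of the binary relations provided by Lemma~\ref{lem:01all}.

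For condition (b), I need every $a \in R_u'$ to have a partner $b \in R_v'$ with $(a,b) \in R_{u,v}$. If $R_{u,v} = P \times Q$, any $b \in R_v'$ works since $R_v'$ is nonempty. In the other two forms of Lemma~\ref{lem:01all}, either $a$ is ``free'' or its only permitted partner is a unique element (namely $v_0$ in the $(\{u_0\} \times Q) \cup (P \times \{v_0\})$ case with $a \neq u_0$, or $\pi(a)$ in the bijection case). In the uniquely-determined subcase, the LP marginalization identity $\sum_b p_{(u,v),(a,b)} = p_{u,a} \geq 1/|D|$ collapses to a single term, so that unique $b$ satisfies $p_{v,b} \geq p_{(u,v),(a,b)} \geq 1/|D|$ and lies in $R_v'$. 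For condition (c), given $(a,b) \in R_{u,v}'$ and a third variable $w$, I would define $S_a = \{c \in R_w : (a,c) \in R_{u,w}\}$ and $T_b = \{c \in R_w : (b,c) \in R_{v,w}\}$. Lemma~\ref{lem:01all} forces each of $S_a, T_b$ to be either all of $R_w$ or a single forced element. Whenever one is a singleton, the same LP argument places that element in $R_w'$, and $(2,3)$-minimality of $I$ (which provides at least one $c$ in $S_a \cap T_b$) forces the singleton to lie in the other set as well; whenever both equal $R_w$, any element of the nonempty $R_w'$ will do.

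The main technical obstacle is packaging the classification from Lemma~\ref{lem:01all} together with the LP marginalization constraints cleanly enough that the small case analyses for (b) and (c) go through uniformly; the crucial recurring observation is that whenever the structure of $R_{u,v}$ forces $a$ to a unique partner, the LP must concentrate all of $p_{u,a}$ on that partner, pushing it above the $1/|D|$ threshold. Once this is set up, $I'$ is a nontrivial $(2,3)$-minimal instance whose language has bounded width, so Theorem~\ref{thm:bounded-width} produces a satisfying assignment for $I'$ in polynomial time, and the cost bound from the first paragraph completes the proof.
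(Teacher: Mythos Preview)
Your proposal is correct and follows essentially the same approach as the paper: show $I'$ is nontrivial and $(2,3)$-minimal via the structure of the $0/1/\text{all}$ relations, invoke Theorem~\ref{thm:bounded-width}, then bound the cost using $p_{v,A(v)} \geq 1/|D|$. The only cosmetic difference is that the paper organizes the case analyses for conditions (b) and (c) around the dichotomy ``is $\{a\} \times R_w \subseteq R_{u,w}$ or not?'' rather than explicitly naming the three relation types from Lemma~\ref{lem:01all}; your $S_a$/$T_b$ formulation and the paper's formulation are equivalent, and the key LP-concentration observation (a unique partner must carry all of $p_{u,a}$) is identical.
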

\begin{proof}
    We first show that $I'$ is indeed a (2,3)-minimal instance. We verify the conditions in Definition~\ref{def:2-3-minimal} one by one. For readers' convenience, we restate the conditions in italic.
    \begin{enumerate}
        \item[(a)] \emph{For every distinct $u, v\in V$, $R'_{v,u} = \{(b, a) \mid (a, b) \in R'_{u, v}\}$.} This is equivalent to the statement $(a, b) \in R'_{u, v} \Leftrightarrow (b, a) \in R'_{v, u}$. By symmetry, it is sufficient to show that $(a, b) \in R'_{u, v} \Rightarrow (b, a) \in R'_{v, u}$. Let $(a,b) \in R'_{u,v} \subseteq R_{u,v}$, then $a \in R'_u$ and $b \in R'_v$. Since $I$ is (2,3)-minimal, we have that $(b,a) \in R_{v, u}$, so $(b, a) \in R_{v, u} \cap (R'_v \times R'_u) = R'_{v, u}$.
        \item[(b)] \emph{For every distinct $u, v \in V$, $R_u' = \{a \mid \exists b \in R_v', (a, b) \in R_{u,v}'\}$.} By construction we have $ \{a \mid \exists b \in R_v', (a, b) \in R_{u,v}'\} \subseteq R_u'$. To show the other inclusion, let us pick $a \in R_u'$. By Lemma~\ref{lem:01all}, we have the following two cases:
        \begin{itemize}
            \item $\{a\} \times R_v \subseteq R_{u, v}$. In this case, we have $\{a\} \times R_v' \subseteq R_{u, v} \cap (R_u' \times R_v') = R_{u, v}'$. Since $R_v'$ is nonempty, we have that $a \in \{a \mid \exists b \in R_v', (a, b) \in R_{u,v}'\}$.
            \item $\{a\} \times R_v \not\subseteq R_{u, v}$. In this case, there is a unique $b_0 \in R_v$ such that $(a, b_0) \in R_{u,v}$. By the LP constraint, we have $p_{v, b_0} \geq p_{u, a} \geq 1/|D|$, and therefore $b_0 \in R_v'$ and $a \in \{a \mid \exists b \in R_v', (a, b) \in R_{u,v}'\}$.
        \end{itemize}
        \item[(c)] \emph{For every pairwise distinct $u,v,w \in V$ and $(a, b) \in R'_{u,v}$, there exists $c \in R'_w$ such that $(a, c) \in R'_{u,w}$ and $(b, c) \in R'_{v,w}$. } Similar to part (b), we again have two cases:
        \begin{itemize}
            \item $\{a\} \times R_w \not\subseteq R_{u, w}$ or $\{b\} \times R_w \not\subseteq R_{v, w}$. Since $I$ is (2,3)-minimal, there exists $c \in R_w$ such that $(a, c) \in R_{u,w}$ and $(b, c) \in R_{v,w}$. By Lemma~\ref{lem:01all}, $c$ is either the unique element in $R_w$ such that $(a, c) \in R_{u, w}$, or the unique element in $R_w$ such that $(b, c) \in R_{v, w}$. In either case, we can conclude that $p_{w, c} \geq \min(p_{u, a}, p_{v, b}) \geq 1 / |D|$, so $c \in R_w'$ as required. 
            \item $\{a\} \times R_w \subseteq R_{u, w}$ and $\{b\} \times R_w \subseteq R_{v, w}$. In this case, since $R'_w$ is non-empty, there exists some $c_0 \in R'_w \subseteq R_w$ so we have $(a, c_0) \in R_{u, w}'$ and $(b, c_0) \in R_{v, w}'$.
        \end{itemize}
    \end{enumerate}
    This establishes that $I'$ is (2,3)-minimal. Note that every binary relation in $I'$ is also of the three types described in Lemma~\ref{lem:01all}, so it is also preserved by the dual discriminator operation. By Theorem~\ref{thm:bounded-width}, we can find a satisfying assignment $A: V \to D$ for $I'$ in polynomial time. Note that for this assignment, we have
    \begin{align*}
        \sum_{v \in V}\cost(v, A(v)) & \leq \sum_{v \in V}|D| \cdot p_{v, A(v)} \cdot \cost(v, A(v)) \\ 
        & \leq |D| \cdot \sum_{v \in V} \sum_{a \in D} p_{v, a}\cdot\cost(v, a) \\
        & = |D| \cdot \LP(I). \qedhere
    \end{align*}
\end{proof}

\section{NU polymorphism as necessary condition for constant-factor approximability}\label{sec:nu_necessary}

In this section, we establish the following necessary condition for the constant-factor approximability of $\MinCostCSP(\Gamma)$ problem. We note that Dalmau et al.~\cite{dalmau2018towards} obtained a similar necessary condtion for the problem of MinCSP. We will closely follow their proof.

\begin{theorem}[Theorem~\ref{thm:intro_necessary} restated]\label{thm:necessary}
    Let $\Gamma$ be a constraint language such that $\MinCostCSP(\Gamma)$ has a constant-factor approximation, then $\Pol(\Gamma)$ contains a conservative NU polymorphism, unless P = NP.
\end{theorem}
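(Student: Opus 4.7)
The plan is to adapt the strategy of Dalmau et al.~\cite{dalmau2018towards} for $\MinCSP$ to the $\MinCostCSP$ setting, exploiting the fact that arbitrary unary cost functions are natively supported. By Observation~\ref{obs:mincost_unary}, we may assume without loss of generality that $\Gamma$ contains every nonempty unary relation on $D$; this forces every $f \in \Pol(\Gamma)$ to be conservative, so the conclusion ``$\Gamma$ has a conservative NU polymorphism'' is equivalent to ``$\Gamma$ has any NU polymorphism at all.'' We then contrapose: assuming $\Pol(\Gamma)$ contains no NU operation, we aim to show that $\MinCostCSP(\Gamma)$ is NP-hard to approximate within any constant factor.

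The algebraic-to-combinatorial bridge is the Galois correspondence between polymorphism clones and pp-definability. The absence of NU polymorphisms of every arity implies (in a direction essentially converse to Theorem~\ref{thm:nu_decompose}, via Baker--Pixley-type arguments) that for each $k$ there exists a relation $R_k$ pp-definable from $\Gamma$ which fails to be $k$-decomposable. Concretely, $R_k$ admits a ``phantom'' tuple $x^\star \notin R_k$ whose every $k$-subprojection coincides with a projection of a genuine tuple in $R_k$. This structural obstruction is the technical core of the reduction: $R_k$-gadgets can be glued along shared variables so that local $k$-consistency is always attainable, but global consistency costs an amount proportional to the number of violated source constraints, which is precisely what a gap-preserving reduction needs.

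With the obstruction in hand, the reduction proceeds from a source problem known to be NP-hard to approximate within any constant factor, of the type used by Dalmau et al. Each source variable and constraint is replaced by a $\Gamma$-gadget whose pp-definition is unfolded into base $\Gamma$-constraints with only constant blowup, and $\MinCostCSP$'s unary costs are used to penalize exactly those local assignments that correspond to unsatisfied source constraints. The main obstacle is the gap analysis: one must verify that a $c$-approximate $\MinCostCSP$ solution yields, after projection back to the source variables, an $O(c)$-approximate solution to the source instance, with constants depending only on $|D|$, $k$, and the length of the pp-definition. This is essentially Dalmau et al.'s technical content, and the adaptation to $\MinCostCSP$ is largely a matter of re-expressing their auxiliary $0/1$-valued penalty constraints as suitable unary cost functions, which $\MinCostCSP$ supports natively; if anything, this simplifies the bookkeeping rather than complicating it.
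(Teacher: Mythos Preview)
Your proposal has a genuine gap. The paper's proof (following Dalmau et al.~\cite{dalmau2018towards}) does \emph{not} proceed via a single reduction driven by Baker--Pixley non-decomposability; it splits into two cases according to whether $\Gamma$ has bounded width. In the unbounded-width case, Theorem~\ref{thm:unbounded-abelian} yields a first-power pp-interpretation of $\Gamma(G)$ for some nontrivial abelian group $G$, and the reduction is from the Nearest Codeword problem (Theorem~\ref{thm:ncp}). In the bounded-width case, Lemma~\ref{lem:nu_bdw_hvc} (which explicitly requires bounded width as a hypothesis) produces, for every $k$, a pp-definable $k$-ary relation $R$ and elements $a,b$ with $R \cap \{a,b\}^k = \{a,b\}^k \setminus \{(a,\ldots,a)\}$, and the reduction is from $k$-uniform Hypergraph Vertex Cover (Theorem~\ref{thm:hvc}).

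Your single-case plan founders at exactly the point you gloss over. The ``phantom tuple'' you extract from non-$k$-decomposability---some $x^\star \notin R_k$ all of whose $k$-projections are realized---is far weaker than what the gap-preserving reduction actually needs. For the HVC-style reduction to work, one needs the very rigid structure that $R$ restricted to a two-element set $\{a,b\}$ is \emph{precisely} the OR predicate $\{a,b\}^k \setminus \{(a,\ldots,a)\}$; a generic phantom tuple gives you no control over which nearby tuples lie in $R_k$, so you cannot assign unary costs that cleanly count violated source constraints. Obtaining that rigid structure is exactly what Lemma~\ref{lem:nu_bdw_hvc} does, and its proof uses bounded width in an essential way. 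Conversely, in the unbounded-width (affine) case the relevant relations are linear-equation-like and the right source problem is Nearest Codeword, not a covering problem. So you cannot avoid the case split, and you have not identified a single source problem that would serve both regimes; the sentence ``a source problem \ldots of the type used by Dalmau et al.'' is hiding the entire difficulty.
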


\subsection{Gadget reductions and primitive positive interpretations}

Let us first establish some sufficient condition for reductions between $\MinCostCSP$s. 

Given two constraint languages $\Gamma_1$ and $\Gamma_2$, we write $\MinCostCSP(\Gamma_1) \lcf \MinCostCSP(\Gamma_2)$ if the constant-factor approximability for $\MinCostCSP(\Gamma_2)$ implies the constant-factor approximability for $\MinCostCSP(\Gamma_1)$. By definition, $\lcf$ is transitive.

\begin{definition}[pp-definition]\label{def:pp_def}
    Let $\Gamma$ be a constraint language over $D$ and $R$ a $k$-ary relation over the same domain. We say that $\Gamma$ \emph{pp-defines} $R$, if there exist some $m > 0$ and a conjunction $\mathfrak{C}$ over variables $x_1, \ldots, x_k, y_1, \ldots, y_m$ consisting of relations in $\Gamma$ and the equality relation ($\eq_D := \{(u, v) \in D^2 \mid u = v\}$) over $D$, such that
    \[
    R(x_1, \ldots, x_k) \equiv \exists y_1\cdots \exists y_m \mathfrak{C}.
    \]
    If $\Gamma'$ is another constraint language over the same domain $D$, then we say that $\Gamma$ pp-defines $\Gamma'$ if $\Gamma$ pp-defines every relation in $\Gamma'$.
\end{definition}

The following theorem establishes a connection (often referred to as \emph{Galois correspondence} in the literature) between polymorphisms and pp-definitions. We say that a relation $R$ with arity $k$ is \emph{irreducible}, if for every distinct $i, j \in [k]$, there exists $(x_1, \ldots, x_k) \in R$ such that $x_i \neq x_j$.

\begin{theorem}[\cite{bondarchuk1969galois, geiger1968closed}]\label{thm:galois}
    Let $\Gamma$ be a constraint language and $R$ some $k$-ary relation over the same domain. Then we have
    \begin{itemize}
        \item $\Pol(\Gamma) \subseteq \Pol(R)$ if and only $\Gamma$ pp-defines $R$.
        \item If $R$ is irreducible, then $\Pol(\Gamma) \subseteq \Pol(R)$ if and only $\Gamma$ pp-defines $R$ without using the equality relation.
    \end{itemize}
\end{theorem}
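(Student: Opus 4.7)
The plan is to prove each direction of the biconditional separately, with the reverse direction carrying the real content. The forward direction (``$\Gamma$ pp-defines $R$ $\Rightarrow$ $\Pol(\Gamma) \subseteq \Pol(R)$'') is a routine structural induction on pp-formulas: any $f \in \Pol(\Gamma)$ preserves each $S \in \Gamma$ by definition and preserves $\eq_D$ trivially; preservation of relations with a common scope is closed under intersection (which corresponds to conjunction); and it is closed under projection (which corresponds to existential quantification, since witnesses for $\exists y\,\psi$ can be produced by applying $f$ component-wise to witnesses for the input tuples). Hence every relation pp-definable from $\Gamma$ is preserved by all of $\Pol(\Gamma)$.

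For the reverse direction, I would carry out the classical ``free algebra'' or ``indicator'' construction. Let $n = |R|$ and fix an enumeration $R = \{t_1, \ldots, t_n\}$ with $t_i \in D^k$. For each coordinate $j \in [k]$, form the \emph{projection column} $\pi_j = (t_1[j], t_2[j], \ldots, t_n[j]) \in D^n$. I would then build a pp-formula $\phi(x_1, \ldots, x_k)$ whose variables are indexed by all $|D|^n$ columns $c \in D^n$: identify each free variable $x_j$ with the bound variable $y_{\pi_j}$, existentially quantify over the remaining $y_c$, and for each $m$-ary $S \in \Gamma$ include the atom $S(y_{c_1}, \ldots, y_{c_m})$ whenever $(c_1, \ldots, c_m) \in (D^n)^m$ has the property that every row of the $n \times m$ matrix $[\,c_1\;c_2\;\cdots\;c_m\,]$ lies in $S$.

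The verification that $\phi$ defines $R$ is the crux. By unpacking the atom definition, a satisfying assignment $h : D^n \to D$ to $\phi$ is exactly an $n$-ary polymorphism of $\Gamma$: the matrix condition on $(c_1,\ldots,c_m)$ is a direct rewrite of the definition of $\Pol(S)$. So if $(a_1,\ldots,a_k)$ satisfies $\phi$ via some witnessing $h$, then $h \in \Pol(\Gamma) \subseteq \Pol(R)$, and applying $h$ row-wise to $t_1, \ldots, t_n \in R$ yields $(h(\pi_1),\ldots,h(\pi_k)) = (a_1,\ldots,a_k) \in R$. Conversely, every $t_i \in R$ is realized by taking $h$ to be the $i$-th projection operation $(c_1,\ldots,c_n) \mapsto c_i$, which is always a polymorphism of $\Gamma$ and satisfies $h(\pi_j) = t_i[j]$.

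For the irreducible refinement, observe that the only way the construction secretly uses equality is through the identification $x_j \equiv y_{\pi_j}$: this collapses two distinct free variables precisely when $\pi_j = \pi_{j'}$, equivalently when $t_i[j] = t_i[j']$ for every $i \in [n]$, i.e.\ when coordinates $j$ and $j'$ agree on every tuple of $R$. Irreducibility rules this out, so all $\pi_j$'s are distinct, each $x_j$ is attached to a different bound variable, and no $\eq_D$ atom is needed. The main obstacle in executing this plan is keeping the column/row bookkeeping in the construction transparent enough that the bijection ``satisfying assignments of $\phi$ $\leftrightarrow$ $n$-ary polymorphisms of $\Gamma$'' is manifest; everything else is then a short computation.
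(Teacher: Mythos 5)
The paper cites this result to the classical Galois-correspondence literature (Bodnarchuk et al.\ and Geiger) without supplying a proof, so there is no in-paper argument to compare against. Your proof via the indicator / free-algebra construction is the standard textbook route and is correct. The formula $\phi$ you build is exactly the canonical CSP instance whose solution set, viewed as maps $h : D^n \to D$ with $n = |R|$, coincides with the set of $n$-ary polymorphisms of $\Gamma$; membership of $(a_1,\dots,a_k)$ in the defined relation then amounts to the existence of such an $h$ with $h(\pi_j) = a_j$ for all $j$, and you correctly close both directions (the $i$-th projection realizes $t_i$, and $\Pol(\Gamma) \subseteq \Pol(R)$ forces $(h(\pi_1),\dots,h(\pi_k)) \in R$). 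Your reading of the irreducible refinement is also right: $\pi_j = \pi_{j'}$ iff every tuple of $R$ agrees on coordinates $j$ and $j'$, which is precisely what irreducibility forbids, so in that case the columns $\pi_1,\dots,\pi_k$ are pairwise distinct and no $\eq_D$ atom arises. Two small points to make explicit if you write this out in full: first, in the reducible case the ``identification'' $x_j \equiv y_{\pi_j}$ must actually be implemented via explicit $\eq_D(x_j, x_{j'})$ atoms whenever $\pi_j = \pi_{j'}$, so that $\phi$ retains $k$ free variables (one cannot literally substitute two distinct free variables for the same bound variable); second, the construction presupposes $R \neq \varnothing$ --- if $R = \varnothing$ and $k \geq 2$ then $n = 0$, $D^n$ is a singleton, and indeed the equivalence can fail (a $\Gamma$ consisting only of full relations cannot pp-define $\varnothing$ even though $\Pol(\Gamma) \subseteq \Pol(\varnothing)$ holds vacuously), so nonemptiness of $R$ is implicitly assumed, consistent with the paper's conventions.
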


\begin{definition}[pp-interpretation]
Let $\Gamma_1$ and $\Gamma_2$ be constraint languages over domains $D$ and $E$ respectively. We say that $\Gamma_1$ pp-interprets $\Gamma_2$ if there exist $n \in \mathbb{N}$, $F \subseteq D^n$, and a surjective function $f: F \to E$ such that $\Gamma_1$ pp-defines the following relations:
\begin{itemize}
    \item $F$ as an $n$-ary relation over $D$.
    \item For every $R \in \Gamma_2$ with some arity $k$, the relation 
    \[
    f^{-1}(R) = \{(x^{(1)}, x^{(2)}, \ldots, x^{(k)}) \in D^{kn} \mid x^{(i)} \in F \text{ for } i = 1, 2, \ldots, k, (f(x^{(1)}), \ldots, f(x^{(k)})) \in R\}
    \]
    Here each $x^{(i)}$ is a $n$-tuple over $D$ and we are thinking of $(x^{(1)}, x^{(2)}, \ldots, x^{(k)})$ as a flattened $kn$-tuple over $D$.
    \item The relation 
    \[
    f^{-1}(\mathrm{eq}_{E}) = \left\{(x^{(1)}, x^{(2)}) \in D^{2n} \mid x^{(i)} \in F \text{ for } i = 1, 2,\, f(x^{(1)}) = f(x^{(2)})\right\}
    \]
    Here again $x^{(1)}$ and $x^{(2)}$ are $n$-tuples over $D$ and $(x^{(1)}, x^{(2)})$ is a flattened $2n$-tuple.
\end{itemize}
\end{definition}

We say that $\Gamma_1$ pp-interpretes $\Gamma_2$ in the first power if $n = 1$ in the above definition. It is known in the case of standard decision $\CSP$ that the existence of a pp-interpretation implies a gadget reduction, where we simply replace constraints in $\Gamma'$ with constraints in $\Gamma$ using the pp-definitions. However, in the case of $\MinCostCSP$, for the purpose of the reduction we would also need to translate the costs between the two instances. This is straightforward for $n=1$, but there seems to be no natural way of doing this if we are using a pp-interpretation with $n \geq 2$. 

\begin{lemma}\label{lem:pp-interpretation-reduction}
    Let $\Gamma_1$ be a constraint language over $D$ and  $\Gamma_2$ a constraint language over $E$. If $\Gamma_1$ pp-interpretes $\Gamma_2$ in the first power and one of the following holds:
    \begin{itemize}
        \item The equality relation $\mathrm{eq}_D$ over $D$ is in $\Gamma_1$.
        \item Every $R \in \Gamma_2$ is irreducible.
    \end{itemize}
    Then $\MinCostCSP(\Gamma_2) \leq_\CF \MinCostCSP(\Gamma_1)$ 
\end{lemma}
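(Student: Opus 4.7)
The plan is to give a gadget reduction: start with an arbitrary instance $I_2 = (V_2, \mathcal{C}_2, \cost_2)$ of $\MinCostCSP(\Gamma_2)$ and build an instance $I_1 = (V_1, \mathcal{C}_1, \cost_1)$ of $\MinCostCSP(\Gamma_1)$ such that optima (and in fact the whole cost-structure of satisfying assignments) correspond. Since $n=1$, each variable $v \in V_2$ is literally a variable of $V_1$, which is what makes the cost translation possible. Concretely I would take $V_1 \supseteq V_2$, add the gadget pp-defining $F$ as a unary constraint on each $v \in V_2$ (introducing auxiliary variables that go into $V_1$), and, for every constraint $(R,(v_1,\ldots,v_k)) \in \mathcal{C}_2$, insert the gadget pp-defining $f^{-1}(R)$ on $(v_1,\ldots,v_k)$ together with fresh auxiliary variables. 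Costs are defined so that the $V_2$-variables carry the translated cost and the auxiliary variables are free: $\cost_1(v,a) = \cost_2(v,f(a))$ for $v \in V_2$ and $a \in F$, $\cost_1(v,a) = +\infty$ for $v \in V_2$ and $a \in D\setminus F$, and $\cost_1(w,a) = 0$ for auxiliary $w$ (the infinite costs can be replaced by suitably large finite values as in Observation~\ref{obs:mincost_unary}).

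Next I would verify the cost-preserving bijection between satisfying assignments. Given a satisfying $A_2$ of $I_2$, pick any $a_v \in f^{-1}(A_2(v))$ for $v \in V_2$ (possible by surjectivity of $f$), set $A_1(v) = a_v$, and extend to auxiliary variables using the existential witnesses in the pp-definitions of $F$ and $f^{-1}(R)$; the translated cost is exactly $\cost_2(A_2)$. Conversely, any satisfying $A_1$ of $I_1$ with finite cost forces $A_1(v) \in F$ on all $v \in V_2$, and then $A_2(v) := f(A_1(v))$ satisfies every original constraint because $A_1$ satisfying the gadget for $f^{-1}(R)$ means $(A_1(v_1),\ldots,A_1(v_k)) \in f^{-1}(R)$, which unpacks to $(A_2(v_1),\ldots,A_2(v_k)) \in R$. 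This gives $\Opt(I_1) = \Opt(I_2)$, and an $\alpha$-approximation for $\MinCostCSP(\Gamma_1)$ applied to $I_1$ yields, after projecting through $f$, an $\alpha$-approximation for $\MinCostCSP(\Gamma_2)$.

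The main subtlety, and the place where the two hypotheses enter, is that the pp-definition of $f^{-1}(R)$ guaranteed by the definition of pp-interpretation may use the equality relation $\eq_D$, which is not a priori in $\Gamma_1$. If $\eq_D \in \Gamma_1$ (first bullet), then the pp-definitions are directly usable as gadgets. Otherwise I invoke the second bullet: since every $R \in \Gamma_2$ is irreducible, so is $f^{-1}(R)$, because for distinct coordinates $i,j$ a witnessing tuple $(e_1,\ldots,e_k) \in R$ with $e_i \neq e_j$ lifts to any $(a_1,\ldots,a_k) \in f^{-1}(R)$ with $a_l \in f^{-1}(e_l)$, and $f(a_i) \neq f(a_j)$ forces $a_i \neq a_j$. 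The relation $F$ is unary, hence vacuously irreducible. By the second part of Theorem~\ref{thm:galois}, both $F$ and every $f^{-1}(R)$ admit pp-definitions in $\Gamma_1$ that do not use $\eq_D$, so the gadgets can be realized entirely within $\Gamma_1$.

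The hardest part of the proof is simply being careful with the irreducibility argument and with the existence of the existential witnesses in the pp-definitions when lifting $A_2$ to $A_1$; both reduce to unpacking the definitions. Note that $f^{-1}(\eq_E)$, although required to exist in the pp-interpretation, is not actually used in the reduction, because in the first power the gadget on constraint $(R,(v_1,\ldots,v_k))$ uses the very same variables $v_1,\ldots,v_k$ of $V_1$ (repetitions included), so equality of $E$-images of the same variable is automatic.
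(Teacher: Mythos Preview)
Your proposal is correct and follows essentially the same gadget-reduction approach as the paper: translate each constraint $(R,S)$ into the pp-definition of $f^{-1}(R)$ with free auxiliary variables, push the costs through $f$ on the original variables (with $+\infty$ outside $F$), and read off the approximation guarantee from the correspondence between satisfying assignments. Your write-up is in fact slightly more careful than the paper's in one place: you explicitly verify that $f^{-1}(R)$ is irreducible whenever $R$ is (so that the equality-free part of Theorem~\ref{thm:galois} applies to $f^{-1}(R)$, not just to $R$), a step the paper leaves implicit; your additional $F$-gadget is harmless but redundant, since the $+\infty$ costs already force $A_1(v)\in F$.
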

\begin{proof}
    Let $F \subseteq D$ and $f: F \to E$ be as in the definition of pp-interpretation. Let $I_2 = (V, \mathcal{C}, \cost)$ be a $\MinCostCSP(\Gamma_2)$ instance. We define a $\MinCostCSP(\Gamma_1)$ instance $I_1 = (V, \mathcal{C}', \cost')$ as follows: 
    \begin{itemize}
        \item $I_1$ has the same set of variables $V$ as $I_2$.
        \item For each constraint $C = (R, S) \in \mathcal{C}$, we would like to add a constraint $C' = (f^{-1}(R), S)$ to $\mathcal{C}'$. To do this, without loss of generality assuming $S = \{x_1, \ldots, x_k\}$, we take the pp-definition of $f^{-1}(R)$, which is of the form
        \[
        \exists y_1 \cdots \exists y_m \mathfrak{C}.
        \]
        Here $y_1, \ldots, y_m$ are auxiliary variables with zero costs and $\mathfrak{C}$ is a conjunction of constraints, each being either a relation from $\Gamma_1$ or $\eq_D$ applied to some of the variables in $\{x_1, \ldots, x_k, y_1, \ldots, y_m\}$. Now if $\eq_D \in \Gamma_1$, then we can think of $\mathfrak{C}$ as a $\CSP(\Gamma_1)$ instance with variables being $\{x_1, \ldots, x_k, y_1, \ldots, y_m\}$, and this instance can be satisfied if and only $(x_1, \ldots, x_k) \in f^{-1}(R)$, so we add (the constraints and the auxiliary variables of ) this instance to $I_2$. If $\eq_D \not\in \Gamma_1$, but every $R \in \Gamma_2$ is irreducible, then by Theorem~\ref{thm:galois}, we may assume that $\mathfrak{C}$ does not contain $\eq_D$ and therefore we can still write it as an instance of $\CSP(\Gamma_1)$ and add it to $I_2$.
        \item For each variable $x \in V$ and $a \in D$, define $\cost'(x, b) = \left\{\begin{array}{ll}
            \cost(x, a) & \text{if } b \in f^{-1}(a) \text{ for some } a \in E, \\
            +\infty & \text{otherwise.}
        \end{array}\right.$
    \end{itemize}

    Clearly, for every satisfying assignment $A: V \to E$ to $I_2$, we may define $A': V \to F$ such that $A'(x) \in f^{-1}(A(x))$, and $A'$ will be a satisfying assignment to $I_1$ with the same cost. In particular, this means that $\Opt(I_1) \leq \Opt(I_2)$.

    Now if we have a constant-factor approximation algorithm for $\MinCostCSP(\Gamma_1)$, we can use it to obtain a solution $A_1: V \to D$ such that $\cost_{I_1}(A_1) \leq t \cdot \Opt(I_1)$ for some constant $t$ independent of $I_1$. In fact, we may assume $A_1: V \to F$, since every label not in $F$ has infinite cost. Take $A_2: V \to E, x \mapsto f(A_1(x))$, then by construction $A_2$ is a satisfying assignment for $I_2$, and
    \[
    \cost_{I_2}(A_2) =\cost_{I_1}(A_1) \leq t \cdot \Opt(I_1) \leq t \cdot \Opt(I_2).
    \]
    Thus we obtain a constant-factor approximation algorithm for $\MinCostCSP(\Gamma_2)$ as well.
\end{proof}

We refer interested readers to the survey by Barto el al.~\cite{barto_et_al:DFU.Vol7.15301.1} which contains a more detailed exposition on pp-interpretation (and its generalization \emph{pp-construction}) in the context of decision CSPs.

\subsection{Proof of Theorem~\ref{thm:necessary}}

The proof contains two cases: either $\Gamma$ has unbounded width, or it has bounded width. We use the following characterization for bounded-widthness of constraint languages. 

\begin{theorem}[\cite{barto2014constraint, dalmau2013robust}]\label{thm:unbounded-abelian}
    Let $\Gamma$ be a constraint language that contains all singleton relations. $\Gamma$ is not bounded width if and only if there exists some nontrivial finite abelian group $G$ such that $\Gamma$ pp-interprets $\Gamma(G)$ in the first power using pp-definitions without equality.
\end{theorem}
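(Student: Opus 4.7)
The plan is to prove both directions of the equivalence, one combinatorial and one deeply algebraic. The easier ``if'' direction is handled by transferring known unboundedness. Suppose $\Gamma$ pp-interprets $\Gamma(G)$ in the first power for some nontrivial finite abelian group $G$. It is classical that $\CSP(\Gamma(G))$, the problem of solving linear systems over $G$, is not of bounded width: for every fixed $k$, one exhibits unsatisfiable Tseitin-style linear systems over $G$ whose every $k$-variable projection remains satisfiable. A first-power pp-interpretation yields a gadget reduction that multiplies arities and sizes only by a constant depending on the pp-definitions, and such a reduction is compatible with local consistency up to a rescaling of the consistency parameter, so failure of bounded width transfers from $\Gamma(G)$ to $\Gamma$.

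For the harder ``only if'' direction, assume $\Gamma$ has unbounded width. Since $\Gamma$ contains all singleton relations, $\Pol(\Gamma)$ consists of idempotent (in fact conservative) operations. The strategy is to use the algebraic machinery of tame congruence theory on the algebra $\mathbf{A} = (D, \Pol(\Gamma))$ to locate an abelian group inside its polymorphism structure, and then package that group as a pp-interpretation. The central algebraic input is the Barto--Kozik theorem settling the Larose--Z{\'a}dori bounded-width conjecture: a finite idempotent algebra has bounded width if and only if its polymorphism clone admits weak near-unanimity operations of all sufficiently large arities, equivalently the variety it generates omits tame-congruence types $\mathbf{1}$ and $\mathbf{2}$. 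Idempotency rules out type $\mathbf{1}$ on minimal sets, so unbounded width forces the appearance of a type-$\mathbf{2}$ (affine) congruence in some finite subpower of $\mathbf{A}$: some subalgebra admits a quotient polynomially equivalent to a module over a finite ring, and the additive group $G$ of that module is a nontrivial finite abelian group.

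To convert this algebraic witness into a first-power equality-free pp-interpretation of $\Gamma(G)$, one identifies a trace pair inside $D$ where the type-$\mathbf{2}$ behavior is realized. The minimal set is pp-definable as a unary relation over $D$; the equivalence that collapses the trace onto the module carrier $G$ is pp-definable using the Mal'tsev-like ternary polynomial (satisfying $d(x,y,y) = d(y,y,x) = x$) that is guaranteed to be a term operation in type $\mathbf{2}$; and the graph of the module addition is pp-definable using further polynomial combinations of this Mal'tsev operation with constants. Because the relations in the affine fragment are irreducible (no two coordinates can be forced equal in a coset of a nontrivial subgroup), the second clause of Theorem~\ref{thm:galois} allows all these pp-definitions to be chosen without the equality relation, matching the statement of the theorem.

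The main obstacle is the ``only if'' direction, and within it the step of producing an \emph{explicit first-power equality-free} pp-interpretation rather than a more general pp-construction. Tame congruence theory readily supplies an abstract module somewhere in the variety generated by $\mathbf{A}$, but extracting a pp-interpretation sitting directly on the original domain $D$---without expanding to a higher power, and without resorting to equality---requires Barto--Kozik's sharpened analysis of type-$\mathbf{2}$ traces in conservative algebras together with the Dalmau--Krokhin packaging in \cite{barto2014constraint, dalmau2013robust}. This is precisely the form needed for the reduction lemmas of Section~\ref{sec:nu_necessary}, since the first-power and equality-free conditions are exactly the hypotheses under which Lemma~\ref{lem:pp-interpretation-reduction} turns pp-interpretations into approximation-preserving reductions.
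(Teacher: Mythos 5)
The paper does not prove this theorem; Theorem~\ref{thm:unbounded-abelian} is imported as a black box from \cite{barto2014constraint, dalmau2013robust}, so there is no in-paper argument to compare against. Judged on its own, your sketch invokes the right framework (tame congruence theory, the Barto--Kozik bounded-width characterization, localizing an affine module and packaging it as a pp-interpretation, the ``if'' direction via Tseitin-style systems and gadget transfer), but it contains two concrete errors in the ``only if'' direction.

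First, you assert that because $\Gamma$ contains all singleton relations, $\Pol(\Gamma)$ is ``in fact conservative.'' This is wrong: preserving all singleton relations yields idempotency ($f(a,\ldots,a)=a$), not conservativity ($f(x_1,\ldots,x_k)\in\{x_1,\ldots,x_k\}$). Conservativity corresponds to preserving \emph{all} unary relations, a strictly stronger hypothesis that the theorem does not assume, and the ``sharpened analysis of type-$\mathbf{2}$ traces in conservative algebras'' you lean on is therefore not available in the generality required. Second, you claim ``Idempotency rules out type $\mathbf{1}$ on minimal sets,'' so that unbounded width must produce type $\mathbf{2}$. This is false: the idempotent clone of projections on a two-element domain is the textbook example of a finite idempotent algebra exhibiting type $\mathbf{1}$, and a constraint language like 3-SAT augmented with singletons is unbounded width of type $\mathbf{1}$. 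The Barto--Kozik theorem says unbounded width forces type $\mathbf{1}$ \emph{or} type $\mathbf{2}$ somewhere in the generated variety; your sketch silently drops the type-$\mathbf{1}$ branch, which requires a separate argument (pp-interpreting a projection algebra and hence all finite structures, including $\Gamma(\mathbb{Z}_2)$, again in the first power and without equality). Finally, you candidly flag that the genuinely hard step --- landing a \emph{first-power}, \emph{equality-free} pp-interpretation rather than a generic pp-construction sitting in some power $\AAA^n$ --- is not carried out; that acknowledgment is fair, but it is precisely the technical content of the cited works, and the sketch as written does not close that gap.
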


Here $G$ being nontrivial means it has at least 2 elements, and $\Gamma(G)$ is the set of relations $\{R_{abc} = \{(x, y, z) \in G^3 \mid ax + by + cz = 0\} \mid a,b,c\in \mathbb{Z}\}$\footnote{Here $ax$ denotes the sum of $a$ copies of $x$. Note that this is a finite set of relations, since $G$ is a finite group.} over $G$. 

In the unbounded-width case we shall use a reduction from the Nearest Codeword problem, and in the bounded-width case we reduce from the hypergraph vertex cover problem. 

\begin{definition}
    In the Nearest Codeword problem over a finite field $\mathbb{F}_p$, we are given a matrix $A \in \mathbb{F}_p^{m \times n}$ and a vector $x \in \mathbb{F}_p^n$, and we are asked to find a vector $y \in \mathbb{F}_p^n$ such that $Ay = 0$ and the number of nonzero entries in $x - y$ is minimized.
\end{definition}

\begin{definition}
    In the $k$-uniform hypergraph vertex cover problem, we are given a $k$-uniform hypergraph (namely, each hyperedge is contains $k$ vertices), and our goal is to choose a minimum number of vertices so that from each hyperedge we have chosen at least one vertex. 
\end{definition}

The following theorems give the best known NP-hardness results for approximating these two problems. 

\begin{theorem}[\cite{dumer2003hardness,cheng2012deterministic}]\label{thm:ncp}
    The Nearest Codeword problem over any finite field $\mathbb{F}_p$ is NP-hard to approximate within a factor of $2^{\log^{1-\epsilon}(n)}$, for any constant $\epsilon > 0$.
\end{theorem}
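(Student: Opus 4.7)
My plan is to follow the two-step template of Arora-Babai-Stern-Sweedyk for the Nearest Codeword Problem, combined with the finite-field and derandomisation refinements of Dumer-Micciancio-Sudan and Cheng-Wan that give unconditional NP-hardness over any $\mathbb{F}_p$. First I obtain a constant-factor inapproximability for NCP from the PCP theorem, then amplify the gap to $2^{\log^{1-\epsilon} n}$ via a self-reducibility argument based on tensor products of codes.

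\textbf{Step 1: Constant-factor hardness from PCP.} I would reduce a gap version of Label Cover (produced by the PCP theorem, giving constant-factor NP-hardness of minimising the number of unsatisfied edges) to NCP over $\mathbb{F}_p$. Given a Label Cover instance with variables $V$, label set $L$, and projection constraints $\{\pi_e\}$, encode each (variable, label) pair as a coordinate of an $\mathbb{F}_p$-vector. Build a parity-check matrix $A$ whose rows express (i) ``exactly one label per variable'' as $\mathbb{F}_p$-linear constraints and (ii) the edge consistency constraints $\pi_e$; let $x$ be the natural target vector whose support identifies a candidate labelling. Then $\min_{y:Ay=0}\|x-y\|_0$ is within a constant of the minimum number of unsatisfied Label Cover edges, producing an NCP instance with a constant gap $\alpha_0 > 1$.

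\textbf{Step 2: Tensor amplification.} Starting from $(A,x)$ with gap $\alpha_0$, I would construct an $\ell$-fold tensor instance $(A^{\otimes \ell}, x^{\otimes \ell})$ over $\mathbb{F}_p$ and argue, via a carefully designed forward/backward reduction, that an $\alpha_0^{\ell}$-approximation of NCP on the tensor instance yields an $\alpha_0$-approximation of the original. Choosing $\ell = \log^{1-\epsilon} n$ produces an instance of dimension $N \le n^{\ell}$ whose inapproximability factor is $\alpha_0^{\ell} = 2^{\Omega(\log^{1-\epsilon} N)}$, which gives the desired bound once parameters are balanced so that $N$ is polynomial in the original PCP size.

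\textbf{Main obstacle.} The crux is making the tensor amplification genuinely gap-preserving over a fixed field $\mathbb{F}_p$. Naively, the tensored code $C^{\otimes \ell} = \ker A^{\otimes \ell}$ can contain coset representatives of $x^{\otimes \ell}$ whose Hamming weight is far smaller than $(\mathrm{dist}(x, C))^{\ell}$, which would collapse the multiplicative gap and destroy the argument. The Dumer-Micciancio-Sudan analysis resolves this by augmenting the tensored code with a random linear ``mask'' that, with high probability, suppresses spurious short coset representatives; the Cheng-Wan refinement replaces this randomness with an explicit algebraic construction (using Reed-Solomon / Reed-Muller type codes) so that the hardness becomes deterministic and holds over every finite $\mathbb{F}_p$. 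Verifying that this gap-preserving tensoring can be iterated $\log^{1-\epsilon} n$ times while keeping the final instance of polynomial size is the most delicate step in the proof.
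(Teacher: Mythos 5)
The paper does not prove this theorem; it invokes it as a black-box citation to the Dumer--Micciancio--Sudan and Cheng--Wan papers (which in turn build on Arora--Babai--Stern--Sweedyk), so there is no paper proof to compare against. Evaluating your proposal on its own terms, the overall skeleton (PCP $\Rightarrow$ constant-gap NCP $\Rightarrow$ amplification) is the right genealogy, but two of the central steps do not go through as written.

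First, the parameter balancing in Step 2 is broken. Taking $\ell = \log^{1-\epsilon} n$ and forming $(A^{\otimes\ell}, x^{\otimes\ell})$ yields an instance of size $N = n^{\ell} = 2^{\log^{2-\epsilon} n}$, which is \emph{quasi}-polynomial in $n$, not polynomial. That reduction would only establish hardness under the assumption $\mathrm{NP} \not\subseteq \mathrm{DTIME}(2^{\mathrm{polylog}\,n})$, not $\mathrm{P} \neq \mathrm{NP}$. You cannot simply ``balance parameters so that $N$ is polynomial'': with a constant starting gap $\alpha_0$ and naive self-tensoring, a polynomial bound on $N$ forces $\ell = O(1)$ and hence a constant final gap. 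The actual NP-hardness amplification must avoid this exponential blowup, e.g.\ by composing with small auxiliary instances or by working with a Label Cover instance whose gap is already super-constant with a near-linear size PCP, rather than by $\ell$-fold self-tensoring.

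Second, the obstacle you identify and the fix you cite belong to the \emph{Minimum Distance Problem}, not NCP, and the tensoring step doesn't even have the right completeness for NCP. The DMS random-masking trick and the Cheng--Wan derandomization are designed to handle the fact that a tensor code $C_1 \otimes C_2$ can have minimum distance smaller than $d_1 d_2$. For NCP the situation is worse in the opposite direction: if $c$ is the nearest codeword to $x$ and $e = x - c$, then $x^{\otimes 2} = c^{\otimes 2} + (c\otimes e + e\otimes c) + e^{\otimes 2}$, and the cross terms $c \otimes e + e \otimes c$ are generally \emph{not} in $C \otimes C$, so one cannot even conclude $d(x^{\otimes 2}, C^{\otimes 2}) \le d(x,C)^2$. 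So the tensoring of the target $x^{\otimes\ell}$ does not give the claimed completeness, independently of any soundness issues. A correct proof must use a construction that genuinely multiplies coset distances (the ABSS self-improvement gadget, or a reduction through an intermediate problem) rather than raw tensoring of $(A,x)$.
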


\begin{theorem}[\cite{dinur2005new}]\label{thm:hvc}
    The $k$-uniform hypergraph vertex cover problem is NP-hard to approximate within a factor of $k-1- \epsilon$, for any $k \geq 3$ and $\epsilon > 0$.
\end{theorem}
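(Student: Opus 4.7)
The plan is to follow the framework of Dinur, Guruswami, Khot, and Regev, reducing from a multi-layered smooth Label Cover instance to the $k$-uniform hypergraph vertex cover problem, and then analyzing the reduction using Fourier analysis over the $p$-biased Boolean cube. The argument decomposes into three stages.

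First, I would construct a multi-layered, smooth Label Cover instance $\Phi$ with the following structure: the variables are partitioned into $\ell$ layers; projection constraints run between every pair of layers; the gap guarantee is perfect completeness versus soundness $\delta$ (for $\delta$ as small as we wish); and a \emph{smoothness} property holds, namely, for any variable $v$ and any small subset $T$ of its label set, a random projection out of $v$ maps $T$ injectively with probability arbitrarily close to $1$. Such an instance is built from the PCP theorem combined with Raz's parallel repetition plus a layered composition, and the smoothness property, standard since Khot's work, is preserved under layering.

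Second, I would define the hypergraph $H$ by placing, for each variable $v$ of $\Phi$ with label set $[R]$, a \emph{cloud} of $2^R$ vertices indexed by $\{0,1\}^R$, thinking of each vertex as the indicator of a set of ``selected'' labels, measured under the $p$-biased distribution on the cube. For each $k$-tuple of variables $(v_1,\ldots,v_k)$ drawn from $k$ distinct layers that share a common neighbor $w$ with associated projections $\pi_1,\ldots,\pi_k$, I would add a hyperedge on $(x_1,\ldots,x_k)$ for every choice of indicators $x_i\in\{0,1\}^R$ satisfying the combinatorial non-covering condition that there is no label $\ell$ of $w$ with $x_i(\pi_i(\ell))=1$ for all $i\in[k]$. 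Completeness is then immediate: given a satisfying assignment $A$ of $\Phi$, in each cloud pick the $x$ with $x(A(v))=1$; the chosen set has $p$-biased mass $p$, and no hyperedge is uncovered since the common label under $A(w)$ produces a simultaneous $1$.

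Third, the soundness direction is to show that any vertex cover whose $p$-biased fractional mass is below $(1-p)(k-1)/k - \eta$ must let us decode an assignment to $\Phi$ satisfying more than $\delta$ of the constraints, contradicting the Label Cover hardness. The key step is a Fourier-analytic influence lemma on the $p$-biased cube, in the spirit of Kahn--Kalai--Linial and Bourgain: for each cloud, the indicator of the cover either has small enough mass to violate the cover property on a typical hyperedge, or contains a coordinate of non-negligible influence; the small set of high-influence coordinates then yields a short candidate label list for $v$, and smoothness of $\Phi$ guarantees these lists are consistent along projections with non-negligible probability, yielding a good global labeling. Taking $p\to 1/k$ and $\ell$ large gives the inapproximability ratio $k-1-\epsilon$.

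The main obstacle is calibrating the soundness analysis so that the biased long-code gadget, the multi-layered structure, and the quantitative influence theorem fit together with total loss absorbed by $\epsilon$. In particular, the tight constant $k-1$ (rather than a weaker $k - O(1)$) relies on exploiting the layered Label Cover to average the biased influence bound across many choices of layer patterns, so the delicate part is the bookkeeping that converts a marginally small cover into enough high-influence decodings across layers to find a consistent global assignment in the Label Cover.
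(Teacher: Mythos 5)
Your sketch does not match the argument in the cited source~\cite{dinur2005new} (Dinur, Guruswami, Khot, and Regev), and the substitution you make is not a harmless variant; it changes the crux of the proof and, as written, does not obviously deliver the stated factor. DGKR do build a multi-layered Label Cover and a biased long-code hypergraph, so your first two stages are in the right spirit, but two points diverge.

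First, the property of the multi-layered PCP they rely on is \emph{weak density} (any sizeable sets of variables taken from two layers must induce many inter-layer constraints), not \emph{smoothness} (projections injective on small label sets with high probability). These are different guarantees serving different purposes: weak density lets one aggregate decodings across layers, while smoothness is the tool used in Khot's coloring/other reductions and in the UGC-based hypergraph VC arguments. Swapping one for the other is not automatic.

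Second, and more importantly, DGKR's soundness analysis is \emph{not} Fourier-analytic. The core technical lemma is an extremal set-theory statement about $s$-wise $t$-intersecting families in the $p$-biased cube (in the spirit of Frankl and Ahlswede--Khachatrian): if the uncovered part of a cloud has $\mu_p$-measure exceeding a sharp threshold, it cannot be $(k-1)$-wise intersecting, and a structural ``core''/sunflower argument extracts a short list of candidate labels directly. The tight constant $k-1$ (as opposed to some $k-O(1)$) comes precisely from the sharpness of this combinatorial bound. The KKL/Bourgain-style influence decoding you propose is the engine of the \emph{UGC-based} hypergraph vertex cover hardness of Bansal and Khot (factor $k-\epsilon$), where the near-bijective structure of Unique Games projections makes influence decoding lossless. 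Under plain (projection) Label Cover, even with smoothness, influence decoding is lossier, and you have not argued how to recover the exact $k-1$ threshold; indeed, the soundness threshold you write down, $(1-p)(k-1)/k - \eta$, is below the completeness value $1-p$, which would give a ratio less than $1$ and cannot be the intended bound. So there is a genuine gap: the Fourier route needs a quantitative influence-to-decoding step calibrated to reproduce the Frankl-type threshold, and you have not supplied it, whereas the cited proof simply invokes the extremal theorem directly.

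If you want to prove the theorem in the way the reference does, the place to focus your effort is the intersecting-families lemma and the weak-density property of the multi-layered PCP, not smoothness or Fourier influences. If instead you are set on the influence route, you should be explicit that you are departing from DGKR, carefully restate what smooth Label Cover buys you over weak density, and rework the completeness/soundness calibration — the $(1-p)(k-1)/k$ threshold and ``$p\to 1/k$'' limit do not fit together into a factor $k-1-\epsilon$.
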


We remark that if we further assume the Unique Games Conjecture, then we can improve the hardness factor for $k$-uniform hypergraph vertex cover from $k-1-\epsilon$ to $k-\epsilon$ \cite{bansal2010inapproximability}, but this difference does not matter for us here.

The following is a simple corollary from the hardness of the Nearest Codeword problem.

\begin{corollary}
    Let $\Gamma_p$ be the set of all relations of the form $R_{abc} = \{(x, y, z) \in \mathbb{F}_p^3 \mid ax + by + cz = 0\}$\footnote{Here $ax$ denotes the $\mathbb{F}_p$ multiplication between field elements $a, x \in \mathbb{F}_p$.} where $a, b, c \in \mathbb{F}_p \backslash \{0\}$ over some finite field $\mathbb{F}_p$. Then $\MinCostCSP(\Gamma_p)$ is NP-hard to approximate within a factor of $2^{\log^{1-\epsilon}(n)}$.
\end{corollary}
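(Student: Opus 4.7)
The plan is to give an approximation-preserving reduction from the Nearest Codeword Problem (NCP) over $\mathbb{F}_p$ to $\MinCostCSP(\Gamma_p)$, and then invoke Theorem~\ref{thm:ncp}. Given an NCP instance $(A,x)$ with $A\in\mathbb{F}_p^{m\times n}$, I will build a $\MinCostCSP(\Gamma_p)$ instance whose main variables are $y_1,\dots,y_n$, with costs $\cost(y_i,x_i)=0$ and $\cost(y_i,a)=1$ for $a\neq x_i$. Under this choice, the cost of any assignment to the $y_i$'s equals the Hamming distance $|\{i:y_i\neq x_i\}|$, which is exactly the NCP objective.

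The core step is to express each row equation $\sum_j \alpha_j y_j = 0$ of $Ay=0$ using only relations $R_{abc}\in\Gamma_p$, in which all three coefficients must be nonzero. Let $k$ be the number of indices with $\alpha_j\neq 0$ (zero-coefficient terms are discarded). When $k\geq 3$, I introduce auxiliary variables $z_1,\dots,z_{k-3}$ whose $\cost$ is $0$ on every label, and impose the chain
\[
\alpha_1 y_1 + \alpha_2 y_2 - z_1 = 0,\quad z_1 + \alpha_3 y_3 - z_2 = 0,\quad\ldots,\quad z_{k-3}+\alpha_{k-1}y_{k-1}+\alpha_k y_k = 0.
\]
Each such constraint is some $R_{abc}\in\Gamma_p$, since the $\alpha_j$ are nonzero and $-1\neq 0$ in $\mathbb{F}_p$. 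When $k\in\{1,2\}$, I pad the equation with one or two extra variables $w$ whose cost is $0$ at label $0$ and $+\infty$ elsewhere, forcing $w=0$; this makes the padded equation ternary and equivalent to the original. The construction uses $\mathrm{poly}(m,n)$ variables and constraints.

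The remaining task is to verify cost preservation and transfer the hardness. Any satisfying assignment of the new instance restricts to a vector $y$ with $Ay=0$ of the same cost, while every feasible NCP solution extends uniquely (set each $z_i$ to the prescribed partial sum and each $w$ to $0$) to a satisfying assignment of the same cost. Hence a factor-$\rho$ approximation for $\MinCostCSP(\Gamma_p)$ yields a factor-$\rho$ approximation for NCP on the original instance; since the blowup is polynomial we have $\log n' = O(\log n)$, so Theorem~\ref{thm:ncp}'s hardness factor $2^{\log^{1-\epsilon}(n)}$ transfers to $\MinCostCSP(\Gamma_p)$ after replacing $\epsilon$ by any slightly smaller constant. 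The only delicate points are ensuring every synthesized ternary constraint has three nonzero coefficients (handled by the $\pm 1$ choice in the chain) and that the auxiliaries contribute zero cost under a correctly-extended solution, both of which are immediate from the construction.
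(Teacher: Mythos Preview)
Your proposal is correct and follows essentially the same approach as the paper: both reduce Nearest Codeword to $\MinCostCSP(\Gamma_p)$ by introducing the $y_i$ with Hamming-distance costs, breaking long equations into ternary ones via zero-cost auxiliary partial-sum variables, and handling short equations by padding with variables forced to $0$ via infinite cost. The only cosmetic difference is that the paper first shows how to simulate $R_{abc}$ with possibly zero coefficients (via a forced-zero dummy) and then reduces arity, whereas you discard zero-coefficient terms up front and pad only when $k\le 2$; these are the same idea in a different order.
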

\begin{proof}
    We show how to cast the Nearest Codeword problem over $\mathbb{F}_p$ as a $\MinCostCSP(\Gamma_p)$ instance. We first add the variables $y_1, \ldots, y_n$ denoting entries of $y$. For each $y_i$, we impose a cost of 1 if it is not equal to $x_i$, and 0 if it is equal to $x_i$. This models the minimum Hamming weight requirement. To model the constraint $Ay = 0$, we first note that $\Gamma_p$ can also be used to simulate $R_{abc} = \{(x, y, z) \in \mathbb{F}_p^3 \mid ax + by + cz = 0\}$ if one of $a, b, c$ is zero. This can be achieved as follows: to obtain the constraint $ax + by = 0$, we create a dummy variable $z$ and add the constraint $R_{abc}(x, y, z)$ for an arbitrary nonzero $c$, and then we set the non-zero label costs for $z$ to be all infinite and its zero label cost to be just 0, effectively forcing this variable to be zero. This can be extended easily if two of $a, b, c$ are zeros.
    
    Now we need to transform each linear equation of the form $\sum_{i = 1}^m A_{ki}y_i = 0$ so that left hand side contains at most 3 variables. This can be done via the standard trick: by introducing a new auxiliary variable $z$ (which has zero cost for any label), we may rewrite $\sum_{i = 1}^m A_{ki}y_i = 0$ equivalently as $A_{k1}y_1 + A_{k2}y_2 + z = 0$ and $-z + \sum_{i = 3}^m A_{ki}y_i = 0$, and thereby reducing the number of variables on the left hand side by 1. We repeat this procedure until all equations have at most 3 variables on the left hand side. It is clear that the resulting instance is a $\MinCostCSP(\Gamma_p)$ instance which is equivalent to the original Nearest Codeword instance. By Theorem~\ref{thm:ncp}, we can therefore conclude that $\MinCostCSP(\Gamma_p)$ is also NP-hard to approximate within a factor of $2^{\log^{1-\epsilon}(n)}$.
\end{proof}

\begin{lemma}\label{lem:cf_bdwidth}
    Let $\Gamma$ be a constraint language such that $\MinCostCSP(\Gamma)$ has a constant-factor approximation, then $\Gamma$ is bounded-width unless P = NP.
\end{lemma}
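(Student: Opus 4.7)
The plan is to prove the contrapositive: if $\Gamma$ is not bounded width, then $\MinCostCSP(\Gamma)$ is NP-hard to approximate within any constant factor. By Observation~\ref{obs:mincost_unary} I may assume $\Gamma$ contains all nonempty unary relations. Theorem~\ref{thm:unbounded-abelian} then yields a nontrivial finite abelian group $G$ such that $\Gamma$ pp-interprets $\Gamma(G)$ in the first power using equality-free pp-definitions. I would fix a prime $p$ dividing $|G|$ and an element $g \in G$ of order $p$, so that $H := \langle g \rangle \cong \mathbb{F}_p$.

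The key step is to construct a first-power pp-interpretation from $\Gamma(G)$ to $\Gamma_p$ (the constraint language from the corollary to Theorem~\ref{thm:ncp}) via the isomorphism $\varphi \colon H \to \mathbb{F}_p$ sending $ig$ to $i$. The set $H = \{x \in G : px = 0\}$ is pp-definable in $\Gamma(G)$ using $R_{p, 0, 0}$; for each $R_{a,b,c} \in \Gamma_p$ with $a, b, c \in \mathbb{F}_p \setminus \{0\}$, choosing any integer lifts $\tilde{a}, \tilde{b}, \tilde{c}$, the preimage $\varphi^{-1}(R_{a,b,c})$ equals the restriction of $R^G_{\tilde{a}, \tilde{b}, \tilde{c}}$ to $H^3$, pp-defined as $R^G_{\tilde{a}, \tilde{b}, \tilde{c}}(x, y, z) \wedge H(x) \wedge H(y) \wedge H(z)$; and $\varphi^{-1}(\eq_{\mathbb{F}_p})$ is the diagonal on $H$, pp-definable using $R^G_{1, -1, 0}$ together with $H$-restrictions. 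Since first-power pp-interpretations compose to first-power ones (using the map $\varphi \circ f_1$ restricted to $f_1^{-1}(H)$, where $f_1$ is the function from the $\Gamma \to \Gamma(G)$ interpretation), this yields a first-power pp-interpretation of $\Gamma_p$ from $\Gamma$.

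Next I would verify that every $R_{a,b,c} \in \Gamma_p$ is irreducible: for any distinct $i, j \in \{1, 2, 3\}$, the invertibility of $a, b, c$ in $\mathbb{F}_p$ readily produces a triple in $R_{a,b,c}$ with distinct $i$-th and $j$-th coordinates. Applying the second bullet of Lemma~\ref{lem:pp-interpretation-reduction} then gives $\MinCostCSP(\Gamma_p) \leq_\CF \MinCostCSP(\Gamma)$. Combined with the corollary to Theorem~\ref{thm:ncp}, which shows $\MinCostCSP(\Gamma_p)$ is NP-hard to approximate within any constant factor, we conclude that $\MinCostCSP(\Gamma)$ inherits the same NP-hardness, contradicting our assumption.

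The main technical subtlety is handling the ``without equality'' condition. Composing the equality-free interpretation of $\Gamma(G)$ with our interpretation of $\Gamma_p$ might naively reintroduce equality in intermediate pp-definitions, which would be a problem since we cannot assume $\eq_D \in \Gamma$. This is resolved by the irreducibility of the target relations: for each $R \in \Gamma_p$, its preimage $(\varphi \circ f_1)^{-1}(R)$ is also irreducible (irreducibility transfers under preimages by surjective maps), so Theorem~\ref{thm:galois} guarantees the required pp-definitions can be carried out in $\Gamma$ without using equality, exactly what the irreducibility branch of Lemma~\ref{lem:pp-interpretation-reduction} needs.
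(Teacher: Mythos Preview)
Your argument has a genuine gap in the step where you pp-define the cyclic subgroup $H = \langle g \rangle$ from $\Gamma(G)$. You write ``$H = \{x \in G : px = 0\}$ is pp-definable in $\Gamma(G)$ using $R_{p,0,0}$,'' but the equality $\langle g \rangle = \{x \in G : px = 0\}$ fails whenever the $p$-torsion of $G$ is not cyclic. For instance, if $G = (\mathbb{Z}/p)^2$ then every element satisfies $px = 0$, so $\{x : px = 0\} = G$, whereas $\langle g \rangle$ has only $p$ elements. Worse, this cannot be repaired: every automorphism of $G$ is a unary polymorphism of $\Gamma(G)$ (since it preserves each $R_{abc}$), so any pp-definable unary relation must be $\mathrm{Aut}(G)$-invariant. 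In $(\mathbb{Z}/p)^2$ the automorphism group $GL_2(\mathbb{F}_p)$ acts transitively on nonzero elements, so the only pp-definable subgroups are $\{0\}$ and $G$; no order-$p$ cyclic subgroup is pp-definable. The same obstruction blocks the quotient variant (taking $F = G$ with a surjective homomorphism onto $\mathbb{Z}/p$): the kernel is again not $\mathrm{Aut}(G)$-invariant, so $f^{-1}(\eq_{\mathbb{F}_p})$ is not pp-definable either. Thus your composed pp-interpretation of $\Gamma_p$ from $\Gamma$ does not go through.

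The paper sidesteps this by \emph{not} attempting to pp-interpret $\Gamma_p$ from $\Gamma(G)$. Instead it applies Lemma~\ref{lem:pp-interpretation-reduction} once, with $\Gamma_2 = \Gamma(G)^{\mathrm{irr}}$ (the irreducible relations in $\Gamma(G)$), obtaining $\MinCostCSP(\Gamma(G)^{\mathrm{irr}}) \lcf \MinCostCSP(\Gamma)$. It then argues that $\MinCostCSP(\Gamma_p)$ is a special case of $\MinCostCSP(\Gamma(G)^{\mathrm{irr}})$ directly at the level of instances: for integer lifts $\tilde a, \tilde b, \tilde c \in \{1,\ldots,p-1\}$ the relation $R^G_{\tilde a,\tilde b,\tilde c}$ is irreducible over $G$ (witnessed already by triples in $H^3$), and one restricts variables to $H$ by setting $\cost(v,a) = \infty$ for $a \notin H$. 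The crucial point is that this restriction uses the cost function, which is always available in $\MinCostCSP$, rather than a pp-definition of $H$, which need not exist.
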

\begin{proof}
    We prove the contrapositive. Let $\Gamma$ be a constraint language with unbounded width. Without loss of generality, assume that $\Gamma$ contains all unary relations. Then by Theorem~\ref{thm:unbounded-abelian}, there exists some nontrivial finite abelian group $G$ such that $\Gamma$ pp-interprets $\Gamma(G)$ in the first power. In particular, $\Gamma$ pp-interprets the set of all irreducible relations $\Gamma(G)^{\mathrm{irr}} \subseteq \Gamma(G)$ in the first power. By Lemma~\ref{lem:pp-interpretation-reduction}, we have $\MinCostCSP(\Gamma(G)^{\mathrm{irr}}) \lcf \MinCostCSP(\Gamma)$.
    
    We now claim that $\MinCostCSP(\Gamma(G)^{\mathrm{irr}})$ contains $\MinCostCSP(\Gamma_p)$ as a special case for some prime $p$. Note that $G$ must contain some cyclic subgroup of prime order: one may find this subgroup by taking a subgroup of prime order of some cyclic subgroup generated by a single nonzero element in $G$. We may identify this subgroup of order $p$ with the finite field $\mathbb{F}_p$. Note that any relation $R_{abc} = \{(x, y, z) \in \mathbb{F}_p^3 \mid ax + by + cz = 0\}$ over $\mathbb{F}_p$ is irreducible if $a, b, c$ are all nonzero, so these relations are contained in $\Gamma(G)^{\mathrm{irr}}$. So we have that $\MinCostCSP(\Gamma(G)^{\mathrm{irr}})$ contains $\MinCostCSP(\Gamma_p)$ as a subproblem (where we set the cost of any label outside $\mathbb{F}_p$ to be infinite). It follows that $\MinCostCSP(\Gamma(G)^{\mathrm{irr}})$, and therefore $\MinCostCSP(\Gamma)$, do not have a constant-factor approximation unless P = NP.
\end{proof}

For the bounded-width case, we use the following reduction from the hypergraph vertex cover problem.

\begin{lemma}[\cite{dalmau2018towards}]\label{lem:nu_bdw_hvc}
    Let $\Gamma$ be a bounded-width constraint language which is not preserved by any NU operation. If $\Gamma$ contains all unary singleton relations, then for every $k \geq 1$, there is a $k$-ary relation $R$ pp-definable from $\Gamma$ and $a, b \in D$ such that 
    \[
    R \cap \{a, b\}^k = \{a, b\}^k \backslash \{(a, a, \ldots, a)\}.
    \]
\end{lemma}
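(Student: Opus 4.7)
The proof goes through the Galois correspondence (Theorem~\ref{thm:galois}): a $k$-ary relation $R \subseteq D^k$ is pp-definable from $\Gamma$ if and only if $R$ is preserved by every operation in $\Pol(\Gamma)$. Since $\Gamma$ contains every singleton $\{c\}$, each $f \in \Pol(\Gamma)$ is \emph{idempotent}, i.e.\ $f(c, \ldots, c) = c$ for every $c \in D$. For each ordered pair $a \neq b$ in $D$, let $R_{a,b}$ denote the smallest $\Pol(\Gamma)$-invariant subset of $D^k$ containing $S_{a,b} := \{a,b\}^k \setminus \{(a, \ldots, a)\}$. Then $R_{a,b}$ is automatically pp-definable from $\Gamma$, and $R_{a,b} \supseteq S_{a,b}$; the desired relation for the pair $(a,b)$ exists exactly when $(a, \ldots, a) \notin R_{a,b}$. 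It therefore suffices to rule out the scenario in which $(a, \ldots, a) \in R_{a,b}$ for \emph{every} pair $(a, b)$ with $a \neq b$.

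Suppose, for contradiction, that this scenario holds. Unpacking the closure definition, for each pair $(a, b)$ there exist an operation $f_{a,b} \in \Pol(\Gamma)$ of some arity $n$ and tuples $t_1, \ldots, t_n \in \{a, b\}^k$, each containing at least one $b$, such that $f_{a,b}(t_1(i), \ldots, t_n(i)) = a$ for every coordinate $i \in [k]$. Arranging the $t_j$'s as rows of an $n \times k$ matrix with $\{a, b\}$-entries, every row carries at least one $b$ while every column evaluates under $f_{a,b}$ to $a$. The plan is to convert this raw witness into a $k$-ary polymorphism of $\Gamma$ that is near-unanimity on the pair $\{a, b\}$, and then to amalgamate these pair-by-pair operations into a single $k$-ary NU polymorphism of $\Gamma$, contradicting the hypothesis.

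The key combinatorial step is a \emph{reduction to unit-vector witnesses}: by composing $f_{a,b}$ with projections and with constants obtained from the singleton relations, one extracts a $k$-ary operation $g_{a,b} \in \Pol(\Gamma)$ whose values on the $k$ unit-vector inputs $(a,\ldots,a,b,a,\ldots,a)$ (with $b$ in position $i$) all equal $a$, so that together with idempotency $g_{a,b}$ becomes near-unanimity on $\{a, b\}$ at the value $a$. Applying the same argument to the reversed pair $(b, a)$ gives a companion operation that is NU at $b$, and symmetrizing over unordered pairs yields, for each two-element subset of $D$, a $k$-ary polymorphism that is NU on it. A final amalgamation step combines these pair-by-pair NU operations, using closure of $\Pol(\Gamma)$ under composition together with the rich supply of symmetric (or weak near-unanimity) polymorphisms that the bounded-width hypothesis provides, into a single $k$-ary polymorphism that is NU on every two-element subset of $D$, hence an NU operation on $D$. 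The principal obstacle is precisely this amalgamation step: merging the pair-by-pair NU behaviors without destroying any of them is delicate, and it is exactly here that the bounded-width hypothesis is genuinely used.
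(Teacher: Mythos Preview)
The paper does not give its own proof of this lemma; it is quoted verbatim from \cite{dalmau2018towards} and used as a black box. So there is no in-paper argument to compare against, and the relevant question is whether your sketch stands on its own.

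It does not, and the gaps are substantive rather than cosmetic. First, a concrete error: you say you will build $g_{a,b}\in\Pol(\Gamma)$ by composing $f_{a,b}$ with projections \emph{and with constants obtained from the singleton relations}. Having all singletons $\{c\}$ in $\Gamma$ forces every polymorphism to be idempotent, but it does \emph{not} put the constant operations $x\mapsto c$ into $\Pol(\Gamma)$; for instance the constant $0$ fails to preserve $\{(0,1),(1,0)\}$. Likewise, plugging a constant into one argument of $f_{a,b}$ does not in general yield a polymorphism of $\Gamma$. So the toolbox you invoke for the ``reduction to unit-vector witnesses'' is not available. And without it the reduction is genuinely unclear: if you try $g_{a,b}(x_1,\dots,x_k)=f_{a,b}(x_{\sigma(1)},\dots,x_{\sigma(n)})$ for some choice of coordinates $\sigma(j)$, you only recover $g_{a,b}(e_i)=a$ when each row $t_j$ of your matrix is itself a unit vector $e_{\sigma(j)}$, which you have no right to assume.

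Second, even granting pair-by-pair $k$-ary operations that are NU on each $\{a,b\}$, the ``amalgamation'' into a single $k$-ary NU term is the crux and you have only named it, not done it. Your own affine example shows why bounded width is indispensable here: over $\mathbb{F}_2$ with the minority clone one has $(0,\dots,0)$ in the closure of $\{0,1\}^k\setminus\{(0,\dots,0)\}$ for every $k$, yet no NU term exists. So whatever mechanism turns ``$(a,\dots,a)$ is generated for every pair'' into a global NU term must genuinely consume the bounded-width hypothesis (e.g.\ through absorption/SD($\wedge$) structure theory as in \cite{dalmau2018towards}), and a bare appeal to ``rich supply of symmetric or WNU polymorphisms'' is not a proof. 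As written, both the extraction step and the amalgamation step are promissory notes rather than arguments.
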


\begin{lemma}\label{lem:bdwidth-nu}
    Let $\Gamma$ be a bounded-width constraint language which is not preserved by any NU operation. Then $\MinCostCSP(\Gamma)$ does not have a constant-factor approximation unless P = NP.
\end{lemma}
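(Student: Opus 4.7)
The plan is to reduce from $k$-uniform hypergraph vertex cover, using the ``not-all-equal'' relation furnished by Lemma~\ref{lem:nu_bdw_hvc}, and then invoke the inapproximability bound of Theorem~\ref{thm:hvc} for arbitrarily large $k$.

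First, by Observation~\ref{obs:mincost_unary}, I may pass to $\Gamma' = \Gamma \cup \{S \subseteq D : S \neq \varnothing\}$ without changing $\MinCostCSP(\Gamma)$. Adding unary relations only restricts $\Pol(\Gamma)$ (to conservative operations), so $\Gamma'$ still has no NU polymorphism, and I will assume bounded width survives this passage (which is the standard convention, since bounded width is typically defined relative to constraint languages containing all singletons). Now Lemma~\ref{lem:nu_bdw_hvc} applies to $\Gamma'$: for every $k \geq 3$, there are $a, b \in D$ and a $k$-ary relation $R$, pp-definable from $\Gamma'$, with $R \cap \{a,b\}^k = \{a,b\}^k \setminus \{(a,\ldots,a)\}$.

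Given a $k$-uniform hypergraph $H = (V, E)$, I construct a $\MinCostCSP(\Gamma')$ instance $I_H$ as follows. For each $v \in V$ introduce a variable $x_v$ with $\cost(x_v, a) = 0$, $\cost(x_v, b) = 1$, and $\cost(x_v, c) = +\infty$ for all $c \in D \setminus \{a, b\}$. For each hyperedge $e = \{v_1, \ldots, v_k\} \in E$, install the constraint $R(x_{v_1}, \ldots, x_{v_k})$ by expanding the pp-definition of $R$: fresh existentially quantified variables become auxiliary variables with $\cost \equiv 0$ on $D$, and any equality atoms are handled by identifying variables in the gadget (exactly as in the proof of Lemma~\ref{lem:pp-interpretation-reduction}). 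Any finite-cost satisfying assignment forces $x_v \in \{a, b\}$ for each $v$, and the defining property of $R$ forces at least one $x_v$ in each edge to equal $b$; hence finite-cost satisfying assignments correspond bijectively to vertex covers of $H$, with cost equal to the cover's size. In particular, $\Opt(I_H)$ equals the minimum vertex cover number of $H$.

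Any $\alpha$-approximation algorithm for $\MinCostCSP(\Gamma)$ therefore yields an $\alpha$-approximation for $k$-uniform hypergraph vertex cover. By Theorem~\ref{thm:hvc}, for any constant $C \geq 2$, choosing $k = \lceil C \rceil + 2$ makes $(k-1-\epsilon)$-approximation of $k$-HVC NP-hard for a suitable $\epsilon > 0$, so no $C$-approximation is possible unless $\mathrm{P} = \mathrm{NP}$. Since $C$ is arbitrary, this rules out any constant-factor approximation for $\MinCostCSP(\Gamma)$.

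The only subtle step is the pp-definition bookkeeping in the gadget: we must verify that introducing auxiliary variables with uniformly zero cost faithfully implements existential quantification, and that equality atoms pose no obstacle (we may either assume $\eq_D$ is among the constraints of $\Gamma'$, or dispose of equalities by identifying variables during gadget expansion). Once this routine step is handled, the cost-preserving correspondence with vertex covers, and hence the constant-factor inapproximability, follows immediately.
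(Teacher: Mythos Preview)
Your proposal is correct and follows essentially the same route as the paper: reduce from $k$-uniform hypergraph vertex cover via the relation supplied by Lemma~\ref{lem:nu_bdw_hvc}, then invoke Theorem~\ref{thm:hvc} for arbitrarily large $k$. The only difference is in the step you flag as ``subtle'': rather than disposing of equality atoms by identifying variables, the paper observes that the target relation $R_k = \{a,b\}^k \setminus \{(a,\ldots,a)\}$ is irreducible (for any $i \neq j$ the tuple with $b$ in coordinate $i$ and $a$ elsewhere lies in $R_k$), so by Theorem~\ref{thm:galois} it is pp-definable from $\Gamma'$ without equality, and Lemma~\ref{lem:pp-interpretation-reduction} applies directly.
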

\begin{proof}
    Assume that $\Gamma$ has all unary relations without loss of generality. Note that the $k$-uniform hypergraph vertex cover problem is the $\MinCostCSP$ problem with a single relation $R_k = \{0, 1\}^k \backslash \{(0,0,\ldots,0)\}$, which is pp-definable from $\Gamma$ by Lemma~\ref{lem:nu_bdw_hvc} (by thinking of $a$ as 0 and $b$ as 1, and the assumption that $\{a, b\}$ as a unary relation is in $\Gamma$). Observe that $R_k$ is irreducible, so the reduction in Lemma~\ref{lem:pp-interpretation-reduction} implies that $\MinCostCSP(\Gamma)$ is as hard to approximate as the $k$-uniform hypergraph vertex cover problem, in particular, by Theorem~\ref{thm:hvc}, it is NP-hard to approximate $\MinCostCSP(\Gamma)$ within a factor of $k - 1 - \epsilon$ for any $\epsilon > 0$. Since $k$ can be arbitrarily large, this implies that $\MinCostCSP(\Gamma)$ does not have a constant-factor approximation, unless P = NP.
\end{proof}

Theorem~\ref{thm:necessary} can now be obtained by combining these two cases.
\begin{proof}[Proof of Theorem~\ref{thm:necessary}]
    Let $\Gamma$ be a constraint language such that $\MinCostCSP(\Gamma)$ has a constant-factor approximation and assume that P $\neq$ NP, then by Lemma~\ref{lem:cf_bdwidth}, $\Gamma$ must be bounded-width. It then follows from Lemma~\ref{lem:bdwidth-nu} that $\Gamma$ must be preserved by some NU operation.
\end{proof}

\begin{remark}\label{remark:boolean_nu}
In the Boolean case ($|D| = 2$), it follows from Khanna et al.'s classification~\cite{khanna2001approximability} as well as Post's classification of Boolean clones~\cite{post1941two} that the necessary condition of having a conservative NU polymorphism is sufficient as well. Recall that we have the following three classes of constant-factor approximable Boolean $\MinCostCSP$s:
\begin{itemize}
    \item $\Gamma$ can be expressed as a 2CNF-formula. In this case, $\Gamma$ is preserved by the (unique) majority operation. (This case also includes constraint languages that are width-2 affine for which $\MinCostCSP$ can be solved to optimality.)
    \item $\Gamma$ is expressible as a CNF formula where each clause is of the form $x_1 \vee \cdots \vee x_k$, $\neg x_1 \vee x_2$, or $\neg x_1$ where $k \leq K$ for some $K$ depending on $\Gamma$. In this case $\Gamma$ is preserved by the $(K+1)$-ary NU operation $\mathrm{th}_{2}^{K+1}$, where 
    \[
    \mathrm{th}_{p}^{n}(x_1, \ldots, x_n) = \left\{\begin{array}{ll}
        1 & \text{if } |\{i \in [n] \mid x_i = 1\}| \geq p, \\
        0 & \text{otherwise.}
    \end{array}\right.
    \]
    \item $\Gamma$ is expressible as a CNF formula where each clause is of the form $\neg x_1 \vee \cdots \vee \neg x_k$, $ x_1 \vee \neg x_2$, or $x_1$ where $k \leq K$ for some $K$ depending on $\Gamma$. In this case $\Gamma$ is preserved by the $(K+1)$-ary NU operation $\mathrm{th}_{K}^{K+1}$.
\end{itemize}
It can be easily verified using Post's Lattice~\cite{post1941two} that any constraint language whose polymorphism clone contains an NU operation can be reduced to one of the three cases above. However, as soon as $|D| \geq 3$, the condition of being preserved by some NU operation is no longer sufficient (for example, see Theorem~\ref{thm:counterexample} in the following subsection).
\end{remark}

\subsection{A hard predicate with a majority polymorphism}\label{sec:counter_example}

We now present a binary relation $P_H$ which has a conservative majority polymorphism, but nonetheless $\MinCostCSP(P_H)$ is hard to approximate within any constant factor, unless UGC fails. This implies that the existence of an NU polymorphism is in general not sufficient for constant-factor approximability assuming UGC.
 
\begin{definition}
    Let $P_H$ be the binary relation on domain $A = \{0, 1, 2\}$ such that $P_H(x, y)$ holds if and only if $x \neq y$ or $x = y = 2$. 
\end{definition}

The constraint satisfaction problem defined by $P_H$ is equivalent to the graph homomorphism problem to the undirected graph shown in Figure~\ref{fig:H}. Intuitively, $P_H$ is the XOR predicate with a ``wildcard'' element 2 such that the predicate is also satisfied if some input is 2.

\begin{figure}[ht]
    \centering

    \begin{tikzpicture}[
      mycircle/.style={
         circle,
         draw=black,
         fill=gray,
         fill opacity = 0,
         text opacity=1,
         inner sep=0pt,
         minimum size=20pt,
         font=\small},
      node distance=0.6cm and 1.2cm
      ]
      \tikzset{every loop/.style={}}
      \node[mycircle]  at (0, 0) (u0) {$0$};
      \node[mycircle]  at (2, 0) (u1) {$1$};
      \node[mycircle]  at (1, {sqrt(3)}) (u2) {$2$};

    \foreach \i/\j/\p in {
      u0/u1/above,
      u0/u2/above,
      u1/u2/above}
       \draw [] (\i) -- node[font=\small] {} (\j);

\draw (u2) edge[loop above] node{} (u2);
    \end{tikzpicture}
    
    \caption{The undirected graph $H$ corresponding to $P_H$.}
    \label{fig:H}
\end{figure}
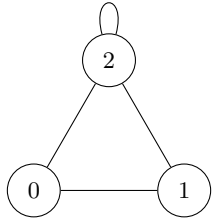

We now verify that $P_H$ is preserved by a conservative majority operation.
\begin{claim}
    Let $f: A^3 \to A$ be defined as follows
    \[
    f(a_1, a_2, a_3) = \left\{\begin{array}{ll}
       a  &  \text{if } \left|\{i \in [3] \mid a_i = a\}\right| \geq 2, \\
       2  &  \text{otherwise.}
    \end{array}\right.
    \]
    Then $f \in \Pol(P_H)$.
\end{claim}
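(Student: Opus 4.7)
The plan is to verify directly that $f$ preserves $P_H$ by examining when the output pair could possibly lie outside of $P_H$. Recall that $P_H$ consists of all pairs in $\{0,1,2\}^2$ except $(0,0)$ and $(1,1)$, so showing $(f(a_1,a_2,a_3), f(b_1,b_2,b_3)) \in P_H$ for any three pairs $(a_i,b_i) \in P_H$ reduces to ruling out the cases where this pair equals $(0,0)$ or $(1,1)$.

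The key structural observation about $f$ is that its output can only equal a value $c \in \{0,1\}$ via the first branch of the definition, since the ``otherwise'' branch always returns $2$. Hence if $f(a_1,a_2,a_3) = c \in \{0,1\}$, then at least two of the $a_i$'s must equal $c$, and similarly for $f$ applied to the $b_i$'s. I would then apply a pigeonhole argument on the three index positions: if both $f(a_1,a_2,a_3) = c$ and $f(b_1,b_2,b_3) = c$ for the same $c \in \{0,1\}$, then at least two $a_i$'s equal $c$ and at least two $b_i$'s equal $c$, so some index $i \in \{1,2,3\}$ must satisfy $a_i = b_i = c$. But then $(a_i,b_i) = (c,c)$ with $c \in \{0,1\}$ contradicts $(a_i,b_i) \in P_H$.

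All remaining output pairs are automatically in $P_H$: outputs $(0,1)$, $(1,0)$, $(0,2)$, $(2,0)$, $(1,2)$, $(2,1)$, and $(2,2)$ are all allowed by the definition of $P_H$. Thus, the only potentially problematic outputs $(0,0)$ and $(1,1)$ are ruled out by the pigeonhole step, which completes the verification. There is no real obstacle here; the argument is a two-line pigeonhole after isolating the correct observation about when $f$ can take values in $\{0,1\}$.
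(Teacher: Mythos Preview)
Your proof is correct and uses essentially the same pigeonhole argument as the paper. The paper phrases it as showing that when neither output is $2$ the output pair coincides with one of the three input pairs $(a_i,b_i)$, whereas you target the forbidden outputs $(0,0)$ and $(1,1)$ directly; the underlying reasoning is identical.
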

\begin{proof}
   Let $(a_1, b_1), (a_2, b_2), (a_3, b_3) \in P_H$. We need to verify that $(f(a_1, a_2, a_3), f(b_1, b_2, b_3)) \in P_H$. This is always true if at least one of $f(a_1, a_2, a_3)$ and $f(b_1, b_2, b_3)$ is 2. If neither is 2, then there is a majority in $(a_1, a_2, a_3)$ as well as in $(b_1, b_2, b_3)$. It follows by the pigeonhold principle that there must be some $i \in [3]$ such that $a_i$ is equal to the majority element in $(a_1, a_2, a_3)$ and $b_i$ is equal to the majority element in $(b_1, b_2, b_3)$, so we have $(f(a_1, a_2, a_3), f(b_1, b_2, b_3)) = (a_i, b_i) \in P_H$.
\end{proof}
Note that $f$ is conservative since when there isn't a majority we must have $\{a_1, a_2, a_3\} = \{0, 1, 2\} \ni 2$.

To prove that $\MinCostCSP(P_H)$ is hard to approximate, we use a reduction from the \MinUnCut{} problem.

\begin{definition}
    In the \MinUnCut{} problem, the input is a weighted undirected graph $G = (V, E, \{w_e\}_{e \in E})$ where $w_e \geq 0$ for every $e \in E$, and we are asked to remove a subset $E' \subseteq E$ of the edges such that the remaining graph $G' = (V, E \,\backslash\, E')$ is bipartite. The goal is to minimize the total weight of removed edges $\sum_{e \in E'}w_e$.
\end{definition}

We use $\Opt(G)$ to denote the value of an optimum solution to \MinUnCut{}($G$). Without loss of generality, we may assume that the total edge weight in a \MinUnCut{} instance is normalized to be 1, i.e., $\sum_{e \in E} w_e = 1$. 

\begin{theorem}[\cite{khot2007optimal}]\label{thm:minuncut_hardness}
    Assuming UGC, there exists some constant $c > 0$ such that for all sufficiently small $\epsilon > 0$ it is NP-hard to distinguish instances of \MinUnCut{} with value at most $\epsilon$ and instances with value at least $c\cdot \sqrt{\epsilon}$. In particular, it is NP-hard to approximate \MinUnCut{} within any constant factor, assuming UGC. 
\end{theorem}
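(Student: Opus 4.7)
The plan is to prove Theorem~\ref{thm:minuncut_hardness} via the UGC-based dictator-vs-no-notable-coordinate paradigm of Khot--Kindler--Mossel--O'Donnell, composing a Long Code gadget with a hard Unique Games instance and analyzing the soundness through the Majority-is-Stablest theorem, whose Gaussian content is Borell's noise stability inequality. The $\sqrt{\epsilon}$ vs $\epsilon$ gap arises from the asymptotics $\arccos(1-2\epsilon)/\pi = \Theta(\sqrt{\epsilon})$ as $\epsilon \to 0$, which is exactly the Gaussian noise stability of a halfspace under correlation $-(1-2\epsilon)$.

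First I would set up the reduction. Starting from a Unique Games instance $\mathcal{U}$ with alphabet $[R]$ that, under UGC, is NP-hard to distinguish between value $\geq 1-\eta$ and value $\leq \eta$ (for suitable $R = R(\eta, \epsilon)$), build a weighted graph $G$ on vertex set $\mathrm{Var}(\mathcal{U}) \times \{-1,1\}^R$, placing a Long Code cloud $\{-1,1\}^R$ over each UG variable. For each UG constraint $(u,v)$ with permutation $\pi_{uv}$, add edges between the clouds of $u$ and $v$ distributed as follows: sample $x \in \{-1,1\}^R$ uniformly, and sample $y$ by setting $y_i = -x_{\pi_{uv}^{-1}(i)}$ with probability $1-\epsilon$ and $y_i = x_{\pi_{uv}^{-1}(i)}$ with probability $\epsilon$, independently per coordinate. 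This corresponds to correlation $\rho = -(1-2\epsilon)$, so the two endpoints of a typical edge are nearly anti-correlated. Normalize edge weights to total mass $1$.

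Next I would handle completeness. If $\mathcal{U}$ admits a labeling $\ell$ satisfying a $1-\eta$ fraction of constraints, use the dictator assignment $\chi(u, x) = x_{\ell(u)}$. On a satisfied constraint $(u,v)$, the two endpoints agree (are uncut) exactly with the probability that $\pi_{uv}$-alignment was preserved by the noise, namely $\epsilon$. Averaging over all constraints gives total uncut weight at most $\epsilon + \eta$, which is at most $2\epsilon$ for $\eta$ small enough.

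For soundness, suppose some $2$-coloring induces uncut weight less than $c\sqrt{\epsilon}$. Encode the coloring as functions $f_u \colon \{-1,1\}^R \to \{-1,1\}$, one per cloud, and express the expected uncut weight as $\tfrac{1}{2}\bigl(1 + \mathbb{E}_{(u,v)}\mathbb{E}_{x,y}[f_u(x) f_v(y)]\bigr)$. Fourier-expanding and averaging, I would argue a dichotomy: either (i) there exist a constraint $(u,v)$ and a coordinate $i$ with $\mathrm{Inf}_i^{\leq d}(f_u) \geq \tau$ and $\mathrm{Inf}_{\pi_{uv}^{-1}(i)}^{\leq d}(f_v) \geq \tau$, in which case the standard influence-decoder (pick a random label in the top-$\mathrm{poly}(1/\tau)$ influential coordinates of each cloud) yields a UG labeling of value $\gg \eta$, contradicting hardness of $\mathcal{U}$; or (ii) every $f_u$ has all low-degree influences at most $\tau$, in which case the Mossel--O'Donnell--Oleszkiewicz invariance principle lets us pass to Gaussian space, where Borell's inequality lower bounds the noise stability of any bounded function under correlation $-(1-2\epsilon)$ by $1 - (2/\pi)\arccos(1-2\epsilon) = 1 - \Theta(\sqrt{\epsilon})$. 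Translating back, the uncut weight contributed by such clouds is at least $\Theta(\sqrt{\epsilon}) - o_\tau(1)$, contradicting the assumption if $c$ is chosen below the hidden constant.

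The main technical hurdle is the soundness step: carefully choosing $\tau$, the degree bound $d$, the alphabet size $R$, and the UG soundness parameter $\eta$ so that case~(i) really does violate UG hardness (this requires the decoder analysis to produce $\Omega(\tau^2)$ UG value) while case~(ii) admits a clean application of the invariance principle (requiring the $f_u$ to be low-influence in the \emph{low-degree} sense and the constraint graph to have small correlation between endpoints). Once the gap $\sqrt{\epsilon}$ vs $\epsilon$ is established for arbitrarily small $\epsilon > 0$, constant-factor inapproximability follows by taking $\epsilon \to 0$.
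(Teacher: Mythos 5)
The paper does not prove this theorem; it is cited to Khot--Kindler--Mossel--O'Donnell (\cite{khot2007optimal}), so there is no in-paper argument to compare against. Your sketch is essentially the known proof from that reference: Min UnCut is the complement of Max Cut, and the KKMO gadget reduction from Unique Games via Long Codes with negatively correlated noise plus Majority-is-Stablest gives exactly the $\epsilon$ vs.\ $\Theta(\sqrt{\epsilon})$ gap after translating Max Cut value $\geq 1-\epsilon$ / $\leq 1 - \Theta(\sqrt\epsilon)$ into uncut weight $\leq \epsilon$ / $\geq \Theta(\sqrt\epsilon)$. It would be worth remarking that one can take the Max Cut hardness as a black box and complement, rather than re-deriving the reduction from scratch.

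One formula in your soundness step is wrong as written. You claim Borell's inequality lower bounds the noise stability under correlation $\rho = -(1-2\epsilon)$ by $1 - \tfrac{2}{\pi}\arccos(1-2\epsilon)$, which is $\approx 1 - \Theta(\sqrt{\epsilon})$, a number close to $+1$. But for $\rho$ near $-1$ the stability of a balanced low-influence Boolean function is close to $-1$, not $+1$. The correct bound is $\mathbb{S}_\rho(f) \geq 1 - \tfrac{2}{\pi}\arccos(\rho) - o_\tau(1)$, and with $\rho = -(1-2\epsilon)$ one has $\arccos(\rho) = \pi - \arccos(1-2\epsilon) \approx \pi - 2\sqrt{\epsilon}$, so the bound is $\mathbb{S}_\rho(f) \geq -1 + \tfrac{4\sqrt{\epsilon}}{\pi} - o_\tau(1)$. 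Feeding this into uncut weight $= \tfrac{1}{2}(1 + \mathbb{S}_\rho(f))$ gives $\geq \tfrac{2\sqrt{\epsilon}}{\pi} - o_\tau(1)$ as you intended; you just dropped the sign on $\rho$ inside the $\arccos$. Also, "any bounded function" should read "any low-influence function" --- it is the invariance principle, not boundedness alone, that makes the Gaussian bound applicable. These are local slips; the structure of the argument (completeness $\epsilon+\eta$, soundness dichotomy between influential coordinates giving a UG labeling and the low-influence case handled by MIS) is the right one.
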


\begin{theorem}\label{thm:counterexample}
    Assuming UGC, it is NP-hard to approximate $\MinCostCSP(P_H)$ within any constant factor.
\end{theorem}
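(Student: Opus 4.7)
The plan is to reduce from $\MinUnCut$ to $\MinCostCSP(P_H)$ and invoke Theorem~\ref{thm:minuncut_hardness}. Given a weighted $\MinUnCut$ instance $G = (V, E, \{w_e\}_{e \in E})$ normalized so $\sum_e w_e = 1$, I will build a $\MinCostCSP(P_H)$ instance $I$ with variables $\{x_v\}_{v \in V} \cup \{y_e, z_e\}_{e \in E}$, and for each edge $e = (u, v) \in E$ three constraints $P_H(x_u, y_e)$, $P_H(y_e, z_e)$, $P_H(z_e, x_v)$, forming a length-$3$ path of $P_H$-constraints from $x_u$ to $x_v$ through the two auxiliary variables $y_e, z_e$. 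The cost function sets $\cost(x_v, 2) = M$ for a large value $M$ (to be fixed in terms of the claimed approximation ratio), $\cost(y_e, 2) = \cost(z_e, 2) = w_e$, and all remaining label costs to $0$.

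The key claim I want to establish is $\Opt(I) = \MinUnCut(G)$. The crucial observation is that restricted to $\{0, 1\}$ the relation $P_H$ is just XOR, so a path of three $P_H$-constraints whose intermediate vertices are in $\{0, 1\}$ enforces $x_u \neq x_v$. Consequently, for any assignment with $x_v \in \{0, 1\}$ for every $v \in V$, each monochromatic edge $e$ (i.e., $x_u = x_v$) must place at least one of $y_e, z_e$ at the wildcard value $2$, costing $w_e$, while each bichromatic edge admits a zero-cost choice such as $(y_e, z_e) = (1 - x_u, x_u)$; a short four-case analysis verifies that in the monochromatic case exactly $w_e$ is attainable, witnessed by $(y_e, z_e) = (1 - x_u, 2)$. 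Summing, the total gadget cost equals $\sum_{e : x_u = x_v} w_e$, which is the $\MinUnCut$ value of the underlying $2$-coloring. Choosing $M > 1 \geq \MinUnCut(G)$ ensures no optimal solution assigns any $x_v = 2$, so $\Opt(I) = \MinUnCut(G)$.

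To conclude, I will suppose for contradiction that $\MinCostCSP(P_H)$ admits a constant-factor $\alpha$-approximation for some constant $\alpha$. Taking $M = \alpha + 1$ in the reduction, any returned assignment $A$ satisfies $\cost_I(A) \leq \alpha \cdot \Opt(I) \leq \alpha < M$, so $A$ must assign every $x_v$ to $\{0, 1\}$; the corresponding $2$-coloring of $V$ then has $\MinUnCut$ value at most $\cost_I(A) \leq \alpha \cdot \MinUnCut(G)$, yielding an $\alpha$-approximation for $\MinUnCut$ and contradicting Theorem~\ref{thm:minuncut_hardness}. The main subtlety in the design is the choice of the length-$3$ gadget: the more natural length-$2$ gadget $P_H(x_u, y_e) \wedge P_H(y_e, x_v)$ instead charges bichromatic edges and thus encodes a MaxCut-style problem that is trivially constant-factor approximable; inserting an extra intermediate variable flips the XOR parity along the path so that monochromatic edges are the ones charged, correctly matching the $\MinUnCut$ objective.
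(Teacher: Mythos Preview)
Your proposal is correct and follows essentially the same approach as the paper: both reduce from \MinUnCut{} via the identical length-$3$ path gadget $P_H(x_u,y_e)\wedge P_H(y_e,z_e)\wedge P_H(z_e,x_v)$ with the auxiliary variables carrying wildcard cost $w_e$, and both verify $\Opt(I)=\Opt(G)$ by the same parity analysis. The only cosmetic difference is that the paper sets the vertex wildcard cost to $1=\sum_e w_e$ and invokes the gap version of Theorem~\ref{thm:minuncut_hardness} directly, whereas you set it to $M=\alpha+1$ to force the \emph{approximate} solution to avoid label $2$ on vertex variables and then extract a $2$-coloring; both variants are valid.
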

\begin{proof}
    Given any \MinUnCut{} instance $G = (V, E, \{w_e\}_{e \in E})$, we construct a \MinCostCSP($P_H$) instance $I$ such that $\Opt(G) = \Opt(I)$. This reduction combined with Theorem~\ref{thm:minuncut_hardness} will establish our theorem. The reduction is as follows. The variable set of $I$ will be $V \cup \{z_e, z_e' \mid e \in E\}$ where we take vertices in $G$ plus two distince auxiliary variables $z_e, z_e'$ for every edge $e \in E$. For every $e = \{x, y\}\in E$, we add three constraints $P_H(x, z_e), P_H(z_e, z_e'), P_H(z_e', y)$ to $I$ (note that the order of $x$ and $y$ does not matter). For the cost function $c$, we define $c(x, 0) = c(x, 1) = 0$, $c(x, 2) = 1$ for every $x \in V$, and $c(z_e, 0) = c(z_e, 1) = c(z_e', 0) = c(z_e', 1) = 0$, $c(z_e, 2) = c(z_e', 2) = w_e$ for the auxiliary variables. This completes the construction. See Figure~\ref{fig:reduction} for an illustration.

    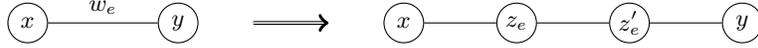
\begin{figure}[t]
    \centering

    \begin{tikzpicture}[
      mycircle/.style={
         circle,
         draw=black,
         fill=gray,
         fill opacity = 0,
         text opacity=1,
         inner sep=0pt,
         minimum size=15pt,
         font=\small},
      node distance=0.6cm and 1.2cm
      ]
      \tikzset{every loop/.style={}}
      \node[mycircle]  at (0, 0) (u) {$x$};
      \node[mycircle]  at (2, 0) (v) {$y$};
      
\draw[double, ->] (3, 0) -- (4, 0);

      \node[mycircle]  at (5, 0)  (u1)  {$x$};
      \node[mycircle]  at (6.5, 0) (z1) {$z_e$};
      \node[mycircle]  at (8, 0) (z2) {$z_e'$};
      \node[mycircle]  at (9.5, 0) (v1) {$y$};

    \foreach \i/\j/\t/\p in {
      u/v/$w_e$/above,
      u1/z1//above,
      z1/z2//above,
      z2/v1//above}
       \draw [] (\i) -- node[font=\small,above] {\t} (\j);
    \end{tikzpicture}
    
    \caption{The reduction from \MinUnCut{} to \MinCostCSP$(P_H)$. The nonzero costs are $c(x, 2) = c(y, 2) = 1$, $c(z_e, 2) = c(z_e', 2) = w_e$.}
    \label{fig:reduction}
\end{figure}

    We claim that $\Opt(G) = \Opt(I)$. We first show that $\Opt(G) \geq \Opt(I)$. Take any optimal assignment for $G$. We take the same assignment for the vertex variables in $I$ which generates no cost. For any edge $e = \{x, y\}$ that is satisfied by the assignment, we can set $z_e = 1 - x$, $z_e' = 1 - y$ and satisfy all three constraints $P_H(x, z_e), P_H(z_e, z_e'), P_H(z_e', y)$ with no cost. For any edge $e = \{x, y\}$ that is not satisfied, we can set $z_e = 2$ and $z_e' = 1-y$, satisfying all three constraints $P_H(x, z_e), P_H(z_e, z_e'), P_H(z_e', y)$ with cost $w_e$. So we obtain an assignment for $I$ that has value $\Opt(G)$.

    The other direction can be shown similarly. First observe that for each edge $e$ we may assume at most one of the two auxiliary variables $z_e, z_e'$ is set to 2. Also, since for any vertex variable $x$ we have $c(x, 2) = 1 = \sum_{e \in E} w_e$, we may assume that no vertex variable $x$ is set to 2. Now if we take an optimal assignment $A$ for $I$ with these two assumptions, $A$ restricted to the vertex variables is a valid assignment for $G$ whose the total weight of violated edges is at most the cost of $A$, which implies that $\Opt(G) \leq \Opt(I)$. 
\end{proof}

\section{Application: Dichotomy for MinCostCSP with permutation constraints}\label{sec:application}

As an application of our results, we give a complete classification for $\MinCostCSP(\Gamma)$ where $\Gamma$ contains all \emph{permutation relations}.
\begin{definition}
    A binary relation $R \subseteq D^2$ over $D$ is called a \emph{permutation relation} if $R = \{(a, \sigma(a)) \mid a \in D\}$ for some bijective $\sigma: D \to D$.
\end{definition}

\begin{theorem}[Theorem~\ref{thm:intro_perm} restated]\label{theorem:dichotomy_perm}
    Let $\Gamma$ be a set of relations over $D$ such that it contains all permutation relations. Then $\MinCostCSP(\Gamma)$ is $|D|$-approximable if $\Gamma$ is preserved by a conservative majority operation. Otherwise, if $\Gamma$ is not preserved by any conservative majority operation, then it is also not preserved by any conservative NU operation and $\MinCostCSP(\Gamma)$ is not constant-factor approximable, assuming P $\neq$ NP.
\end{theorem}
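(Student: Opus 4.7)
The plan is to combine the algorithmic result of Theorem~\ref{thm:intro_algo} and the hardness result of Theorem~\ref{thm:necessary}, using Marchenkov's classification of \emph{homogeneous clones} (clones of operations that commute with every element of $S_D$) to bridge the two sides. The starting observation is that if $\Gamma$ contains every permutation relation $R_\sigma = \{(a, \sigma(a)) : a \in D\}$, then an operation $f$ preserves $R_\sigma$ if and only if $f(\sigma(x_1), \dots, \sigma(x_k)) = \sigma(f(x_1, \dots, x_k))$ for every $\sigma \in S_D$. Hence $\Pol(\Gamma)$ is a sub-clone of the clone of homogeneous operations on $D$.

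For the algorithmic direction, suppose $m \in \Pol(\Gamma)$ is a conservative majority. I first claim that $m$ becomes the dual discriminator $d$ after permuting its arguments. Conservativity forces $m(a, b, c) \in \{a, b, c\}$, and homogeneity together with the transitivity of $S_D$ on ordered triples of pairwise distinct elements (valid once $|D| \geq 3$) pins the output of $m$ on any such triple to a fixed coordinate $i \in \{1, 2, 3\}$, independent of $(a, b, c)$. Permuting the three arguments (an operation under which the clone $\Pol(\Gamma)$ is closed by Definition~\ref{def:clone}) then produces exactly $d$. The case $|D| \leq 2$ is immediate, since there $d$ coincides with the ordinary binary majority. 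With $d \in \Pol(\Gamma)$, Theorem~\ref{thm:intro_algo} supplies the claimed $|D|$-approximation.

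For the hardness direction, assume $\Gamma$ admits no conservative majority polymorphism. I claim that $\Gamma$ then admits no conservative NU polymorphism of any arity; combined with Theorem~\ref{thm:necessary} (applied after enlarging $\Gamma$ to $\Gamma \cup \{S \subseteq D : S \neq \varnothing\}$, which by Observation~\ref{obs:mincost_unary} is approximability-equivalent to $\Gamma$ and still contains all permutation relations), this yields the desired inapproximability. Proving the claim is where Marchenkov's classification enters: for $|D| \geq 3$, the sub-clones of the clone of conservative homogeneous operations form an explicit short list, and each entry in the list which contains an NU of any arity $k \geq 3$ is seen to contain a ternary operation satisfying the majority identities. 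Consequently, the hypothesized absence of a conservative majority in $\Pol(\Gamma)$ rules out every conservative NU. The Boolean case $|D| = 2$ is handled separately via Post's classification, following the discussion in Remark~\ref{remark:boolean_nu}, and is straightforward because the dual discriminator and the majority coincide on two elements.

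The main obstacle is applying Marchenkov's classification correctly in the hardness direction: one must enumerate the sub-clones of the conservative homogeneous clone on a domain of size at least three and verify, clone by clone, that the presence of an NU of some arity forces the presence of a ternary majority. Once this purely algebraic structural fact is in hand, the remainder of the proof is just an assembly of Theorems~\ref{thm:intro_algo} and~\ref{thm:necessary} together with the argument-permutation observation that turns any conservative homogeneous majority into the dual discriminator.
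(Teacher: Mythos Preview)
Your proposal is correct and follows the same overall architecture as the paper: reduce the algorithmic side to the dual discriminator via the observation that a homogeneous majority must project to a fixed coordinate on triples of distinct elements, and handle the hardness side by running through Marchenkov's classification of homogeneous clones.

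There is one organizational difference worth flagging in the hardness direction. You propose to show uniformly that every conservative homogeneous clone without a majority also lacks an NU of any arity, and then invoke Theorem~\ref{thm:necessary} once. The paper instead splits the non-majority clones into two groups: for the near-projection clones $E_i^0 = \langle l_i\rangle$ (and $\mathcal{J}$) it cites Dalmau's result that $\CSP(\Gamma)$ itself is NP-complete, which kills approximability outright without appealing to Theorem~\ref{thm:necessary}; only for the switching clone $E_1^0 = \langle s\rangle$ does it invoke the no-NU observation and Theorem~\ref{thm:necessary}. Your route is cleaner in that it treats all cases uniformly, but it obliges you to verify that $\langle l_i\rangle$ contains no NU operation---a fact the paper never needs. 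That verification is not hard (e.g., NP-completeness of the decision CSP already forces the absence of any NU via bounded width), but you should state explicitly how you intend to establish it rather than leaving it as ``is seen to contain.'' The paper's route, by contrast, gives the marginally stronger conclusion that for the $E_i^0$ cases even feasibility is NP-hard.

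One small point: you separate out $|D|=2$ and appeal to Post's lattice. The paper does not do this; it relies on the remark following Theorem~\ref{thm:hom_algebra_classification} that the classification remains valid for $2\le |D|\le 4$ modulo the Klein four-group exception, which is non-conservative and hence irrelevant. Either treatment is fine.
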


A constraint language that contains all permutation relations can be seen as a natural generalization of Boolean constraint languages that allow negation of variables. Our classification relies on the classification of \emph{homogeneous algebras}. To state the result, we first need some definitions.

\begin{definition}
    An algebra $(D, \mathcal{F})$ consists of a set $D$ (called the \emph{universe}) and a set of operations $\mathcal{F}$ (called the \emph{basic operations}) which are functions from finite powers of $D$ to $D$. The symbols and arities of the basic operations are called the \emph{signature} of $(D, \mathcal{F})$. A \emph{term operation} is an operation obtained by composition of operations in $\mathcal{F}$. 
\end{definition}

The set of all term operations of a given algebra $(D, \mathcal{F})$ form a clone (recall Definition~\ref{def:clone}). We denote this clone by $\langle \mathcal{F} \rangle$. When $\mathcal{F} = \{s_1, \ldots, s_k\}$ consists of finitely many operations, we may also write $\langle s_1, \ldots, s_k \rangle$ in place of $\langle \mathcal{F} \rangle$.

\begin{definition}
    Let $(D, \mathcal{F})$ and $(D', \mathcal{F}')$ be two algebras with the same signature. A function $f: D \to D'$ is called a homomorphism from  $(D, \mathcal{F})$ to $(D', \mathcal{F}')$, if $f$ commutes with all basic operations. That is, for every $k$-ary function symbol $t$ in the signature, we have $t_{D'}(f(a_1), \ldots, f(a_k)) = f(t_D(a_1, \ldots, a_k))$, where $t_D$ and $t_{D'}$ are the functions $t$ represents in $(D, \mathcal{F})$ and $(D', \mathcal{F}')$ respectively. When $(D, \mathcal{F})=(D', \mathcal{F}')$, we also say that $f$ is an \emph{automorphism}.
\end{definition}

\begin{definition}
    An algebra $(D, \mathcal{F})$ is called a homogeneous algebra if every bijection $D \to D$ is an automorphism.
\end{definition}

The following claim follows directly from the definition.

\begin{claim}
    Let $\Gamma$ be a constraint language which contains all permutation relations. Then $(D, \Pol(\Gamma))$ is a homogeneous algebra.
\end{claim}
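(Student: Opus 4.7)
The plan is a direct unwinding of the definitions. It suffices to show that every bijection $\sigma: D \to D$ commutes with every operation in $\Pol(\Gamma)$. First I would note that, by hypothesis, the permutation relation $R_\sigma := \{(a, \sigma(a)) \mid a \in D\}$ lies in $\Gamma$. Hence every $k$-ary $f \in \Pol(\Gamma)$ preserves $R_\sigma$.

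Next, fix arbitrary $a_1, \ldots, a_k \in D$. For each $i$, the pair $(a_i, \sigma(a_i))$ belongs to $R_\sigma$. Since $f$ preserves $R_\sigma$, applying $f$ componentwise to these $k$ pairs yields
\[
\bigl(f(a_1, \ldots, a_k),\, f(\sigma(a_1), \ldots, \sigma(a_k))\bigr) \in R_\sigma,
\]
which by the definition of $R_\sigma$ is exactly the identity $f(\sigma(a_1), \ldots, \sigma(a_k)) = \sigma(f(a_1, \ldots, a_k))$. Thus $\sigma$ commutes with $f$, so $\sigma$ is an automorphism of $(D, \Pol(\Gamma))$. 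Since $\sigma$ was an arbitrary bijection on $D$, this establishes homogeneity.

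There is no real obstacle here; the content of the claim is essentially that ``$\Gamma$ pins down every bijection as a symmetry,'' and the permutation relations are precisely the graphs of these bijections, so the polymorphism-preservation condition directly translates into the automorphism condition. The only care needed is to write the componentwise application of $f$ correctly on pairs.
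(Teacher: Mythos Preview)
Your proof is correct and is precisely the direct unwinding the paper has in mind: the paper simply states that the claim ``follows directly from the definition'' without spelling out details, and your argument is exactly that definition-chase.
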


The study of homogenous algebras was initiated by Marczewski~\cite{marczewski1964homogeneous}, and a complete classification was first obtained by Marchenkov~\cite{marchenkov1982homogeneous}. Dalmau used Marchenkov's result to give a complete classification for decision $\CSP$ where the constraint language contains all permutation relations~\cite{dalmau2005new}. The following theorem is taken from~\cite{szendrei1986clones} (see also~\cite{dalmau2005new}).

\begin{theorem}[Theorem 5.9 in~\cite{szendrei1986clones}]\label{thm:hom_algebra_classification}
    Let $D$ be a finite domain such that $|D| \geq 5$. Let $(D, \mathcal{F})$ be a homogeneous algebra, then either the dual discriminator operation $d$ is a term operation, or its clone of term operations $\langle \mathcal{F} \rangle$ is equal to one of the followings:
    \begin{itemize}
        \item $E_1^0 = \langle s \rangle$, $E_1^1 = \langle s, r_n \rangle$,
        \item $E_i^0 = \langle l_i \rangle$ for $2 \leq i \leq n - 1$, $E_n^0 = \mathcal{J}$.
        \item $E_i^1 = \langle l_i, r_n \rangle$ for $2 \leq i \leq n - 3$, $E_{n-2}^1 = \langle r_n \rangle$.
    \end{itemize}
    Here $\mathcal{J}$ is the clone of projection operations. $s$ is the \emph{switching} operation, defined by 
    \[
    s(x_1, x_2, x_3) = \left\{\begin{array}{ll}
        x_3 & \text{if } x_1 = x_2, \\
        x_2 & \text{if } x_1 = x_3, \\
        x_1 & \text{otherwise. }
    \end{array}\right.
    \]
    For $2 \leq k \leq n - 1$, $l_k$ is the $k$-ary \emph{near projection} operation defined by
    \[
    l_k(x_1, x_2, \ldots, x_k) = \left\{\begin{array}{ll}
        x_1 & \text{if } |\{x_1, \ldots, x_k\}| < k, \\
        x_k & \text{otherwise. }
    \end{array}\right.
    \]
    And finally, $r_n$ is the $(n-1)$-ary operation defined by
    \[
    r_n(x_1, x_2, \ldots, x_{n-1}) = \left\{\begin{array}{ll}
        x_1 & \text{if } |\{x_1, \ldots, x_{n-1}\}| < n - 1, \\
        x_n & \text{otherwise, where } x_n \in D \backslash \{x_1, \ldots, x_{n-1}\}.
    \end{array}\right.
    \]
    
\end{theorem}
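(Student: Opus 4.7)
The plan is to apply Marchenkov's classification of homogeneous algebras (Theorem~\ref{thm:hom_algebra_classification}) to $(D, \Pol(\Gamma))$. Because $\Gamma$ contains every permutation relation, every bijection $D \to D$ is an automorphism, so the algebra is homogeneous. Homogeneity moreover forces every polymorphism to be conservative: if some $t \in \Pol(\Gamma)$ produced a value $b \notin \{a_1, \ldots, a_k\}$ on inputs $(a_1, \ldots, a_k)$, then composing with a bijection $\sigma$ that fixes each $a_i$ but moves $b$ would violate $\sigma \circ t = t \circ (\sigma, \ldots, \sigma)$. Thus the adjective ``conservative'' can be dropped throughout the analysis.

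First assume $|D| \geq 5$. Marchenkov's theorem leaves two cases. Case~1: the dual discriminator $d$ lies in $\Pol(\Gamma)$. Then $d$ is itself a (conservative) majority operation, and Theorem~\ref{thm:intro_algo} immediately supplies the $|D|$-approximation. Case~2: $\Pol(\Gamma)$ coincides with one of the listed clones $E_i^j$. The key remaining claim is that none of the $E_i^j$ contains any NU operation; granting this, $\Gamma$ has no conservative NU polymorphism, so Theorem~\ref{thm:intro_necessary} yields NP-hardness of constant-factor approximation, proving the ``otherwise'' direction. For the small cases $|D| \in \{2,3,4\}$, I would argue separately: the Boolean case is already dichotomized in Remark~\ref{remark:boolean_nu}, and for $|D| \in \{3,4\}$ the list of clones on $D$ containing all permutations is short enough to treat by direct case analysis (the homogeneous algebras of small universe being explicitly known).

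The main obstacle is the algebraic claim that each $E_i^j$ in Marchenkov's list contains no NU operation. The individual generators already fail to be NU --- for example $s(a,a,b) = b \neq a$ for the switching operation and $l_k(b,a,\ldots,a) = b \neq a$ for the near projection, since the inputs then use only two distinct values --- but closure under the clone operations could in principle manufacture an NU term, so a structural argument is needed. My preferred approach is via the Galois correspondence (Theorem~\ref{thm:galois}): for each clone in Marchenkov's list, exhibit an explicit invariant relation whose preservation is incompatible with any NU operation. Natural candidates are ``almost-diagonal'' relations such as $D^n \setminus \{(a,a,\ldots,a)\}$, which directly encode the defining identities of an NU operation, and pairwise-distinctness relations $T_k = \{(x_1,\ldots,x_k) : x_i \text{ pairwise distinct}\}$, which are highly rigid under tuples differing in a single coordinate. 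After verifying the finitely many $E_i^j$ cases in this way, the full dichotomy follows by combining Marchenkov's classification with Theorems~\ref{thm:intro_algo} and~\ref{thm:intro_necessary}, together with the routine observation that the dual discriminator $d$ is a conservative majority operation.
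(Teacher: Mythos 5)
The statement you were given is the classification theorem for homogeneous algebras (Marchenkov's theorem, as stated in Szendrei's book). The paper does not prove this result; it is imported as a black box and cited directly. Your proposal does not attempt to prove it either. Instead, you have written (a sketch of) a proof of Theorem~\ref{theorem:dichotomy_perm}, the dichotomy for $\MinCostCSP$ with permutation constraints, \emph{using} the classification theorem as a premise. So the proposal and the statement are mismatched: there is no ``paper's proof'' of the classification theorem to compare against, and your argument, whatever its merits, does not establish the stated result. Proving the classification would require a substantial universal-algebraic argument entirely absent from both the paper and your proposal.

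Since your proposal is really aimed at Theorem~\ref{theorem:dichotomy_perm}, let me also flag two genuine issues there. First, your argument that ``homogeneity forces every polymorphism to be conservative'' is wrong as stated. The operation $r_n$ in the classification is homogeneous but not conservative: on pairwise-distinct inputs $(x_1,\ldots,x_{n-1})$ with $|D|=n$ it returns the unique element $x_n \notin \{x_1,\ldots,x_{n-1}\}$, and since $\{x_1,\ldots,x_{n-1},x_n\}=D$ there is no bijection fixing every $x_i$ while moving $x_n$, so your ``compose with $\sigma$'' argument has nothing to bite on. The paper derives conservativeness from a different source: $\MinCostCSP$ lets you assume all unary relations are in $\Gamma$ (Observation~\ref{obs:mincost_unary}), and polymorphisms of a language containing all unary relations are automatically conservative. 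Second, the crucial claim that none of the $E_i^j$ contains an NU operation is only sketched, not proved. The paper sidesteps needing this in full generality: conservativeness immediately rules out every clone containing $r_n$; known $\CSP$ NP-completeness results (Dalmau) handle $\mathcal{J}$ and the $\langle l_i \rangle$ clones (an NU polymorphism would give bounded width and tractability, a contradiction); and only $\langle s \rangle$ requires the explicit NU-free observation, which Dalmau already records. Your uniform ``exhibit an invariant relation'' plan would also need to cover the clones with $r_n$ that the paper never has to touch, and you do not carry out any of these verifications.
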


    The $|D| \geq 5$ assumption is not essential. When $|D| \geq 5$, the above algebras are all distinct. When $2 \leq |D| \leq 4$, some of these algebras become non-distinct, but the only exceptional case not covered by the classification above is the Klein 4-group (the unique 4-element group with exponent 2) with the operation $x + y + z$. However, this is not a conservative operation so we may safely ignore it for our purpose.
    
We are now ready to prove Theorem~\ref{theorem:dichotomy_perm}.
\begin{proof}[Proof of Theorem~\ref{theorem:dichotomy_perm}]
    First observe that if $\Pol(\Gamma)$ contains some majority function $f$, then there must be some $i \in [3]$ such that $f(x_1, x_2, x_3) = x_i$ when $x_1, x_2, x_3$ are pairwise distinct: if not, then there exist distinct $i, j\in[3]$ and two triples $(x_1, x_2, x_3), (y_1, y_2, y_3)$ with pairwise distinct elements within each triple such that $f(x_1, x_2, x_3) = x_i, f(y_1, y_2, y_3) = y_j$. Then, let $\pi: D \to D$ be a permutation such that $y_i = \pi(x_i)$ for every $i \in [3]$, $f$ does not preserve the permutation relation $\{(a, \pi(a)) \mid a \in D\}$, which is a contradiction. By potentially permuting the input coordinates in $f$, we get that the dual discriminator operation $d$ is also contained in $\Pol(\Gamma)$, and therefore $\MinCostCSP(\Gamma)$ is $|D|$-approximable by Theorem~\ref{thm:intro_algo}.

    Now suppose $\Pol(\Gamma)$ does not contain a conservative majority function, then in particular it does not contain the dual discriminator operation. Note that since $\Gamma$ can be assumed to contain all unary operations (see Observation~\ref{obs:mincost_unary}), every polymorphism of $\Gamma$ is conservative. It is easy to see that $r_n$ is not conservative. If $\Pol(\Gamma) = E_n^0 = \mathcal{J}$, then $\CSP(\Gamma)$ is NP-complete. Furthermore, Dalmau~\cite{dalmau2005new} showed that if $\Pol(\Gamma) = E_i^0$ for some $2 \leq i \leq n-1$, then $\CSP(\Gamma)$ is also NP-complete. So by Theorem~\ref{thm:hom_algebra_classification}, the only remaining possibility is $\Pol(\Gamma) = \langle s \rangle$. However, as is observed by Dalmau~\cite{dalmau2005new}, $\langle s \rangle$ does not contain any NU operation, so by Theorem~\ref{thm:necessary}, there is no constant factor approximation for $\MinCostCSP(\Gamma)$, assuming P $\neq$ NP.
\end{proof}

\bibliography{references}
\bibliographystyle{alpha}

\end{document}